\newtheorem{theorem}{Theorem}
\theoremstyle{plain}
\newtheorem{definition}{Definition}
\newtheorem{lemma}{Lemma}
\newtheorem{proposition}{Proposition}
\newtheorem{remark}{Remark}
\numberwithin{equation}{section}
\begin{document}
\title[$p$-adic Continuous CNNs]{$p$-adic Cellular Neural Networks}
\author[Zambrano-Luna]{B. A. Zambrano-Luna}
\address{Centro de Investigaci\'{o}n y de Estudios Avanzados del Instituto
Polit\'{e}cnico Nacional\\
Departamento de Matem\'{a}ticas, Av. Instituto Polit\'{e}cnico Nacional \#
2508, Col. San Pedro Zacatenco, CDMX. CP 07360 \\
M\'{e}xico.}
\email{bazambrano@math.cinvestav.mx}
\author[Z\'{u}\~{n}iga-Galindo]{W. A. Z\'{u}\~{n}iga-Galindo}
\address{University of Texas Rio Grande Valley\\
School of Mathematical \& Statistical Sciences\\
One West University Blvd\\
Brownsville, TX 78520, United States and Centro de Investigaci\'{o}n y de
Estudios Avanzados del Instituto Polit\'{e}cnico Nacional\\
Departamento de Matem\'{a}ticas, Unidad Quer\'{e}taro\\
Libramiento Norponiente \#2000, Fracc. Real de Juriquilla. Santiago de
Quer\'{e}taro, Qro. 76230\\
M\'{e}xico.}
\email{wilson.zunigagalindo@utrgv.edu, wazuniga@math.cinvestav.edu.mx}
\thanks{The second author was partially supported by Conacyt Grant No. 250845 (Mexico)
and by the Debnath Endowed Professorship (UTRGV, USA)}
\keywords{Cellular neural networks, Hierarchies, Deep learning, $p$-Adic numbers.}

\begin{abstract}
In this article we introduce the $p$-adic cellular neural networks which are
mathematical generalizations of the classical cellular neural networks (CNNs)
introduced by Chua and Yang. The new networks have infinitely many cells which
are organized hierarchically in rooted trees, and also they have infinitely
many hidden layers. Intuitively, the $p$-adic CNNs occur as limits of large
hierarchical discrete CNNs. More \ precisely, the new networks can be very
well approximated by hierarchical discrete CNNs. Mathematically speaking, each
of the new networks is modeled by one integro-differential equation depending
on several $p$-adic spatial variables and the time. We study the Cauchy
problem associated to these integro-differential equations and also provide
numerical methods for solving them.

\end{abstract}
\maketitle

\section{Introduction}

In the late 80s Chua and Yang introduced a new natural computing paradigm
called the cellular neural networks (or cellular nonlinear networks) CNN which
includes the cellular automata as a particular case \cite{Chua-Yang},
\cite{Chua-Yang2}, \cite{Chua}. From the beginning the CNN paradigm was
intended for applications as an integrated circuit. This paradigm has been
extremely successful in various applications in vision, robotics and remote
sensing, see e.g. \cite{Chua-Tamas}, \cite{Slavova} and the references therein.

In this article we present a mathematical generalization of the CNNs of Chua
and Yang called $p$\textit{-adic cellular neural networks}. The $p$-adic
continuous CNNs offer a theoretical framework to study the emergent patterns
of hierarchical discrete CNNs having arbitrary many hidden layers.

Nowadays, it is widely accepted that the analysis on ultrametric spaces is the
natural tool for formulating models where the hierarchy plays a central role.
An ultrametric space $(M,d)$ is a metric space $M$ with a distance satisfying
$d(A,B)\leq\max\left\{  d\left(  A,C\right)  ,d\left(  B,C\right)  \right\}  $
for any three points $A$, $B$, $C$ in $M$. Ultrametricity in physics means the
emergence of ultrametric spaces in physical models. Ultrametricity was
discovered in the 80s by Parisi and others in the theory of spin glasses and
by Frauenfelder and others in physics of proteins. In both cases, the space of
states of a complex system has a hierarchical structure which play a central
role in the physical behavior of the system, see e.g. \cite{Dra-Kh-K-V},
\cite{Fraunfelder et al}, \cite{Hua et al}, \cite{KKZuniga}, \cite{Kozyrev
SV}, \cite{R-T-V}, \cite{V-V-Z}, \cite{Zuniga-JMAA-2020}%
-\cite{Zuniga-Nonlinearity}, and the references therein.

On the other hand, Khrennikov and his collaborators have studied neural
network models where $p$-state neurons take their values in $p$-adic numbers,
see \cite{Albeverio-Khrennikov-Tirozzi}, \cite{Krennikov-tirozzi}. These
models are completely different to the ones considered here. In addition,
Khrennikov has developed non-Archimedean models of brain activity and mental
processes, see e.g. \cite{Khrenikov2} and the references therein.

Among the ultrametric spaces, the field of $p$-adic numbers $\mathbb{Q}_{p}$
plays a central role. A $p$-adic number is a series of the form%
\begin{equation}
x=x_{-k}p^{-k}+x_{-k+1}p^{-k+1}+\ldots+x_{0}+x_{1}p+\ldots,\text{ with }%
x_{-k}\neq0\text{,} \label{p-adic numbers}%
\end{equation}
where $p$ is a prime number, the $x_{j}$s \ are $p$-adic digits, i.e. numbers
in the set $\left\{  0,1,\ldots,p-1\right\}  $. The set of all the possible
series of form (\ref{p-adic numbers}) constitutes the field of $p$-adic
numbers $\mathbb{Q}_{p}$. There are natural field operations, sum and
multiplication, on series of form (\ref{p-adic numbers}), see e.g.
\cite{Koblitz}. There is also a natural norm in $\mathbb{Q}_{p}$ defined as
$\left\vert x\right\vert _{p}=p^{k}$, for a nonzero $p$-adic number $x$ of the
form (\ref{p-adic numbers}). The field of $p$-adic numbers with the distance
induced by $\left\vert \cdot\right\vert _{p}$ is a complete ultrametric space.
The ultrametric property refers to the fact that $\left\vert x-y\right\vert
_{p}\leq\max\left\{  \left\vert x-z\right\vert _{p},\left\vert z-y\right\vert
_{p}\right\}  $ for any $x$, $y$, $z$ in $\mathbb{Q}_{p}$.

We denote by $G_{M}$ the set all the $p$-adic numbers of the form
$\boldsymbol{i}=\boldsymbol{i}_{-M}p^{-M}+\boldsymbol{i}_{-M+1}p^{-M+1}%
+\cdots+\boldsymbol{i}_{0}+\cdots+\boldsymbol{i}_{M-1}p^{M-1}$, where the
$\boldsymbol{i}_{j}$s belong to $\left\{  0,1,\ldots,p-1\right\}  $. Then
($G_{M},\left\vert \cdot\right\vert _{p}$) is a finite ultrametric space.
Geometrically speaking, $G_{M}$ is a regular rooted tree with $2M$ layers,
here regular means that exactly $p$ edges emanate from each vertex. A
($1$-dimensional) $p$-adic discrete CNN is a dynamical system of the form%
\begin{equation}
\frac{\partial}{\partial t}X(\boldsymbol{i},t)=-X(\boldsymbol{i},t)+%
{\displaystyle\sum\limits_{\boldsymbol{j}\in G_{M}}}
\mathbb{A}(\boldsymbol{i},\boldsymbol{j})Y(\boldsymbol{j},t)+%
{\displaystyle\sum\limits_{\boldsymbol{j}\in G_{M}}}
\mathbb{B}(\boldsymbol{i},\boldsymbol{j})U(\boldsymbol{j})+Z(\boldsymbol{i}),
\label{Network_1}%
\end{equation}
$\boldsymbol{i}\in G_{M}$, where $Y(\boldsymbol{j},t)=f\left(
X(\boldsymbol{j},t)\right)  $, with $f(x)=\frac{1}{2}\left(
|x+1|-|x-1|\right)  $. Here $X({\boldsymbol{i}},t),Y({\boldsymbol{i}}%
,t)\in\mathbb{R}$ are the state, respectively the output, of cell
$\boldsymbol{i}$ at the time $t$. The function $U(\boldsymbol{i})\in
\mathbb{R}$ is the input of the cell $\boldsymbol{i}$, $Z(\boldsymbol{i}%
)\in\mathbb{R}$ is the threshold of cell $\boldsymbol{i}$, and the matrices
$\mathbb{A},\mathbb{B}:G_{M}^{N}\times G_{M}^{N}\rightarrow\mathbb{R}$ are the
feedback operator and feedforward operator, respectively. Notice that matrices
$\mathbb{A}$, $\mathbb{B}$ are functions on the Cartesian product of two
rooted trees. The Chua-Yang CNNs are a particular case of (\ref{Network_1}).
In this article we study $N$-dimensional, discrete hierarchical CNNs having
arbitrary many layers. For the seek of simplicity, we focus on space-invariant
networks, i.e. in the case in which
\begin{equation}
\mathbb{A}(\boldsymbol{i},\boldsymbol{j})=\mathbb{A}(\left\vert \boldsymbol{i}%
-\boldsymbol{j}\right\vert _{p})\text{, \ \ }\mathbb{B}(\boldsymbol{i}%
,\boldsymbol{j})=\mathbb{B}(\left\vert \boldsymbol{i}-\boldsymbol{j}%
\right\vert _{p}). \label{Condition}%
\end{equation}
In this article we initiate the study of the emergent patterns produced by the
$p$-adic discrete CNNs. Since we are interested in arbitrary large trees, the
description of these networks requires literally of millions of
integro-differential equations, consequently a numerical approach seems not
suitable, instead of this, we construct a $p$-adic continuous model that can
be very well approximated by (\ref{Network_1}).

The study of the qualitative behavior of differential equations on large
graphs is a relevant matter due its applications. In \cite{Nakao-Mikhailov}
Nakao and Mikhailov proposed using continuous models to study
reaction-diffusion systems on networks and the corresponding Turing patterns.
In \cite{Zuniga-JMAA-2020} the second author showed that $p$-adic analysis is
the natural tool to carry out this program. Models constructed using energy
landscapes naturally drive to a large systems of differential equations (the
master equation of the system), see e.g. \cite{Becker et al}, \cite{KKZuniga},
\cite{Kozyrev SV}. $p$-Adic continuous versions of some of these systems were
constructed by Avetisov, Kozyrev and others in connection with models of
protein folding, see e.g. \cite{KKZuniga}, \cite{Kozyrev SV}\ for a general
discussion. Another relevant system is the Eigen-Schuster model in biology. In
\cite{Zuniga-JPA-208} a $p$-adic continuous version of this model was
introduced, this $p$-adic version allows to explain the Eigen paradox.
Recently Hua and Hovestadt pointed out that the $p$-adic number system offers
a natural representation of hierarchical organization of complex networks
\cite{Hua et al}.

Intuitively, in the space-invariant case, the continuous model corresponding
to (\ref{Network_1}) is obtained by taking the limit as $M$ tends to infinity:%
\begin{align}
\frac{\partial X(x,t)}{\partial t}  &  =-X(x,t)+%
{\displaystyle\int\limits_{\mathbb{Q}_{p}}}
A(\left\vert x-y\right\vert _{p})Y(y,t)dy+\label{Network_2A}\\
&
{\displaystyle\int\limits_{\mathbb{Q}_{p}}}
B(\left\vert x-y\right\vert _{p})U(y)dy+Z(x),\nonumber
\end{align}
with $Y(x,t)=f(X(x,t))$. For the sake of simplicity, in the introduction we
discuss our results in dimension one. We study the case where $A(\left\vert
x\right\vert _{p})$, $B(\left\vert x\right\vert _{p})$ are integrable, and
$U$, $Z$ are continuous functions vanishing at infinity. Under these
hypotheses the initial value problem attached to (\ref{Network_2A}), with
\ initial datum $X_{0}$ (a continuous function vanishing at infinity) \ has a
unique solution $X(x,t)$ which is a continuos function vanishing at infinity
in $x$ for any $t\geq0$, satisfying $\left\vert X(x,t)\right\vert \leq
X_{\max}$, where the constant $X_{\max}$ is completely determined by $A$, $B$,
$U$, $Z$ and $f$, see Theorem \ref{Theorem1}. \ An analog result is valid for
discrete CNNs, see Theorem \ref{Theorem1A}.

The solution $X(x,t)$ can be very well approximated in the $\left\Vert
\cdot\right\Vert _{\infty}$-norm as
\[
\sum_{\boldsymbol{j}\in G_{M}}X(\boldsymbol{j},t)\Omega\left(  p^{M}\left\vert
x-\boldsymbol{j}\right\vert _{p}\right)  .
\]
By using standard techniques of approximation of semilinear evolution
equations, we show that the solution of the Cauchy problem attached to
(\ref{Network_1}), under condition (\ref{Condition}), is arbitrarily closed in
the $\left\Vert \cdot\right\Vert _{\infty}$-norm to the solution of the Cauchy
problem attached to (\ref{Network_2A}), if $M$ is sufficiently large, see
Theorem \ref{Theorem2}. This implies that the $p$-adic continuous CNNs have
infinitely many hidden layers, and that they are continuous versions of
suitable $p$-adic discrete CNNs. It is relevant to mention that equation
(\ref{Network_2A}) makes sense over the real numbers, i.e. by replacing
$\mathbb{Q}_{p}$ by $\mathbb{R}$ in (\ref{Network_2A}) we get an equation
modeling a continuous network. But, there are no natural discretizations of
the real version of (\ref{Network_2A}) that can be interpreted as hierarchical
CNNs, because the real numbers are a completely ordered field, and thus the
natural hierarchy is only the linear one.

In practical applications it is natural to assume that radial functions $A$,
$B$ have compact support or that they are test functions. Under this
hypothesis we study the patterns produced by $p$-adic continuous CNNs when
$U$, $Z$ and $X_{0}$ are test functions. The hypothesis that $X_{0}$ is a test
functions means that at time $t=0$ only certain clusters of cells are excited.
Each cluster corresponds to a $p$-adic ball \ centered at some cell with
radius, say $p^{-L}$. The intensity of the excitation is the same for all
cells in a given cluster. The fact that $U$, $Z$ are test functions can be
interpreted in an analogous way. Let $B_{M_{0}}$ denote the ball centered at
the origin with radius $p^{M_{0}}$, which the smallest ball containing the
supports of $A$, $B$, $U$, $Z$, $X_{0}$. Then the solution $X(x,t)$ of the
initial value problem attached to (\ref{Network_2A}) is a test function
supported in $B_{M_{0}}$ of the form $\sum_{\boldsymbol{j}\in G_{M_{0}}%
}X(\boldsymbol{j,t})\Omega\left(  p^{M_{0}}\left\vert x-\boldsymbol{j}%
\right\vert _{p}\right)  $ for $t\geq0$, with $M_{0}\geq L$, see Theorem
\ref{Theorem0}. This means that a $p$-adic continuous CNN produces a pattern
which is organized in a finite number of \ disjoint clusters, each of them
supporting a time varying pattern. We also show the existence of two steady
state patterns $X_{+}(x)$, $X_{-}(x)$, which are test functions, such that
$\ X_{-}(x)\leq\lim_{t\rightarrow\infty}X(x,t)\leq X_{+}(x)$, see Theorem
\ref{Theorem1}. We conjecture that for generic $p$-adic continuous CNNs,
$\lim_{t\rightarrow\infty}X(x,t)$ is a test function, which means that the
steady state pattern is organized in a finite number of \ disjoint clusters,
each of them supporting a constant pattern. This is exactly the multistability
property reported in \cite{Nakao-Mikhailov}, see also \cite{Zuniga-JMAA-2020},
for reaction-diffusion networks.

We have conducted a large number of numerical simulations. Such simulations
require solving integro-differential equations on a tree. The numerical study
of $p$-adic continuous CNNs offers two big challenges. The first, the need of
dealing with matrices having millions of entries, the second, the
visualization of functions depending on $p$-adic variables. Due to the first
problem, we use small trees with $16$ to $64$ leaves. The $p$-adic numbers
have a fractal nature, then, it is necessary to visualize real-valuated functions defined on the Cartesian product of a fractal times the real line. 
To deal with this problem we us systematically heat maps which allow us to get a 
glimpse of the hierarchical nature of the CNNs. Our numerical simulations show
that the solutions of continuous CNNs exhibit a very complex behavior,
including self-similarity and multistability, depending on the interaction of
all the parameters defining the network \ and initial datum.

\section{$p$\textbf{-}Adic Analysis: Essential Ideas}

\subsection{The field of $p$-adic numbers}

Along this article $p$ will denote a prime number. The field of $p-$adic
numbers $\mathbb{Q}_{p}$ is defined as the completion of the field of rational
numbers $\mathbb{Q}$ with respect to the $p-$adic norm $|\cdot|_{p}$, which is
defined as%
\[
\left\vert x\right\vert _{p}=\left\{
\begin{array}
[c]{lll}%
0 & \text{if} & x=0\\
p^{-\gamma} & \text{if} & x=p^{\gamma}\frac{a}{b}\text{,}%
\end{array}
\right.
\]
where $a$ and $b$ are integers coprime with $p$. The integer $\gamma:=ord(x)$,
with $ord(0):=+\infty$, is called the\textit{\ }$p-$\textit{adic order of}
$x$. We extend the $p-$adic norm to $\mathbb{Q}_{p}^{N}$ by taking%
\[
||x||_{p}:=\max_{1\leq i\leq N}|x_{i}|_{p},\qquad\text{for }x=(x_{1}%
,\dots,x_{N})\in\mathbb{Q}_{p}^{N}.
\]
We define $ord(x)=\min_{1\leq i\leq N}\{ord(x_{i})\}$, then $||x||_{p}%
=p^{-ord(x)}$.\ The metric space $\left(  \mathbb{Q}_{p}^{N},||\cdot
||_{p}\right)  $ is a complete ultrametric space. As a topological space
$\mathbb{Q}_{p}$\ is homeomorphic to a Cantor-like subset of the real line,
see e.g. \cite{Alberio et al}, \cite{V-V-Z}.

Any $p-$adic number $x\neq0$ has a unique expansion of the form
\[
x=p^{ord(x)}\sum_{j=0}^{\infty}x_{j}p^{j},
\]
where $x_{j}\in\{0,1,\dots,p-1\}$ and $x_{0}\neq0$.

\subsection{Topology of $\mathbb{Q}_{p}^{N}$}

For $r\in\mathbb{Z}$, denote by $B_{r}^{N}(a)=\{x\in\mathbb{Q}_{p}%
^{N};||x-a||_{p}\leq p^{r}\}$ \textit{the ball of radius }$p^{r}$ \textit{with
center at} $a=(a_{1},\dots,a_{N})\in\mathbb{Q}_{p}^{N}$, and take $B_{r}%
^{N}(0):=B_{r}^{N}$. Note that $B_{r}^{N}(a)=B_{r}(a_{1})\times\cdots\times
B_{r}(a_{N})$, where $B_{r}(a_{i}):=B_{r}^{1}(a_{i})=\{x\in\mathbb{Q}%
_{p};|x_{i}-a_{i}|_{p}\leq p^{r}\}$ is the one-dimensional ball of radius
$p^{r}$ with center at $a_{i}\in\mathbb{Q}_{p}$. The ball $B_{0}^{N}$ equals
the product of $N$ copies of $B_{0}=\mathbb{Z}_{p}$, \textit{the ring of }%
$p-$\textit{adic integers}. We also denote by $S_{r}^{N}(a)=\{x\in
\mathbb{Q}_{p}^{N};||x-a||_{p}=p^{r}\}$ \textit{the sphere of radius }$p^{r}$
\textit{with center at} $a=(a_{1},\dots,a_{N})\in\mathbb{Q}_{p}^{N}$, and take
$S_{r}^{N}(0):=S_{r}^{N}$. We notice that $S_{0}^{1}=\mathbb{Z}_{p}^{\times}$
(the group of units of $\mathbb{Z}_{p}$), but $\left(  \mathbb{Z}_{p}^{\times
}\right)  ^{N}\subsetneq S_{0}^{N}$. The balls and spheres are both open and
closed subsets in $\mathbb{Q}_{p}^{N}$. In addition, two balls in
$\mathbb{Q}_{p}^{N}$ are either disjoint or one is contained in the other.

As a topological space $\left(  \mathbb{Q}_{p}^{N},||\cdot||_{p}\right)  $ is
totally disconnected, i.e. the only connected \ subsets of $\mathbb{Q}_{p}%
^{N}$ are the empty set and the points. A subset of $\mathbb{Q}_{p}^{N}$ is
compact if and only if it is closed and bounded in $\mathbb{Q}_{p}^{N}$, see
e.g. \cite[Section 1.3]{V-V-Z}, or \cite[Section 1.8]{Alberio et al}. The
balls and spheres are compact subsets. Thus $\left(  \mathbb{Q}_{p}%
^{N},||\cdot||_{p}\right)  $ is a locally compact topological space.

We will use $\Omega\left(  p^{-r}||x-a||_{p}\right)  $ to denote the
characteristic function of the ball $B_{r}^{N}(a)$. For more general sets, we
will use the notation $1_{A}$ for the characteristic function of a set $A$.

\subsection{The Bruhat-Schwartz space}

A real-valued function $\varphi$ defined on $\mathbb{Q}_{p}^{N}$ is
\textit{called locally constant} if for any $x\in\mathbb{Q}_{p}^{N}$ there
exists an integer $l(x)\in\mathbb{Z}$ such that%
\begin{equation}
\varphi(x+x^{\prime})=\varphi(x)\text{ for }x^{\prime}\in B_{l(x)}^{N}.
\label{local_constancy_0}%
\end{equation}

A function $\varphi:\mathbb{Q}_{p}^{N}\rightarrow\mathbb{R}$ is called a
\textit{Bruhat-Schwartz function (or a test function)} if it is locally
constant with compact support. Any test function can be represented as a
linear combination, with real coefficients, of characteristic functions of
balls. The $\mathbb{R}$-vector space of Bruhat-Schwartz functions is denoted
by $\mathcal{D}(\mathbb{Q}_{p}^{N})$. For $\varphi\in\mathcal{D}%
(\mathbb{Q}_{p}^{N})$, the largest number $l=l(\varphi)$ satisfying
(\ref{local_constancy_0}) is called \textit{the exponent of local constancy
(or the parameter of constancy) of} $\varphi$.

If $U$ is an open subset of $\mathbb{Q}_{p}^{N}$, $\mathcal{D}(U)$ denotes the
space of test functions with supports contained in $U$, then $\mathcal{D}(U)$
is dense in
\[
L^{\rho}\left(  U\right)  =\left\{  \varphi:U\rightarrow\mathbb{R};\left(
{\displaystyle\int\limits_{\mathbb{Q}_{p}^{N}}}
\left\vert \varphi\left(  x\right)  \right\vert ^{\rho}d^{N}x\right)
^{\frac{1}{\rho}}<\infty\right\}  ,
\]
where $d^{N}x$ is the Haar measure on $\mathbb{Q}_{p}^{N}$ normalized by the
condition $vol(B_{0}^{N})\allowbreak=1$, for $1\leq\rho<\infty$, see e.g.
\cite[Section 4.3]{Alberio et al}. In the case $U=\mathbb{Q}_{p}^{N}$, we will
use the notation $L^{\rho}$ instead of $L^{\rho}\left(  \mathbb{Q}_{p}%
^{N}\right)  $. For an in depth discussion about $p$-adic analysis the reader
may consult \ \cite{Alberio et al}, \cite{Koch}, \cite{Taibleson},
\cite{V-V-Z}.

\subsection{The Spaces $\mathcal{X}_{\infty}$, $\mathcal{X}_{M}$}

We define $\mathcal{X}_{\infty}(%
\mathbb{Q}
_{p}^{N}):=\mathcal{X}_{\infty}=\overline{\left(  \mathcal{D}(%
\mathbb{Q}
_{p}^{N}),\left\Vert \cdot\right\Vert _{\infty}\right)  }$, where $\left\Vert
\phi\right\Vert _{\infty}=\sup_{x\in%
\mathbb{Q}
_{p}^{N}}|\phi(x)|$ and the bar means the completion with respect the metric
induced by $\left\Vert \cdot\right\Vert _{\infty}$. Notice that all the
functions in $\mathcal{X}_{\infty}$ are continuous and \ that
\[
\mathcal{X}_{\infty}\subset\mathcal{C}_{0}:=\left(  \left\{  f:%
\mathbb{Q}
_{p}^{N}\rightarrow\mathbb{R};f\text{ continuous with }\lim_{\left\Vert
x\right\Vert _{p}\rightarrow\infty}f\left(  x\right)  =0\right\}  ,\left\Vert
\cdot\right\Vert _{\infty}\right)  .
\]
On the other hand, since $\mathcal{D}(%
\mathbb{Q}
_{p}^{N})$ is dense in $\mathcal{C}_{0}$, cf. \cite[Chap. II, Proposition
1.3]{Taibleson}, we conclude that $\mathcal{X}_{\infty}=\mathcal{C}_{0}$.

For $M\geq1$, we set $G_{M}^{N}:=B_{M}^{N}/B_{-M}^{N}$, which is a finite
additive group with $\#G_{M}^{N}:=p^{2NM}$ elements. Any element
$\boldsymbol{i}=(\boldsymbol{i}_{1},\ldots,\boldsymbol{i}_{N})$ of $G_{M}^{N}$
can be represented as
\begin{equation}
\boldsymbol{i}_{j}=\boldsymbol{i}_{-M}^{j}p^{-M}+\boldsymbol{i}_{-M+1}%
^{j}p^{-M+1}+\ldots+\boldsymbol{i}_{0}^{j}+\boldsymbol{i}_{1}^{j}%
p+\ldots+\boldsymbol{i}_{M-1}^{j}p^{M-1}\text{,} \label{representatives}%
\end{equation}
for $j=1,\ldots,N$, with $\boldsymbol{i}_{k}^{j}\in\left\{  0,1,\ldots
,p-1\right\}  $. From now on, we fix a set of representatives in $%
\mathbb{Q}
_{p}^{N}$ for $G_{M}^{N}$ of the form (\ref{representatives}). Notice that
\[
\boldsymbol{i}_{j}=p^{-M}\left(  \boldsymbol{a}_{0}^{j}+\boldsymbol{a}_{1}%
^{j}p+\cdots+\boldsymbol{a}_{2M-1}^{j}p^{2M-1}\right)  ,
\]
where $\boldsymbol{a}_{0}^{j}+\boldsymbol{a}_{1}^{j}p+\cdots+\boldsymbol{a}%
_{2M-1}^{j}p^{2M-1}\in\mathbb{Z}_{p}/p^{2M}\mathbb{Z}_{p}=B_{0}/B_{-2M}$.

The functions%
\begin{equation}
\left\{  \Omega\left(  p^{M}\left\Vert x-\boldsymbol{i}\right\Vert
_{p}\right)  \right\}  _{\boldsymbol{i}\in G_{M}^{N}} \label{Basis1}%
\end{equation}
are orthogonal with respect to the standard $L^{2}$ inner product, since%
\[
\Omega\left(  p^{M}\left\Vert x-\boldsymbol{i}\right\Vert _{p}\right)
\Omega\left(  p^{M}\left\Vert x-\boldsymbol{j}\right\Vert _{p}\right)
=0\text{, for }\boldsymbol{i},\boldsymbol{j}\in G_{M}^{N}\text{,
}\boldsymbol{i}\neq\boldsymbol{j}\text{ and for any }x\in B_{M}^{N}.
\]
We denote by $\mathcal{D}^{M}\left(
\mathbb{Q}
_{p}^{N}\right)  :=\mathcal{D}^{M}$ the $\mathbb{R}$-vector space spanned by
(\ref{Basis1}). We set
\[
\text{ }\mathcal{X}_{M}:=\left(  \mathcal{D}^{M},\left\Vert \cdot\right\Vert
_{\infty}\right)  \text{ for }M\geq1\text{.}
\]
Notice that $\mathcal{X}_{M}$ is isomorphic as a Banach space to $\left(
\mathbb{R}^{\#G_{M}^{N}},\left\Vert \cdot\right\Vert _{\mathbb{R}}\right)  $,
where
\[
\left\Vert \left(  t_{1},\ldots,t_{\#G_{M}^{N}}\right)  \right\Vert
_{\mathbb{R}}=\max_{1\leq j\leq\#G_{M}^{N}}\left\vert t_{j}\right\vert .
\]

\subsection{Tree-like structures and $p$-adic numbers}

Take $N=1$ and fix $M\in\mathbb{N\smallsetminus}\left\{  0\right\}  $, then
$G_{M}^{1}:=G_{M}=p^{-M}\mathbb{Z}_{p}/p^{M}\mathbb{Z}_{p}$ is an\ additive
group consisting of elements of the form%
\begin{equation}
\boldsymbol{i}=\boldsymbol{i}_{-M}p^{-M}+\boldsymbol{i}_{-M+1}p^{-M+1}%
+\cdots+\boldsymbol{i}_{0}+\cdots+\boldsymbol{i}_{M-1}p^{M-1}, \label{seq_I}%
\end{equation}
where the $\boldsymbol{i}_{j}$s belong to $\left\{  0,1,\ldots,p-1\right\}  $.
Furthermore, the restriction of $\left\vert \cdot\right\vert _{p}$ to $G_{M}$
induces an absolute value such that $\left\vert G_{M}\right\vert _{p}=\left\{
0,p^{-\left(  M-1\right)  },\cdots,p^{-1},1,\cdots,p^{M}\right\}  $. We endow
$G_{M}$ with the metric induced by $\left\vert \cdot\right\vert _{p}$, and
thus $G_{M}$ becomes a finite ultrametric space. In addition, $G_{M}$ can be
identified with the set of branches (vertices at the top level) of a rooted
tree with $2M+1$ levels and $p^{2M}$ branches. Any element $\boldsymbol{i}\in
G_{M}$ can be uniquely written as $p^{-M}\widetilde{\boldsymbol{i}}$, where%
\[
\widetilde{\boldsymbol{i}}=\widetilde{\boldsymbol{i}}_{0}+\widetilde
{\boldsymbol{i}}_{1}p+\cdots+\widetilde{\boldsymbol{i}}_{2M-1}p^{2M-1}%
\in\mathbb{Z}_{p}/p^{2M}\mathbb{Z}_{p},
\]
with the $\widetilde{\boldsymbol{i}}_{j}$s belonging to $\left\{
0,1,\ldots,p-1\right\}  $. The elements of the $\mathbb{Z}_{p}/p^{2M}%
\mathbb{Z}_{p}$ are in bijection with the vertices at the top level of the
above mentioned rooted tree. By definition the root of the tree is the only
vertex at level $0$. There are exactly $p$ vertices at level $1$, which
correspond with the possible values of the digit $\widetilde{\boldsymbol{i}%
}_{0}$ in the $p$-adic expansion of $\widetilde{\boldsymbol{i}}$. Each of
these vertices is connected to the root by a non-directed edge. At level
$\ell$, with $1\leq\ell\leq2M$, there are exactly $p^{\ell}$ vertices, \ each
vertex corresponds to a truncated expansion of $\widetilde{\boldsymbol{i}}$ of
the form $\widetilde{\boldsymbol{i}}_{0}+\cdots+\widetilde{\boldsymbol{i}%
}_{\ell-1}p^{\ell-1}$. The vertex corresponding to $\widetilde{\boldsymbol{i}%
}_{0}+\cdots+\widetilde{\boldsymbol{i}}_{\ell-1}p^{\ell-1}$ is connected to a
vertex $\widetilde{\boldsymbol{i}}_{0}^{\prime}+\cdots+\widetilde
{\boldsymbol{i}}_{\ell-2}^{\prime}p^{\ell-2}$ at the level $\ell-1$ if and
only if $\left(  \widetilde{\boldsymbol{i}}_{0}+\cdots+\widetilde
{\boldsymbol{i}}_{\ell-1}p^{\ell-1}\right)  -\left(  \widetilde{\boldsymbol{i}%
}_{0}^{\prime}+\cdots+\widetilde{\boldsymbol{i}}_{\ell-2}^{\prime}p^{\ell
-2}\right)  $ is divisible by $p^{\ell-1}$.

In conclusion, $\mathbb{Z}_{p}/p^{2M}\mathbb{Z}_{p}$ is a rooted tree, and
$\mathbb{Z}_{p}$ is an infinite rooted tree. Now, the $1$-dimensional unit
sphere $\mathbb{Z}_{p}^{\times}$ is the disjoint union of sets of the form
$j+p\mathbb{Z}_{p}$, for $j\in\left\{  1,\ldots,p-1\right\}  $. Each set of
the form $j+p\mathbb{Z}_{p}$ is an infinite rooted tree. Then, $\mathbb{Z}%
_{p}^{\times}$ is a forest formed by the disjoint union of $p-1$ infinite
rooted trees. On the other hand, $\mathbb{Q}_{p}^{\times}=\mathbb{Q}%
_{p}\smallsetminus\left\{  0\right\}  $ is a countable disjoint union of
scaled versions of the forest $\mathbb{Z}_{p}^{\times}$, more precisely,
$\mathbb{Q}_{p}^{\times}=%
{\textstyle\bigsqcup\nolimits_{k=-\infty}^{k=+\infty}}
p^{k}\mathbb{Z}_{p}^{\times}$. The field of $p$-adic numbers has a fractal
structure, see e.g. \cite{Alberio et al}, \cite{V-V-Z}.%

\begin{figure}
[h]
\begin{center}
\includegraphics[
height=1.8585in,
width=4.7816in
]%
{./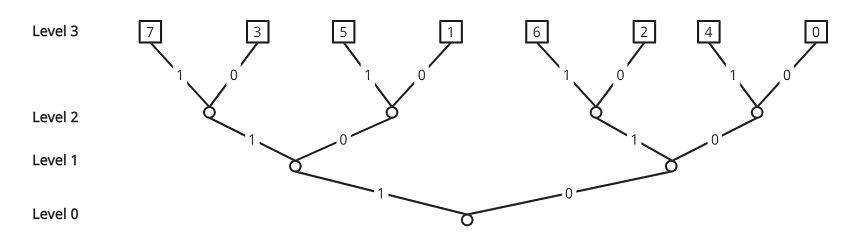}%
\caption{The rooted tree associated with the group $\mathbb{Z}_{2}%
/2^{3}\mathbb{Z}_{2}$. We identify the elements of $\mathbb{Z}_{2}%
/2^{3}\mathbb{Z}_{2}$ with the set of integers $\left\{  0,\ldots,7\right\}  $
\ with binary representation $\boldsymbol{i}=\boldsymbol{i}_{0}+\boldsymbol{i}%
_{1}2+\boldsymbol{i}_{3}2^{2},\;\;\;\boldsymbol{i}_{0},\boldsymbol{i}%
_{1},\boldsymbol{i}_{2}\in\{0,1\}$. Two leaves $\boldsymbol{i},\boldsymbol{j}%
\in\mathbb{Z}_{2}/2^{3}\mathbb{Z}_{2}$ have a common ancestor at level $2$ if
and only if $\boldsymbol{i}\equiv\boldsymbol{j}$ $\operatorname{mod}$ $2^{2}$,
i.e., $\boldsymbol{i}=\boldsymbol{a}_{0}+\boldsymbol{a}_{1}2+\boldsymbol{i}%
_{2}2^{2}$ and $\boldsymbol{j}=\boldsymbol{a}_{0}+\boldsymbol{a}%
_{1}2+\boldsymbol{j}_{2}2^{2}$ with $\boldsymbol{i}_{2},\boldsymbol{j}_{2}%
\in\{0,1\}$. Now, for $\boldsymbol{i},\boldsymbol{j\in}\mathbb{Z}_{2}%
/2^{3}\mathbb{Z}_{2}$ have a common ancestor at level $1$ if and only if
$\boldsymbol{i}\equiv\boldsymbol{j}$ $\operatorname{mod}$ $2$. Notice that
that the $p$-adic distance satisfies $-\log_{2}\left\vert \boldsymbol{i}%
-\boldsymbol{j}\right\vert _{2}=-\left(  \text{level of the first common
ancestor of }\boldsymbol{i}\text{, }\boldsymbol{j}\right)  $.}%
\end{center}
\end{figure}

\section{$p$-Adic CNNs: basic definitions}

We say that a \ function $f:\mathbb{R}\rightarrow\mathbb{R}$ is called a
Lipschitz function if there exists a real constant $L(f)>0$ such that, for all
$x,y\in\mathbb{R}$, $|f(x)-f(y)|\leq L(f)|x-y|$. A relevant example is
\[
f(x)=\frac{1}{2}\left(  |x+1|-|x-1|\right)  .
\]

\subsection{$p$-Adic discrete CNNs}

By considering $G_{M}^{N}$ as a subset of $\mathbb{Q}_{p}^{N}$, $\left(
G_{M}^{N},\left\Vert \cdot\right\Vert _{p}\right)  $ becomes a finite
ultrametric space.%

\begin{figure}
[h]
\begin{center}
\includegraphics[
height=2.2693in,
width=2.5642in
]%
{./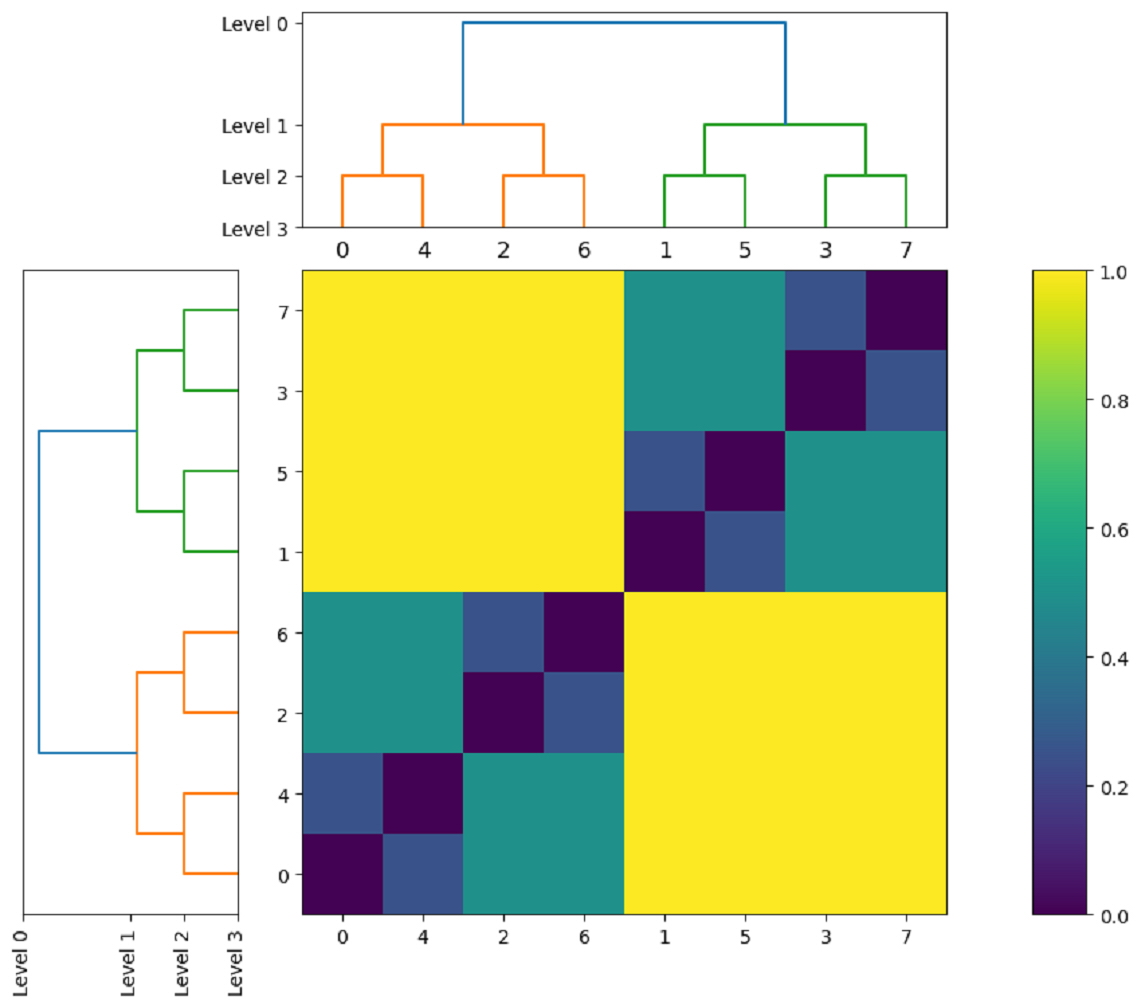}%
\caption{The heat map associated with the $p$-adic distance function on
$\mathbb{Z}_{2}/2^{3}\mathbb{Z}_{2}$.}%
\end{center}
\end{figure}

\begin{definition}
An element $\boldsymbol{i}$ of $G_{M}^{N}$ is called a cell. A $p$-adic
discrete CNN is a dynamical system CNN$_{d}(\mathbb{A},\mathbb{B},U,Z)$ on
$G_{M}^{N}$. The state $X_{\boldsymbol{i}}(t)\in\mathbb{R}$ of cell
$\boldsymbol{i}$ is described by the following differential equations:

\begin{itemize}
\item[(i)] state equation:
\[
\frac{dX({\boldsymbol{i}},t)}{dt}=-X({\boldsymbol{i}},t)+\sum_{\boldsymbol{j}\in
G_{M}^{N}}\mathbb{A}(\boldsymbol{i},\boldsymbol{j})Y({\boldsymbol{j}},t)+\sum_{\boldsymbol{j}\in G_{M}^{N}}\mathbb{B}(\boldsymbol{i},\boldsymbol{j}%
)U(\boldsymbol{j})+Z(\boldsymbol{i})\text{, }\boldsymbol{i}\in G_{M}%
^{N}\text{,\ }%
\]

\item[(ii)] output equation:
\[
Y({\boldsymbol{j}},t)=f(X({\boldsymbol{j}},t)),
\]
where $Y({\boldsymbol{i}},t)\in\mathbb{R}$ is the output of cell
$\boldsymbol{i}$ at the time $t$, $f:\mathbb{R}\rightarrow\mathbb{R}$ is a
bounded Lipschitz function satisfying $f(0)=0$. The function $U(\boldsymbol{i}%
)\in\mathbb{R}$ is the input of the cell $\boldsymbol{i}$, $Z(\boldsymbol{i}%
)\in\mathbb{R}$ is the threshold of cell $\boldsymbol{i}$, and $\mathbb{A}%
,\mathbb{B}:G_{M}^{N}\times G_{M}^{N}\rightarrow\mathbb{R}$ are the feedback
operator and feedforward operator, respectively.
\end{itemize}
\end{definition}

Not all the cells of $G_{M}^{N}$ are active. A cell $\boldsymbol{i}$ is
connected with cell $\boldsymbol{j}$ if $\mathbb{A}(\boldsymbol{i}%
,\boldsymbol{j})\neq0$\ or $\mathbb{B}(\boldsymbol{i},\boldsymbol{j})\neq0$
for some $\boldsymbol{j}\in G_{M}^{N}$. Then, a $p$-adic discrete CNN is a
dynamical system on
\[
C_{N,M}:=\left\{  \boldsymbol{i}\in G_{M}^{N};\mathbb{A}(\boldsymbol{i}%
,\boldsymbol{j})\neq0\ \text{or }\mathbb{B}(\boldsymbol{i},\boldsymbol{j}%
)\neq0\text{ for some }\boldsymbol{j}\in G_{M}^{N}\right\}  .
\]
The topology of a\ $p$-adic discrete $\text{CNN depends on the func\-tions
}\mathbb{A}$, $\mathbb{B}:$ $G_{M}^{N}\times G_{M}^{N}\rightarrow\mathbb{R}$.
For general matrices $\mathbb{A}$, $\mathbb{B}$, it is difficult to give a
graph-type description of the topology of the network. Our $p$-adic CNNs
contain as a particular case the CNNs of Chua and Yang, see e.g.
\cite{Chua-Tamas}, \cite{Slavova}. In this article we focus on $p$-adic CNNs
\ satisfying%
\begin{equation}
\mathbb{A}(\boldsymbol{i},\boldsymbol{j})=\mathbb{A}(\left\Vert \boldsymbol{i}%
-\boldsymbol{j}\right\Vert _{p})\text{, }\mathbb{B}(\boldsymbol{i}%
,\boldsymbol{j})=\mathbb{B}(\left\Vert \boldsymbol{i}-\boldsymbol{j}%
\right\Vert _{p}), \label{Hyp_1}%
\end{equation}
which are discrete CNNs having the space-invariant property. The fact that
$\mathbb{A}$ and $\mathbb{B}$ are radial functions of $\Vert\cdot\Vert_{p}$
implies that the cells are organized in a tree like-structure with many layers.%

\begin{figure}
[h]
\begin{center}
\includegraphics[
height=2.2771in,
width=4.9632in
]%
{./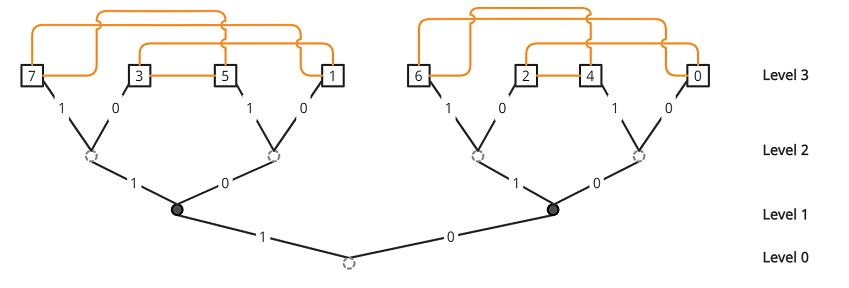}%
\caption{A $1$-dimensional discrete $2$-adic CNN with $8$ cells:
$C_{1,3}=\{0,1,2,3,4,5,7\}\subset\mathbb{Z}_{2}/2^{3}\mathbb{Z}_{2}%
\subset2^{-3}\mathbb{Z}_{2}/2^{3}\mathbb{Z}_{2}$. We set $\mathbb{B}=0$ and
$\mathbb{A}(\boldsymbol{i},\boldsymbol{j})=\left[  a_{\boldsymbol{i}%
,\boldsymbol{j}}\right]  $, with $a_{\boldsymbol{i},\boldsymbol{j}}\neq0$ if
$|\boldsymbol{i}-\boldsymbol{j}|_{2}=1/2$ and $\boldsymbol{i}$,
$\boldsymbol{j}\in C_{1,3}$; $a_{\boldsymbol{i},\boldsymbol{j}}=0$ otherwise.}%
\end{center}
\end{figure}
%

\begin{figure}
[h]
\begin{center}
\includegraphics[
height=2.2321in,
width=2.8971in
]%
{./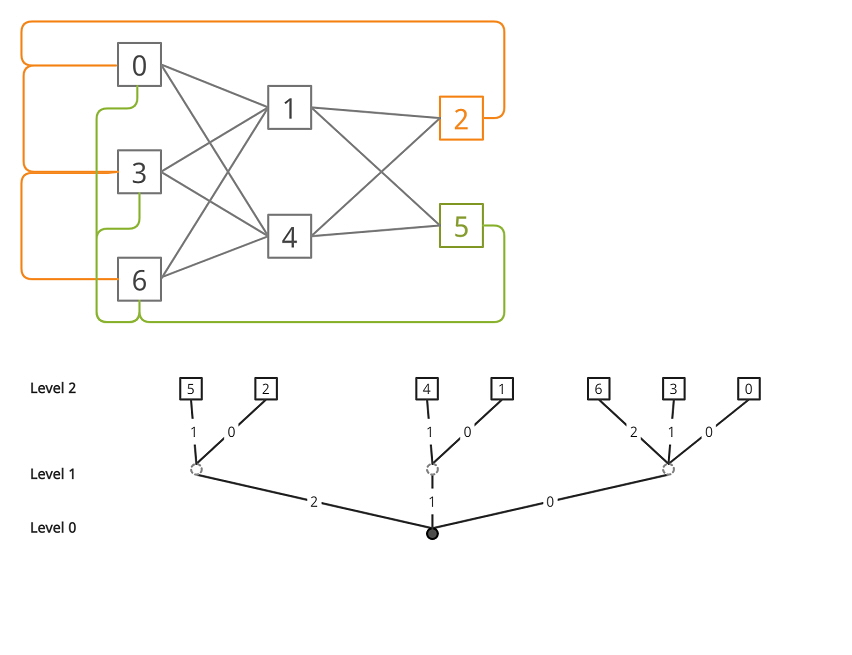}%
\caption{A $1$-dimensional $3$-adic CNN with $7$ cells, $C_{1,2}%
=\{0,1,2,3,4,5,6\}\subset\mathbb{Z}_{3}/3^{2}\mathbb{Z}_{3}\subset
3^{-2}\mathbb{Z}_{3}/3^{2}\mathbb{Z}_{3}$. We set $\mathbb{B}=0$ and
$\mathbb{A}(\boldsymbol{i},\boldsymbol{j})=\left[  a_{\boldsymbol{i}%
,\boldsymbol{j}}\right]  $, with $a_{\boldsymbol{i},\boldsymbol{j}}\neq0$ if
$|\boldsymbol{i}-\boldsymbol{j}|_{3}=1$ and $\boldsymbol{i},\boldsymbol{j}\in
C_{1,2}$; $a_{\boldsymbol{i},\boldsymbol{j}}=0$ otherwise.}%
\end{center}
\end{figure}

\subsection{$p$-Adic continuous CNNs}

\begin{definition}
Given $A(x,y)$, $B(x,y)\in L^{1}(\mathbb{Q}_{p}^{N}\times\mathbb{Q}_{p}^{N})$,
and $U$, $Z\in\mathcal{X}_{\infty}$, a $p$-adic continuous CNN, denoted as
CNN$(A,B,U,Z)$, is the dynamical system given by the following differential
equations: (i) state equation:%
\begin{equation}
\frac{\partial X(x,t)}{\partial t}=-X(x,t)+%
{\displaystyle\int\limits_{\mathbb{Q}_{p}^{N}}}
A(x,y)Y(y,t)d^{N}y+%
{\displaystyle\int\limits_{\mathbb{Q}_{p}^{N}}}
B(x,y)U(y)d^{N}y+Z(x), \label{Continuous_CNN}%
\end{equation}
where $x\in\mathbb{Q}_{p}^{N}$, $t\geq0$, and (ii) output equation:
$Y(x,t)=f(X(x,t))$. We say that $X(x,t)\in\mathbb{R}$ is the \textit{state of
cell} $x$ at the time $t$, $Y(x,t)\in\mathbb{R}$ is \textit{the output of
cell} $x$ at the time $t$. Function $A(x,y)$ is \textit{the kernel of the
feedback operator, while \ function } $B(x,y)$ is t\textit{he kernel of the
feedforward operator}. Function $U$ is \textit{the input of the $\text{CNN}$},
while function $Z$ is \textit{the threshold of the $\text{CNN.}$}
\end{definition}

We focus mainly in continuous CNNs having the space invariant property, i.e.
$A(x,y)=A(\Vert x-y\Vert_{p})$ and $B(x,y)=B(\Vert x-y\Vert_{p})$ for some
$A,B\in L^{1}$, however our results are valid for general $p$-adic continuous
$\text{CNN}$s. Along this article the function $f(x)=\frac{1}{2}\left(
|x+1|-|x-1|\right)  $ will be fixed, for this reason it does not appear in the
list of parameters of the CNNs.

\subsection{\label{Section_Discretization}Discretization of $p$-adic
continuous CNNs}

A central result of the present work is the fact that $p$-adic continuous CNNs
are\ `continuous versions' of $p$-adic discrete CNNs. More precisely, $p$-adic
discrete CNNs are very good approximations of $p$-adic continuous CNNs for
sufficiently large $M$. We discuss here the discretization process in an
intutive way (a formal theorem will be provided later on).

Intuitively, a discretization of a $p$-adic continuous $\text{CNN}(A,B,U,Z)$
is obtained assuming that $X(\cdot,t)$, $A$, $Y(\cdot,t)$, $B$, $U$ and $Z$
belong to $\mathcal{D}^{M}$, i.e.%
\begin{gather*}
X(x,t)=%
{\displaystyle\sum\limits_{\boldsymbol{i}\in G_{M}^{N}}}
X(\boldsymbol{i},t)\Omega\left(  p^{M}\left\Vert x-\boldsymbol{i}\right\Vert
_{p}\right)  \text{, \ }Y(x,t)=%
{\displaystyle\sum\limits_{\boldsymbol{i}\in G_{M}^{N}}}
Y(\boldsymbol{i},t)\Omega\left(  p^{M}\left\Vert x-\boldsymbol{i}\right\Vert
_{p}\right)  ,\\
U(x)=%
{\displaystyle\sum\limits_{\boldsymbol{i}\in G_{M}^{N}}}
U(\boldsymbol{i})\Omega\left(  p^{M}\left\Vert x-\boldsymbol{i}\right\Vert
_{p}\right)  \text{, \ }Z(x)=%
{\displaystyle\sum\limits_{\boldsymbol{i}\in G_{M}^{N}}}
Z(\boldsymbol{i})\Omega\left(  p^{M}\left\Vert x-\boldsymbol{i}\right\Vert
_{p}\right)  ,\\
A(x,y)=%
{\displaystyle\sum\limits_{\boldsymbol{i}\in G_{M}^{N}}}
\text{ }%
{\displaystyle\sum\limits_{\boldsymbol{j}\in G_{M}^{N}}}
A(\boldsymbol{i},\boldsymbol{j})\Omega\left(  p^{M}\left\Vert x-\boldsymbol{i}%
\right\Vert _{p}\right)  \Omega\left(  p^{M}\left\Vert y-\boldsymbol{j}%
\right\Vert _{p}\right)  ,\\
B(x,y)=%
{\displaystyle\sum\limits_{\boldsymbol{i}\in G_{M}^{N}}}
\text{ }%
{\displaystyle\sum\limits_{\boldsymbol{j}\in G_{M}^{N}}}
B(\boldsymbol{i},\boldsymbol{j})\Omega\left(  p^{M}\left\Vert x-\boldsymbol{i}%
\right\Vert _{p}\right)  \Omega\left(  p^{M}\left\Vert y-\boldsymbol{j}%
\right\Vert _{p}\right)  .
\end{gather*}
Notice that if $f:\mathbb{R}\rightarrow\mathbb{R}$, then
\[
f\left(  X(x,t)\right)  =%
{\displaystyle\sum\limits_{\boldsymbol{i}\in G_{M}^{N}}}
f(X(\boldsymbol{i},t))\Omega\left(  p^{M}\left\Vert x-\boldsymbol{i}%
\right\Vert _{p}\right)  =Y(x,t)\text{.}%
\]
Now,%
\[
\frac{\partial}{\partial t}X(x,t)=%
{\displaystyle\sum\limits_{\boldsymbol{i}\in G_{M}^{N}}}
\frac{\partial}{\partial t}X(\boldsymbol{i},t)\Omega\left(  p^{M}\left\Vert
x-\boldsymbol{i}\right\Vert _{p}\right)  ,
\]
and%
\begin{align*}
&
{\displaystyle\int\limits_{\mathbb{Q}_{p}^{N}}}
A(x,y)f\left(  X(y,t)\right)  d^{N}y\\
&  =%
{\displaystyle\sum\limits_{\boldsymbol{i}\in G_{M}^{N}}}
\left\{
{\displaystyle\sum\limits_{\boldsymbol{j}\in G_{M}^{N}}}
A(\boldsymbol{i},\boldsymbol{j})f(X(\boldsymbol{j},t))%
{\displaystyle\int\limits_{\mathbb{Q}_{p}^{N}}}
\Omega\left(  p^{M}\left\Vert y-\boldsymbol{j}\right\Vert _{p}\right)
d^{N}y\right\}  \Omega\left(  p^{M}\left\Vert x-\boldsymbol{i}\right\Vert
_{p}\right) \\
&  =p^{-MN}%
{\displaystyle\sum\limits_{\boldsymbol{i}\in G_{M}^{N}}}
\left\{
{\displaystyle\sum\limits_{\boldsymbol{j}\in G_{M}^{N}}}
A(\boldsymbol{i},\boldsymbol{j})Y(\boldsymbol{j},t))\right\}  \Omega\left(
p^{M}\left\Vert x-\boldsymbol{i}\right\Vert _{p}\right)  .
\end{align*}
Similarly,%
\[%
{\displaystyle\int\limits_{\mathbb{Q}_{p}^{N}}}
B(x,y)U(y)d^{N}y=p^{-MN}%
{\displaystyle\sum\limits_{\boldsymbol{i}\in G_{M}^{N}}}
\left\{
{\displaystyle\sum\limits_{\boldsymbol{j}\in G_{M}^{N}}}
B(\boldsymbol{i},\boldsymbol{j})U(\boldsymbol{j}))\right\}  \Omega\left(
p^{M}\left\Vert x-\boldsymbol{i}\right\Vert _{p}\right)  .
\]
Therefore,%
\begin{align*}
\frac{\partial}{\partial t}X(\boldsymbol{i},t)  &  =-X(\boldsymbol{i},t)+%
{\displaystyle\sum\limits_{\boldsymbol{j}\in G_{M}^{N}}}
p^{-MN}A(\boldsymbol{i},\boldsymbol{j})Y(\boldsymbol{j},t)\\
&  +%
{\displaystyle\sum\limits_{\boldsymbol{j}\in G_{M}^{N}}}
p^{-MN}B(\boldsymbol{i},\boldsymbol{j})U(\boldsymbol{j})+Z(\boldsymbol{i}%
)\text{, for }\boldsymbol{i}\in G_{M}^{N}\text{,}%
\end{align*}
and $Y(\boldsymbol{i},t)=f(X(\boldsymbol{i},t))$, for $\boldsymbol{i}\in
G_{M}^{N}$. This is exactly a $p$-adic discrete CNN with $\mathbb{A}%
(\boldsymbol{i},\boldsymbol{j})=p^{-MN}A(\boldsymbol{i},\boldsymbol{j})$,
$\mathbb{B}(\boldsymbol{i},\boldsymbol{j})=p^{-MN}B(\boldsymbol{i}%
,\boldsymbol{j})$.

Intuitively a $p$-adic continuous CNN has infinitely many layers, each layer
corresponds to some $M$, which are organized in a hierarchical structure. For
practical purposes, a $p$-adic continuous CNN is realized as a $p$-adic
discrete CNN for $M$ sufficiently large.

\section{Stability of $p$-adic continuous CNN}

\begin{lemma}
\label{Lemma 1} Let $f$ be a Lipschitz functions with $f(0)=0$ and let $E$ be
a radial function in $L^{1}(\mathbb{Q}_{p}^{N})$. Then, the mappings
\[%
\begin{split}
&  F_{0}:g\rightarrow\int_{\mathbb{Q}_{p}^{N}}E(\Vert x-y\Vert_{p}%
)f(g(y))d^{N}y\\
&  F_{1}:g\rightarrow\int_{\mathbb{Q}_{p}^{N}}E(\Vert x-y\Vert_{p})g(y)d^{N}y
\end{split}
\]
are well defined bounded operators from $\mathcal{X}_{\infty}$ into itself.
\end{lemma}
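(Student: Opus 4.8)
The plan is to reduce the whole statement to the linear operator $F_{1}$, which is simply a convolution $F_{1}g=E\ast g$ (writing, by abuse of notation, $E(y):=E(\Vert y\Vert_{p})$), and then to obtain $F_{0}$ by pre-composing with $f$. \textbf{Step 1 (a priori bound).} For $g\in\mathcal{X}_{\infty}$ and $x\in\mathbb{Q}_{p}^{N}$ the pointwise inequality $|E(\Vert x-y\Vert_{p})g(y)|\leq\Vert g\Vert_{\infty}\,|E(\Vert x-y\Vert_{p})|$ together with translation invariance of the Haar measure shows that $F_{1}g(x)$ is a well-defined real number and $\Vert F_{1}g\Vert_{\infty}\leq\Vert E\Vert_{1}\Vert g\Vert_{\infty}$. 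Hence $F_{1}$ is a linear operator on $(\mathcal{D}(\mathbb{Q}_{p}^{N}),\Vert\cdot\Vert_{\infty})$ of norm at most $\Vert E\Vert_{1}$ into the space of bounded functions; moreover it depends Lipschitz-continuously on the kernel: replacing $E$ by $E'\in L^{1}$ changes $F_{1}$ by an operator of norm at most $\Vert E-E'\Vert_{1}$.

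\textbf{Step 2 (radial test kernels, then a limit).} Since $E$ is radial it can be written as $E=\sum_{k\in\mathbb{Z}}c_{k}\,1_{S_{k}^{N}}$, and the hypothesis $E\in L^{1}$ is exactly $\sum_{k}|c_{k}|\,\mathrm{vol}(S_{k}^{N})<\infty$. Put $E_{n}:=\sum_{|k|\leq n}c_{k}\,1_{S_{k}^{N}}$; this is a radial test function with $\Vert E-E_{n}\Vert_{1}\to 0$. For $\varphi\in\mathcal{D}(\mathbb{Q}_{p}^{N})$ the convolution $E_{n}\ast\varphi$ has compact support (the ultrametric triangle inequality bounds the support of the convolution by the larger of the two supports) and is locally constant with the same parameter of local constancy as $E_{n}$ (translating $x$ inside a small enough ball leaves every sphere $\Vert x-y\Vert_{p}=p^{k}$ unchanged), hence $E_{n}\ast\varphi\in\mathcal{D}(\mathbb{Q}_{p}^{N})\subset\mathcal{X}_{\infty}$. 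By Step 1, $\Vert F_{1}\varphi-E_{n}\ast\varphi\Vert_{\infty}\leq\Vert E-E_{n}\Vert_{1}\Vert\varphi\Vert_{\infty}\to 0$, so $F_{1}\varphi$ is a uniform limit of elements of $\mathcal{D}(\mathbb{Q}_{p}^{N})$ and therefore $F_{1}\varphi\in\overline{(\mathcal{D}(\mathbb{Q}_{p}^{N}),\Vert\cdot\Vert_{\infty})}=\mathcal{X}_{\infty}$. Thus $F_{1}\colon(\mathcal{D}(\mathbb{Q}_{p}^{N}),\Vert\cdot\Vert_{\infty})\to\mathcal{X}_{\infty}$ is bounded linear; since $\mathcal{D}(\mathbb{Q}_{p}^{N})$ is dense in $\mathcal{X}_{\infty}$ and $\mathcal{X}_{\infty}$ is complete, $F_{1}$ extends uniquely to a bounded operator $\mathcal{X}_{\infty}\to\mathcal{X}_{\infty}$ with $\Vert F_{1}\Vert\leq\Vert E\Vert_{1}$, and the a priori bound of Step 1 shows that this extension is still given by the integral formula on all of $\mathcal{X}_{\infty}$.

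\textbf{Step 3 ($F_{0}$).} The map $\Phi\colon g\mapsto f\circ g$ sends $\mathcal{X}_{\infty}$ into itself: $f\circ g$ is continuous, and $f(0)=0$ together with the Lipschitz bound gives $|f(g(x))|\leq L(f)|g(x)|$, so $f\circ g$ is bounded and vanishes at infinity; moreover $\Vert\Phi g_{1}-\Phi g_{2}\Vert_{\infty}\leq L(f)\Vert g_{1}-g_{2}\Vert_{\infty}$. Hence $F_{0}=F_{1}\circ\Phi$ is a well-defined map $\mathcal{X}_{\infty}\to\mathcal{X}_{\infty}$ with $\Vert F_{0}g\Vert_{\infty}\leq\Vert E\Vert_{1}L(f)\Vert g\Vert_{\infty}$ and $\Vert F_{0}g_{1}-F_{0}g_{2}\Vert_{\infty}\leq\Vert E\Vert_{1}L(f)\Vert g_{1}-g_{2}\Vert_{\infty}$, which is the asserted boundedness.

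I expect the crux to be Step 2, namely proving that $F_{1}g$ is continuous and vanishes at infinity; this hinges on (i) the radial decomposition $E=\sum_{k}c_{k}1_{S_{k}^{N}}$, which provides the correct notion of approximation of $E$ by radial test functions in the $L^{1}$-norm, and (ii) the stability of $\mathcal{D}(\mathbb{Q}_{p}^{N})$ under convolution with a test function, after which completeness of $\mathcal{X}_{\infty}$ (closedness under uniform limits) finishes the argument. An alternative route would use that elements of $\mathcal{C}_{0}=\mathcal{X}_{\infty}$ on the locally compact group $\mathbb{Q}_{p}^{N}$ are uniformly continuous and that $E\ast g$ inherits uniform continuity and decay at infinity directly; I prefer the approximation argument above because it relies only on density facts already recorded in the excerpt.
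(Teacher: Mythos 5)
Your proof is correct, but it takes a genuinely different route from the paper's. The paper argues directly on an arbitrary $g\in\mathcal{X}_{\infty}$: the domination $|E(\Vert y\Vert_{p})|\,|f(g(x-y))|\leq L(f)\Vert g\Vert_{\infty}|E(\Vert y\Vert_{p})|$ gives pointwise existence, the dominated convergence theorem along a sequence $x_{m}\rightarrow x$ gives continuity, and a contradiction argument (again via dominated convergence, using that $\Vert x_{m}-y\Vert_{p}\rightarrow\infty$ for fixed $y$, that $g$ vanishes at infinity, and that $f(0)=0$) gives decay at infinity; this places $F_{0}(g)$ in $\mathcal{C}_{0}=\mathcal{X}_{\infty}$, and $F_{1}$ is handled by the same argument. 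You instead (i) reduce to the linear operator $F_{1}$ by factoring $F_{0}=F_{1}\circ\Phi$ through the substitution map $\Phi g=f\circ g$, and (ii) prove $F_{1}(\mathcal{X}_{\infty})\subseteq\mathcal{X}_{\infty}$ by approximating the radial kernel in $L^{1}$ by the truncations $E_{n}=\sum_{|k|\leq n}c_{k}1_{S_{k}^{N}}$, using stability of $\mathcal{D}(\mathbb{Q}_{p}^{N})$ under convolution with a test function, and then invoking density of $\mathcal{D}(\mathbb{Q}_{p}^{N})$ together with completeness of $\mathcal{X}_{\infty}$; all the steps check out (the sphere decomposition is an equality off the origin, hence in $L^{1}$, and the a~priori bound does identify the continuous extension with the integral formula). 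Your route buys the explicit estimates $\Vert F_{1}\Vert\leq\Vert E\Vert_{1}$ and $\Vert F_{0}g-F_{0}g^{\prime}\Vert_{\infty}\leq L(f)\Vert E\Vert_{1}\Vert g-g^{\prime}\Vert_{\infty}$, which the paper only records afterwards for the operator $\boldsymbol{H}$ in Lemma \ref{Lemma3}, and it sidesteps the slightly delicate vanishing-at-infinity step. The paper's direct argument, on the other hand, carries over verbatim to non-radial kernels $E(x,y)$ that are continuous with compact support (this is exactly Remark \ref{nota_lemma2}), whereas your sphere decomposition is tied to radiality; a simple-function approximation of a general $L^{1}$ kernel would restore that generality at the cost of a longer argument.
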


\begin{proof}
We first notice that for all $g\in\mathcal{X}_{\infty}$, $F_{0}(g)(x)$ exists
for all $x\in\mathbb{Q}_{p}^{N}$, since
\begin{equation}
|E(\Vert y\Vert_{p})|\text{ }|f(g(x-y))|\leq L(f)\Vert g\Vert_{\infty}\text{
}\left\vert E(\Vert y\Vert_{p})\right\vert , \label{eq:ast}%
\end{equation}
where $E(\Vert y\Vert_{p})\in L^{1}(\mathbb{Q}_{p}^{N})$. To show the
continuity of $F_{0}(g)(x)$, we take a sequence $\{x_{m}\}_{m\in\mathbb{N}%
}\subset$ $\mathbb{Q}_{p}^{N}$ such that $x_{m}\rightarrow x$. By using
(\ref{eq:ast}) and the dominated convergence theorem, $\lim_{m\rightarrow
\infty}F_{0}(g)(x_{m})=F_{0}(g)(x)$. Finally, we show that $F_{0}%
(g)\in\mathcal{X}_{\infty}$. By contradiction, assume that $F_{0}%
(g)\not \in \mathcal{X}_{\infty}$. Then, there is a sequence $\{x_{m}%
\}_{m\in\mathbb{N}}\subset$ $\mathbb{Q}_{p}^{N}$ such that $\lim
_{m\rightarrow\infty}\Vert x_{m}\Vert_{p}=\infty$ and $\epsilon>0$ such that
$F_{0}(g)(x_{m})>\epsilon$ for all $m\in\mathbb{N}$. By using (\ref{eq:ast})
and the dominated convergence theorem, we have%
\begin{align*}
\epsilon &  \leq\lim_{m\rightarrow\infty}|F_{0}(g)(x_{m})|=\lim_{m\rightarrow
\infty}\left\vert \int_{\mathbb{Q}_{p}^{N}}E(\Vert y\Vert_{p})f(g(x_{m}%
-y))d^{N}y\right\vert \\
&  =\left\vert \int_{\mathbb{Q}_{p}^{N}}E(\Vert y\Vert_{p})\left\{
\lim_{m\rightarrow\infty}f(g(x_{m}-y))\right\}  d^{N}y\right\vert =0
\end{align*}
which contradicts the fact $\epsilon>0$. The same argument allow us to show
that $F_{1}(g)\in\mathcal{X}_{\infty}$ for any $g\in\mathcal{X}_{\infty}$.
\end{proof}

\begin{lemma}
\label{Lemma3} Assume $A,B\in L^{1}(\mathbb{Q}_{p}^{N})$ are radial functions
and that $U$, $Z\in\mathcal{X}_{\infty}$. For $g\in\mathcal{X}_{\infty}$, set
\[
\boldsymbol{H}(g):=%
{\displaystyle\int\limits_{\mathbb{Q}_{p}^{N}}}
A(\Vert x-y\Vert_{p})f\left(  g(y)\right)  d^{N}y+%
{\displaystyle\int\limits_{\mathbb{Q}_{p}^{N}}}
B(\Vert x-y\Vert_{p})U(y)d^{N}y+Z(x).
\]
Then $\boldsymbol{H}:\mathcal{X}_{\infty}\rightarrow\mathcal{X}_{\infty}$ is a
well-defined \ operator satisfying%
\[
\Vert\boldsymbol{H}(g)-\boldsymbol{H}(g^{\prime})\Vert_{\infty}\leq L(f)\Vert
A\Vert_{1}\Vert g-g^{\prime}\Vert_{\infty}\text{, for }g\text{, }g^{\prime}%
\in\mathcal{X}_{\infty}\text{,}
\]
where $L(f)$ is the Lipschitz constant of $f$.
\end{lemma}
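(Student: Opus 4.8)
The plan is to deduce everything from Lemma \ref{Lemma 1}. First I would observe that $\boldsymbol{H}(g)$ is a sum of three pieces: the term $F_{0}(g)$ obtained by taking $E=A$ in Lemma \ref{Lemma 1}; the term $F_{1}(U)$ obtained by taking $E=B$ in Lemma \ref{Lemma 1} and applying $F_{1}$ to the \emph{fixed} function $U$; and the threshold $Z$. By Lemma \ref{Lemma 1}, $F_{0}(g)\in\mathcal{X}_{\infty}$ for every $g\in\mathcal{X}_{\infty}$, and $F_{1}(U)\in\mathcal{X}_{\infty}$ since $U\in\mathcal{X}_{\infty}$; also $Z\in\mathcal{X}_{\infty}$ by hypothesis. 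Since $\mathcal{X}_{\infty}$ is an $\mathbb{R}$-vector space, $\boldsymbol{H}(g)\in\mathcal{X}_{\infty}$, so $\boldsymbol{H}$ is a well-defined map $\mathcal{X}_{\infty}\rightarrow\mathcal{X}_{\infty}$.

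For the Lipschitz estimate, the key observation is that, for $g,g^{\prime}\in\mathcal{X}_{\infty}$, the feedforward term $\int_{\mathbb{Q}_{p}^{N}}B(\Vert x-y\Vert_{p})U(y)\,d^{N}y$ and the threshold $Z(x)$ do not depend on the argument, hence they cancel in the difference, leaving
\[
\boldsymbol{H}(g)(x)-\boldsymbol{H}(g^{\prime})(x)=\int_{\mathbb{Q}_{p}^{N}}A(\Vert x-y\Vert_{p})\bigl(f(g(y))-f(g^{\prime}(y))\bigr)\,d^{N}y .
\]
Next I would bound the integrand pointwise using that $f$ is Lipschitz: $|f(g(y))-f(g^{\prime}(y))|\leq L(f)\,|g(y)-g^{\prime}(y)|\leq L(f)\,\Vert g-g^{\prime}\Vert_{\infty}$, whence
\[
|\boldsymbol{H}(g)(x)-\boldsymbol{H}(g^{\prime})(x)|\leq L(f)\,\Vert g-g^{\prime}\Vert_{\infty}\int_{\mathbb{Q}_{p}^{N}}|A(\Vert x-y\Vert_{p})|\,d^{N}y .
\]
The change of variables $y\mapsto x-y$, which preserves the Haar measure, together with the radial form of $A$, gives $\int_{\mathbb{Q}_{p}^{N}}|A(\Vert x-y\Vert_{p})|\,d^{N}y=\Vert A\Vert_{1}$, independently of $x$. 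Taking the supremum over $x\in\mathbb{Q}_{p}^{N}$ yields $\Vert\boldsymbol{H}(g)-\boldsymbol{H}(g^{\prime})\Vert_{\infty}\leq L(f)\Vert A\Vert_{1}\Vert g-g^{\prime}\Vert_{\infty}$.

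There is essentially no hard step here; the only points requiring a moment's care are (i) invoking Lemma \ref{Lemma 1} with the two distinct choices $E=A$ and $E=B$, and noting that in the feedforward term $F_{1}$ is applied to the fixed input $U$, not to the variable $g$, so this contribution is a single fixed element of $\mathcal{X}_{\infty}$; and (ii) recording that, precisely because the feedforward and threshold terms are independent of $g$, the Lipschitz constant of $\boldsymbol{H}$ depends only on $A$ and $f$, and not on $B$, $U$ or $Z$. Everything else reduces to translation invariance of the Haar measure on $\mathbb{Q}_{p}^{N}$ and the Lipschitz bound on $f$ established in Lemma \ref{Lemma 1}.
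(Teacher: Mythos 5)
Your proposal is correct and follows essentially the same route as the paper: well-definedness is deduced from Lemma \ref{Lemma 1} (applied with $E=A$ for the feedback term and $E=B$, acting on the fixed input $U$, for the feedforward term), and the Lipschitz bound comes from cancelling the $g$-independent terms and estimating $|f(g(y))-f(g'(y))|\leq L(f)\Vert g-g'\Vert_{\infty}$ against $\Vert A\Vert_{1}$. Your added remarks about the constant depending only on $A$ and $f$ are accurate but not needed.
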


\begin{proof}
By Lemma \ref{Lemma 1}, $\boldsymbol{H}:\mathcal{X}_{\infty}\rightarrow
\mathcal{X}_{\infty}$ is a well-defined \ operator. Take $g$, $g^{\prime}%
\in\mathcal{X}_{\infty}$, then {\small
\begin{multline*}
|\boldsymbol{H}(g)(x)-\boldsymbol{H}(g^{\prime})(x)|=\left\vert \text{ }%
{\displaystyle\int\limits_{\mathbb{Q}_{p}^{N}}}
A(\Vert x-y\Vert_{p})\left(  f\left(  g(y)\right)  -f\left(  g^{\prime
}(y)\right)  \right)  d^{N}y\right\vert \\
\leq%
{\displaystyle\int\limits_{\mathbb{Q}_{p}^{N}}}
|A(\Vert x-y\Vert_{p})||f\left(  g(y)\right)  -f\left(  g^{\prime}(y)\right)
|d^{N}y\leq L(f)\Vert g-g^{\prime}\Vert_{\infty}%
{\displaystyle\int\limits_{\mathbb{Q}_{p}^{N}}}
|A(\Vert x-y\Vert_{p})|d^{N}y\\
=L(f)\Vert A\Vert_{1}\Vert g-g^{\prime}\Vert_{\infty}.
\end{multline*}
}
\end{proof}

\begin{remark}
\label{nota_lemma2}(i) Lemma \ref{Lemma 1} remains valid if we replace the
condition $E$ is radial and integrable by the condition $E(x,y)$ is a
continuous function with compact support.

\noindent(ii) Under the hypothesis of part (i), Lemma \ref{Lemma3} is valid
for operators of the form%
\[
\boldsymbol{L}g=%
{\displaystyle\int\limits_{\mathbb{Q}_{p}^{N}}}
A(x,y)f\left(  g(y)\right)  d^{N}y+%
{\displaystyle\int\limits_{\mathbb{Q}_{p}^{N}}}
B(x,y)U(y)d^{N}y+Z(x),
\]
for $g\in\mathcal{X}_{\infty}$.
\end{remark}

\begin{proposition}
\label{Prop1}Assume that $A$, $B$, $f$ satisfy hypotheses of Lemma
\ref{Lemma3} and that $U$, $Z\in\mathcal{X}_{\infty}$. Let $\tau$ be a fixed
positive real number. Then for each $X_{0}\in\mathcal{X}_{\infty}$ there
exists a unique $X\in C([0,\tau],\mathcal{X}_{\infty})$ which satisfies
\begin{equation}
X(x,t)=e^{-t}X_{0}\left(  x\right)  +\int_{0}^{t}e^{-(t-s)}\boldsymbol{H}%
(X(x,s))ds \label{eq:solution}%
\end{equation}
where
\begin{equation}
\boldsymbol{H}X(x,t)=%
{\displaystyle\int\limits_{\mathbb{Q}_{p}^{N}}}
A(\Vert x-y\Vert_{p})f(X(y,t))d^{N}y+%
{\displaystyle\int\limits_{\mathbb{Q}_{p}^{N}}}
B(\Vert x-y\Vert_{p})U(y)d^{N}y+Z(x). \label{eq:Op_H}%
\end{equation}
The function $X(x,t)$ is differentiable in $t$ for all $x$, and it is a
solution of equation (\ref{Continuous_CNN}) with initial datum $X_{0}$.
\end{proposition}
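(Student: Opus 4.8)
The plan is to recognize (\ref{eq:solution}) as the integral (mild-solution) form of the semilinear Cauchy problem $\frac{d}{dt}X = -X + \boldsymbol{H}(X)$, $X(0)=X_{0}$, on the Banach space $\mathcal{X}_{\infty}$, and to solve it by a Banach fixed-point argument. First I would fix $\tau>0$ and work in the Banach space $C([0,\tau],\mathcal{X}_{\infty})$ equipped with the sup-norm $\|X\|_{\tau}=\sup_{0\le t\le\tau}\|X(\cdot,t)\|_{\infty}$. Define the operator
\[
(\mathcal{T}X)(x,t)=e^{-t}X_{0}(x)+\int_{0}^{t}e^{-(t-s)}\boldsymbol{H}(X(x,s))\,ds .
\]
I would check that $\mathcal{T}$ maps $C([0,\tau],\mathcal{X}_{\infty})$ into itself: by Lemma \ref{Lemma3}, $\boldsymbol{H}(X(\cdot,s))\in\mathcal{X}_{\infty}$ for each $s$, and $s\mapsto\boldsymbol{H}(X(\cdot,s))$ is continuous (it is Lipschitz in $X$, and $s\mapsto X(\cdot,s)$ is continuous); since $\mathcal{X}_{\infty}$ is a Banach space the Bochner integral $\int_{0}^{t}e^{-(t-s)}\boldsymbol{H}(X(\cdot,s))\,ds$ lies in $\mathcal{X}_{\infty}$, and the prefactor $e^{-t}X_{0}$ is clearly in $C([0,\tau],\mathcal{X}_{\infty})$.

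Next comes the contraction estimate. For $X,X'\in C([0,\tau],\mathcal{X}_{\infty})$, using the Lipschitz bound from Lemma \ref{Lemma3},
\[
\|(\mathcal{T}X)(\cdot,t)-(\mathcal{T}X')(\cdot,t)\|_{\infty}
\le\int_{0}^{t}e^{-(t-s)}\,L(f)\,\|A\|_{1}\,\|X(\cdot,s)-X'(\cdot,s)\|_{\infty}\,ds
\le L(f)\|A\|_{1}\,(1-e^{-\tau})\,\|X-X'\|_{\tau}.
\]
If $L(f)\|A\|_{1}(1-e^{-\tau})<1$ then $\mathcal{T}$ is a contraction and we get a unique fixed point on $[0,\tau]$. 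To remove the smallness restriction on $\tau$ I would either (a) iterate a standard Picard-type argument on the weighted norm $\|X\|_{\lambda}=\sup_{t}e^{-\lambda t}\|X(\cdot,t)\|_{\infty}$ with $\lambda$ large, under which $\mathcal{T}$ becomes a contraction for every $\tau$, or (b) partition $[0,\tau]$ into finitely many subintervals of length $<\,(L(f)\|A\|_{1})^{-1}\wedge 1$ (the local existence time is uniform because the Lipschitz constant of $\boldsymbol{H}$ does not depend on the datum) and glue the local solutions; uniqueness on overlaps gives a unique global $X\in C([0,\tau],\mathcal{X}_{\infty})$. I would present (a), as it is cleaner.

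Finally I would upgrade the mild solution to a genuine solution of (\ref{Continuous_CNN}). Since $t\mapsto\boldsymbol{H}(X(\cdot,t))\in\mathcal{X}_{\infty}$ is continuous, the map $t\mapsto\int_{0}^{t}e^{-(t-s)}\boldsymbol{H}(X(\cdot,s))\,ds$ is $C^{1}$ in $t$ (fundamental theorem of calculus for the Bochner integral, after writing $e^{-(t-s)}=e^{-t}e^{s}$ and differentiating $e^{-t}\int_{0}^{t}e^{s}\boldsymbol{H}(X(\cdot,s))\,ds$ by the product rule), so $X(\cdot,t)$ is differentiable in $t$ with values in $\mathcal{X}_{\infty}$, hence pointwise in $x$, and
\[
\frac{\partial}{\partial t}X(x,t)=-e^{-t}X_{0}(x)-\int_{0}^{t}e^{-(t-s)}\boldsymbol{H}(X(x,s))\,ds+\boldsymbol{H}(X(x,t))=-X(x,t)+\boldsymbol{H}(X(x,t)),
\]
which is exactly (\ref{Continuous_CNN}) with $\boldsymbol{H}$ as in (\ref{eq:Op_H}); evaluating at $t=0$ gives $X(x,0)=X_{0}(x)$. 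The main obstacle, and the only point requiring genuine care, is the handling of the $\mathcal{X}_{\infty}$-valued (Bochner) integral: one must justify that the integrand is strongly measurable and essentially bounded (continuity in $s$ suffices, granted by Lemma \ref{Lemma3}), that the integral stays in the closed subspace $\mathcal{X}_{\infty}=\mathcal{C}_{0}$, and that differentiation under/of the integral sign is legitimate; everything else is the routine Picard–Lindelöf scheme adapted to a Banach space.
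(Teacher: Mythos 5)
Your proposal is correct and takes essentially the same route as the paper: both recast (\ref{eq:solution}) as a fixed-point problem for the operator $Y\mapsto e^{-t}X_{0}+\int_{0}^{t}e^{-(t-s)}\boldsymbol{H}(Y(\cdot,s))\,ds$ on $C([0,\tau],\mathcal{X}_{\infty})$, invoke the Lipschitz bound of Lemma \ref{Lemma3}, and then differentiate the integral equation to recover (\ref{Continuous_CNN}) with $X(\cdot,0)=X_{0}$. The only (cosmetic) difference is how the contraction is obtained for arbitrary $\tau$: the paper shows the $M$-th iterate of the operator is a contraction via the bound $\tau^{M}L(f)^{M}\Vert A\Vert_{1}^{M}/M!$, whereas you use a Bielecki weighted norm (or subinterval gluing) --- standard, interchangeable devices.
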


\begin{proof}
The result follows from Lemma \ref{Lemma3}, by using standard techniques in
PDEs, see e.g. \cite[Theorem 5.1.2]{Milan}. To make the treatment
comprehensive to a general audience, we provide some details here. First,
define
\[
\boldsymbol{T}(Y)=X_{0}e^{-t}+\int_{0}^{t}e^{-(t-s)}\boldsymbol{H}(Y(x,s))ds,
\]
and $\mathcal{Y}=C([0,\tau],\mathcal{X}_{\infty})$ which is a Banach space
with the norm $\Vert\cdot\Vert_{\infty}$. By Lemma \ref{Lemma3},
$\boldsymbol{T}:\mathcal{Y}\rightarrow\mathcal{Y}$. \ If $Y$, $Y_{1}%
\in\mathcal{Y}$, then
\begin{gather*}
\Vert\boldsymbol{T}(Y)(t)-\boldsymbol{T}(Y_{1})(t)\Vert_{\infty}=\left\Vert
\int_{0}^{t}e^{-(t-s)}\left\{  \boldsymbol{H}(Y)(s)-\boldsymbol{H}%
(Y_{1})(s)\right\}  ds\right\Vert _{\infty}\\
\leq\int_{0}^{t}e^{-(t-s)}\Vert\boldsymbol{H}(Y)(s)-\boldsymbol{H}%
(Y_{1})(s)\Vert_{_{\infty}}\text{ }ds\leq L(f)\Vert A\Vert_{1}\int_{0}%
^{t}\Vert Y-Y_{1}\Vert_{_{\infty}}ds.
\end{gather*}
And hence,
\[
\Vert\boldsymbol{T}^{M}(Y)(t)-\boldsymbol{T}^{M}(Y_{1})(t)\Vert_{\infty}%
\leq\frac{\tau^{M}L(f)^{M}\Vert A\Vert_{1}^{M}}{M!}\Vert Y-Y_{1}\Vert_{\infty
},
\]
for $M\geq1$. By the contraction mapping theorem, there is a unique unique
$X\in\mathcal{Y}$ which $\boldsymbol{T}(X)=X$. Moreover, since the right-hand
side of (\ref{eq:solution}) is differentiable in $t$, $X$ is a solution of
(\ref{Continuous_CNN}) with initial condition $X_{0}$.
\end{proof}

\begin{remark}
\label{Nota1}The contraction mapping theorem provides an iterative formula for
$X(x,t)$. Set $X_{1}(x,t)=X_{0}\left(  x\right)  $ and
\[
X_{L+1}(x,t)=e^{-t}X_{0}\left(  x\right)  +\int_{0}^{t}e^{-(t-s)}%
H(X_{L}(x,s))ds\text{, for }L=1,2,\ldots,
\]
then $\lim_{L\rightarrow\infty}\Vert X_{L}\left(  \cdot,t\right)  -X\left(
\cdot,t\right)  \Vert_{\infty}$ $=0$ for each $t\leq\tau$, see e.g.
\cite[Theorem 5.2.2]{Milan}.
\end{remark}

\begin{theorem}
\label{Theorem0}Assume $A$, $B\in L^{1}(p^{-M_{0}}\mathbb{Z}_{p}^{N})$ are
radial functions, for some $M_{0}\in\mathbb{N}$, and that $U$, $Z$, $X_{0}%
\in\mathcal{X}_{M_{0}}$. We also assume that $f$ is a Lipschitz functions with
$f(0)=0$. Then there is a unique $X\in C([0,\tau],\mathcal{X}_{M_{0}})\cap
C^{1}([0,\tau],\mathcal{X}_{M_{0}})$ satisfying (\ref{eq:solution}), which is
a solution of equation (\ref{Continuous_CNN}) with initial datum $X_{0}$.
\end{theorem}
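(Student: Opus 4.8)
The plan is to rerun the fixed point argument of Proposition \ref{Prop1} inside the \emph{finite-dimensional} subspace $\mathcal{X}_{M_{0}}$, the only genuinely new point being that the operator $\boldsymbol{H}$ of (\ref{eq:Op_H}) leaves $\mathcal{X}_{M_{0}}$ invariant. Since $\mathcal{X}_{M_{0}}\cong\left(\mathbb{R}^{\#G_{M_{0}}^{N}},\|\cdot\|_{\mathbb{R}}\right)$ is finite dimensional, it is a closed, hence complete, subspace of $\mathcal{X}_{\infty}$; consequently $\mathcal{Y}_{M_{0}}:=C([0,\tau],\mathcal{X}_{M_{0}})$ is a closed subspace of $C([0,\tau],\mathcal{X}_{\infty})$ and a Banach space under $\|\cdot\|_{\infty}$.

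The core step is the invariance $\boldsymbol{H}(\mathcal{X}_{M_{0}})\subseteq\mathcal{X}_{M_{0}}$, which I would prove in two pieces. First, if $g=\sum_{\boldsymbol{i}\in G_{M_{0}}^{N}}c_{\boldsymbol{i}}\,\Omega\!\left(p^{M_{0}}\|x-\boldsymbol{i}\|_{p}\right)\in\mathcal{X}_{M_{0}}$, then $f\circ g=\sum_{\boldsymbol{i}\in G_{M_{0}}^{N}}f(c_{\boldsymbol{i}})\,\Omega\!\left(p^{M_{0}}\|x-\boldsymbol{i}\|_{p}\right)\in\mathcal{X}_{M_{0}}$, because the balls $B_{-M_{0}}^{N}(\boldsymbol{i})$ are pairwise disjoint and $f(0)=0$ (this is the computation already used in Section \ref{Section_Discretization}). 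Second, if $E\in L^{1}(p^{-M_{0}}\mathbb{Z}_{p}^{N})$ is radial and $\phi\in\mathcal{X}_{M_{0}}$, then $(E\ast\phi)(x):=\int_{\mathbb{Q}_{p}^{N}}E(\|x-y\|_{p})\phi(y)\,d^{N}y\in\mathcal{X}_{M_{0}}$: it is supported in $B_{M_{0}}^{N}$, since for $\|x\|_{p}>p^{M_{0}}$ and $y\in B_{M_{0}}^{N}\supseteq\operatorname{supp}\phi$ the ultrametric inequality gives $\|x-y\|_{p}=\|x\|_{p}>p^{M_{0}}$, where $E$ vanishes; and it has parameter of local constancy $\geq -M_{0}$, since for $x'\in B_{-M_{0}}^{N}$ the substitution $y\mapsto y+x'$, which preserves the Haar measure, together with $\phi(y+x')=\phi(y)$, gives $(E\ast\phi)(x+x')=(E\ast\phi)(x)$. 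Applying the first piece to $g\mapsto f\circ g$, then the second piece with $E=A$, the second piece again with $E=B$ and $\phi=U\in\mathcal{X}_{M_{0}}$, and using $Z\in\mathcal{X}_{M_{0}}$, we obtain $\boldsymbol{H}(g)=A\ast(f\circ g)+B\ast U+Z\in\mathcal{X}_{M_{0}}$.

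With the invariance in hand, the argument of Proposition \ref{Prop1} goes through on $\mathcal{Y}_{M_{0}}$. The map $\boldsymbol{T}(Y)=e^{-t}X_{0}+\int_{0}^{t}e^{-(t-s)}\boldsymbol{H}(Y(x,s))\,ds$ sends $\mathcal{Y}_{M_{0}}$ into itself: $e^{-t}X_{0}\in\mathcal{X}_{M_{0}}$ because $X_{0}\in\mathcal{X}_{M_{0}}$, and since $s\mapsto\boldsymbol{H}(Y(\cdot,s))$ is continuous from $[0,\tau]$ into the closed subspace $\mathcal{X}_{M_{0}}$ (by Lemma \ref{Lemma3} and the invariance just proved), the integral also lies in $\mathcal{X}_{M_{0}}$. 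The estimate of Lemma \ref{Lemma3} gives $\|\boldsymbol{T}^{m}(Y)(t)-\boldsymbol{T}^{m}(Y_{1})(t)\|_{\infty}\leq\frac{\left(\tau L(f)\|A\|_{1}\right)^{m}}{m!}\|Y-Y_{1}\|_{\infty}$, so $\boldsymbol{T}^{m}$ is a contraction on $\mathcal{Y}_{M_{0}}$ for $m$ large, and the contraction mapping theorem produces a unique $X\in\mathcal{Y}_{M_{0}}$ with $\boldsymbol{T}(X)=X$, i.e. satisfying (\ref{eq:solution}). By the uniqueness part of Proposition \ref{Prop1} in the larger space $\mathcal{X}_{\infty}$, this $X$ is \emph{the} solution of (\ref{Continuous_CNN}) with datum $X_{0}$. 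Finally, as in Proposition \ref{Prop1}, the right-hand side of (\ref{eq:solution}) is differentiable in $t$ with $\partial_{t}X(\cdot,t)=-X(\cdot,t)+\boldsymbol{H}(X(\cdot,t))$, which is continuous from $[0,\tau]$ into $\mathcal{X}_{M_{0}}$; hence $X\in C^{1}([0,\tau],\mathcal{X}_{M_{0}})$.

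The only real obstacle is the invariance step, and within it the second piece — stability of $\mathcal{X}_{M_{0}}$ under convolution with a radial $L^{1}$ kernel supported in $p^{-M_{0}}\mathbb{Z}_{p}^{N}$ — but as indicated this is a short ultrametric computation. Equivalently, one may observe that under the identification $\mathcal{X}_{M_{0}}\cong\mathbb{R}^{\#G_{M_{0}}^{N}}$ equation (\ref{eq:solution}) becomes the integral form of the discrete CNN of Section \ref{Section_Discretization}, an ODE in $\mathbb{R}^{\#G_{M_{0}}^{N}}$ with globally Lipschitz right-hand side, so that existence, uniqueness on $[0,\tau]$, and $C^{1}$ regularity follow from the Picard--Lindel\"{o}f theorem; the content is then precisely the invariance, i.e. the fact that this ODE is a consistent reduction of (\ref{Continuous_CNN}).
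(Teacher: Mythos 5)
Your proof is correct, and it turns on the same pivot as the paper's --- the invariance $\boldsymbol{H}(\mathcal{X}_{M_{0}})\subseteq\mathcal{X}_{M_{0}}$ --- but you establish that invariance by a genuinely different argument and close the proof differently. The paper never states the invariance abstractly: it takes the Picard iterates $X_{L+1}=e^{-t}X_{0}+\int_{0}^{t}e^{-(t-s)}\boldsymbol{H}(X_{L})\,ds$ from Remark \ref{Nota1}, shows by induction that each $X_{L}(\cdot,t)$ lies in $\mathcal{X}_{M_{0}}$ by \emph{explicitly evaluating} the integrals $\int_{\mathbb{Q}_{p}^{N}}E(\Vert x-y\Vert_{p})\Omega\left(  p^{M_{0}}\left\Vert y-\boldsymbol{i}\right\Vert _{p}\right)  d^{N}y$ case by case (zero off $p^{-M_{0}}\mathbb{Z}_{p}^{N}$, equal to $\int_{p^{M_{0}}\mathbb{Z}_{p}^{N}}E(\Vert z\Vert_{p})\,d^{N}z$ on the diagonal ball, and equal to $p^{-M_{0}N}E(\Vert\boldsymbol{i}-\boldsymbol{j}\Vert_{p})$ off it), and then invokes the closedness of $\mathcal{X}_{M_{0}}$ in $\mathcal{X}_{\infty}$ to pass to the limit. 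You instead prove the invariance once and for all by two soft observations --- support confinement via the ultrametric inequality, and preservation of the local-constancy parameter $-M_{0}$ via translation invariance of the Haar measure --- together with the elementary fact that $f(0)=0$ makes $g\mapsto f\circ g$ preserve $\mathcal{X}_{M_{0}}$, and then you rerun the contraction in the closed subspace $C([0,\tau],\mathcal{X}_{M_{0}})$, identifying the resulting fixed point with the $\mathcal{X}_{\infty}$-solution by uniqueness. Your route is cleaner and more conceptual, and would extend unchanged to any kernel for which the two soft properties hold; what the paper's computation buys is the explicit coefficient formula (\ref{Eq_Aproximation}), i.e.\ the concrete discrete CNN approximating the continuous one, which is reused in Proposition \ref{Prop2} and the remark following the theorem. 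Both arguments are complete, and the two closing devices (contraction in a closed subspace versus Picard iterates remaining in a closed subspace) are interchangeable.
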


\begin{remark}
This theorem remains valid if $A(x,y)$, $B(x,y)$ are continuous functions with
compact support, see Remark \ref{nota_lemma2}.
\end{remark}

\begin{proof}
Since $\mathcal{X}_{M_{0}}$ is a subspace of $\mathcal{X}_{\infty}$, by
applying Proposition \ref{Prop1}, there exists a unique $X\in C([0,\tau
],\mathcal{X}_{\infty})\cap C^{1}([0,\tau],\mathcal{X}_{\infty})$ that
satisfies all the announced properties. By Remark \ref{Nota1}, $\lim
_{L\rightarrow\infty}\Vert X_{L}\left(  \cdot,t\right)  -X\left(
\cdot,t\right)  \Vert_{\infty}$ $=0$, where
\[
X_{L+1}(x,t)=e^{-t}X_{0}\left(  x\right)  +\int_{0}^{t}e^{-(t-s)}%
H(X_{L}(x,s))ds\text{, for }L=1,2,\ldots.
\]
By induction on $L$, if $X_{L}(\cdot,s)\in\mathcal{X}_{M_{0}}$, i.e. if
\begin{align*}
X_{L}(x,s)  &  =\sum_{\boldsymbol{i}\in G_{M_{0}}^{N}}X_{L}(\boldsymbol{i}%
,s)\Omega\left(  p^{M_{0}}\left\Vert x-\boldsymbol{i}\right\Vert _{p}\right)
,\\
f(X_{L}(x,s))  &  =\sum_{\boldsymbol{i}\in G_{M_{0}}^{N}}Y_{L}(\boldsymbol{i}%
,s)\Omega\left(  p^{M_{0}}\left\Vert x-\boldsymbol{i}\right\Vert _{p}\right)
\end{align*}
by using that
\begin{multline*}
\int_{0}^{t}e^{-(t-s)}H(X_{L}(x,s))ds\\
=%
{\displaystyle\sum\limits_{\boldsymbol{i}\in G_{M_{0}}^{N}}}
\left(  \int_{0}^{t}e^{-(t-s)}Y_{L}(\boldsymbol{i},s)ds\right)  \left(
{\displaystyle\int\limits_{\mathbb{Q}_{p}^{N}}}
A(\Vert x-y\Vert_{p})\Omega\left(  p^{M_{0}}\left\Vert y-\boldsymbol{i}%
\right\Vert _{p}\right)  d^{N}y\right) \\
+%
{\displaystyle\sum\limits_{\boldsymbol{i}\in G_{M_{0}}^{N}}}
U(\boldsymbol{i})(1-e^{-t})%
{\displaystyle\int\limits_{\mathbb{Q}_{p}^{N}}}
B(\Vert x-y\Vert_{p})\Omega\left(  p^{M_{0}}\left\Vert y-\boldsymbol{i}%
\right\Vert _{p}\right)  d^{N}y\\
+%
{\displaystyle\sum\limits_{\boldsymbol{i}\in G_{M_{0}}^{N}}}
(1-e^{-t})Z(\boldsymbol{i})\Omega\left(  p^{M_{0}}\left\Vert x-\boldsymbol{i}%
\right\Vert _{p}\right)  ,
\end{multline*}
and that for any $E\in L^{1}(p^{-M_{0}}\mathbb{Z}_{p}^{N})$ are radial
function, with the convention that the support of $E$ is the ball $p^{-M_{0}%
}\mathbb{Z}_{p}^{N}$,
\begin{multline*}%
{\displaystyle\int\limits_{\mathbb{Q}_{p}^{N}}}
E(\Vert x-y\Vert_{p})\Omega\left(  p^{M_{0}}\left\Vert y-\boldsymbol{i}%
\right\Vert _{p}\right)  d^{N}y=%
{\displaystyle\int\limits_{\boldsymbol{i}+p^{M_{0}}\mathbb{Z}_{p}^{N}}}
E(\Vert x-y\Vert_{p})d^{N}y\\
=\left\{
\begin{array}
[c]{lll}%
0 & \text{if} & x\notin p^{-M_{0}}\mathbb{Z}_{p}^{N}\\%
{\displaystyle\int\limits_{p^{M_{0}}\mathbb{Z}_{p}^{N}}}
E(\Vert z\Vert_{p})d^{N}z & \text{if} & x\in\boldsymbol{i}+p^{M_{0}}%
\mathbb{Z}_{p}^{N}\\
p^{-M_{0}N}E(\Vert\boldsymbol{i}-\boldsymbol{j}\Vert_{p}) & \text{if} &
x\in\boldsymbol{j}+p^{M_{0}}\mathbb{Z}_{p}^{N}\text{, }\boldsymbol{i}%
\neq\boldsymbol{j},
\end{array}
\right.
\end{multline*}
we conclude that%
\begin{gather}
X_{L+1}(x,t)=e^{-t}X_{0}\left(  x\right)  +\label{Eq_Aproximation}\\%
{\displaystyle\sum\limits_{\boldsymbol{j}\in G_{M_{0}}^{N}}}
\left\{
{\displaystyle\sum\limits_{\substack{\boldsymbol{i}\in G_{M_{0}}%
^{N}\\\boldsymbol{i}\neq\boldsymbol{j}}}}
a\left(  \boldsymbol{i},t\right)  p^{-M_{0}N}A(\Vert\boldsymbol{i}%
-\boldsymbol{j}\Vert_{p})\right\}  \Omega\left(  p^{M_{0}}\left\Vert
y-\boldsymbol{j}\right\Vert _{p}\right)  +\nonumber\\%
{\displaystyle\sum\limits_{\boldsymbol{j}\in G_{M_{0}}^{N}}}
a\left(  \boldsymbol{j},t\right)  \left(
{\displaystyle\int\limits_{p^{M_{0}}\mathbb{Z}_{p}^{N}}}
A(\Vert z\Vert_{p})d^{N}z\right)  \Omega\left(  p^{M_{0}}\left\Vert
y-\boldsymbol{j}\right\Vert _{p}\right)  +\nonumber\\%
{\displaystyle\sum\limits_{\boldsymbol{i}\in G_{M_{0}}^{N}}}
\left\{
{\displaystyle\sum\limits_{\substack{\boldsymbol{i}\in G_{M_{0}}%
^{N}\\\boldsymbol{i}\neq\boldsymbol{j}}}}
U(\boldsymbol{i})(1-e^{-t})B(\Vert\boldsymbol{i}-\boldsymbol{j}\Vert
_{p})\right\}  \Omega\left(  p^{M_{0}}\left\Vert y-\boldsymbol{j}\right\Vert
_{p}\right)  +\nonumber\\%
{\displaystyle\sum\limits_{\boldsymbol{i}\in G_{M_{0}}^{N}}}
U\left(  \boldsymbol{j}\right)  (1-e^{-t})\left(
{\displaystyle\int\limits_{p^{M_{0}}\mathbb{Z}_{p}^{N}}}
B(\Vert z\Vert_{p})d^{N}z\right)  \Omega\left(  p^{M_{0}}\left\Vert
y-\boldsymbol{j}\right\Vert _{p}\right)  +(1-e^{-t})Z(\boldsymbol{i}%
),\nonumber
\end{gather}
i.e. $X_{L+1}(\cdot,s)\in\mathcal{X}_{M}$. And consequently, $\left\{
X_{L}(\cdot,t)\right\}  _{L\in\mathbb{N}\smallsetminus\left\{  0\right\}  }$
is a sequence in $\mathcal{X}_{M}$. Since $\mathcal{X}_{M}$ is closed in
$\mathcal{X}_{\infty}$, $X\left(  \cdot,t\right)  \in\mathcal{X}_{M}$ for any
$t\leq\tau$.
\end{proof}

\begin{remark}
By using that
\[
p^{MN}%
{\displaystyle\int\limits_{p^{M}\mathbb{Z}_{p}^{N}}}
A(\Vert z\Vert_{p})d^{N}z\rightarrow A(0)\text{, }p^{MN}%
{\displaystyle\int\limits_{p^{M}\mathbb{Z}_{p}^{N}}}
B(\Vert z\Vert_{p})d^{N}z\rightarrow B(0)
\]
as $M\rightarrow\infty$, see e.g. \cite[Theorem 1.14]{Taibleson},
(\ref{Eq_Aproximation}) provides an explicit approximation of the continuous
CNN described in Theorem \ref{Theorem0}.
\end{remark}

\begin{lemma}
\label{Cor:loc-const} Let $\tau$ be a fixed positive real number, let $X(x,t)
$ be the solution given in Proposition \ref{Prop1}, with $X(x,0)=$ $X_{0}$.
Then, for all $x,y\in\mathbb{Q}_{p}^{N}$ and $t\in\left(  0,\tau\right)  $,
\[
|X(x,t)-X(y,t)|\leq|X_{0}(x)-X_{0}(y)|e^{\Vert A\Vert_{1}L(f)t}.
\]
Moreover, if $X_{0}$ is a locally-constant function, i.e. $X_{0}(x)=X_{0}(y)$
for $y\in B_{l}(x)$, with $l=l(x)\in\mathbb{Z}$, for any $x\in\mathbb{Q}%
_{p}^{N}$, then $X(\cdot,t)$ is a locally-constant function and
$X(x,t)=X(y,t)$ for $y\in B_{l}(x)$ for any $x\in\mathbb{Q}_{p}^{N}$.
\end{lemma}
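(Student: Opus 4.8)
The plan is to run a Gronwall estimate on the fixed-point identity (\ref{eq:solution}), using in an essential way that the kernels are radial. Fix $x,y\in\mathbb{Q}_{p}^{N}$, put $v=x-y$, and subtract (\ref{eq:solution}) at $y$ from (\ref{eq:solution}) at $x$. In each convolution I substitute $z\mapsto x-z$ (resp. $z\mapsto y-z$); this is legitimate precisely because $A,B$ depend only on $\|\cdot\|_{p}$, and it turns $\int_{\mathbb{Q}_{p}^{N}}A(\|x-z\|_{p})f(X(z,s))d^{N}z$ into $\int_{\mathbb{Q}_{p}^{N}}A(\|z\|_{p})f(X(x-z,s))d^{N}z$. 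One obtains
\begin{align*}
X(x,t)-X(y,t)&=e^{-t}\bigl(X_{0}(x)-X_{0}(y)\bigr)+\bigl(1-e^{-t}\bigr)c(x,y)\\
&\quad+\int_{0}^{t}e^{-(t-s)}\int_{\mathbb{Q}_{p}^{N}}A(\|z\|_{p})\bigl(f(X(x-z,s))-f(X(y-z,s))\bigr)d^{N}z\,ds,
\end{align*}
where $c(x,y)=(B\ast U)(x)-(B\ast U)(y)+Z(x)-Z(y)$ is independent of $t$.

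The substitution leaves every difference $(x-z)-(y-z)$ equal to $v$, so it is natural to control everything through $w(t):=\sup_{h\in\mathbb{Q}_{p}^{N}}|X(h,t)-X(h-v,t)|$, which is finite since $X(\cdot,t)\in\mathcal{X}_{\infty}$ is bounded. By the Lipschitz property of $f$, the convolution term above is bounded in absolute value by $L(f)\|A\|_{1}\,w(s)$, uniformly over pairs with $x-y=v$; taking $\sup_{h}$ in the identity and using $e^{-t},e^{-(t-s)}\le1$ gives $w(t)\le w(0)+L(f)\|A\|_{1}\int_{0}^{t}w(s)\,ds$, provided $B\ast U$ and $Z$ are constant on every ball of radius $\|v\|_{p}$ so that the $c$-term disappears. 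This is exactly the regime of interest: for locally constant (test-function) inputs $U,Z$ the convolution $B\ast U$ is again locally constant (as the computation in the proof of Theorem \ref{Theorem0} shows), so the hypothesis holds for all sufficiently small $v$, and there $w(0)=|X_{0}(x)-X_{0}(y)|=0$; in the general case one simply carries the two additive increments of $B\ast U$ and $Z$ under the shift by $v$ through the estimate. Gronwall's inequality then yields $w(t)\le w(0)e^{L(f)\|A\|_{1}t}$, and $|X(x,t)-X(y,t)|\le w(t)$ gives the announced bound.

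For the ``moreover'' assertion the quickest argument is that the estimate forces $X(x,t)=X(y,t)$ whenever $X_{0}(x)=X_{0}(y)$, so if $X_{0}$ is constant on $B_{l}(x)$ then so is $X(\cdot,t)$, i.e. $X(\cdot,t)$ is again locally constant with the same exponent $l(x)$ at $x$. A more robust route, which bypasses the bookkeeping above, works directly with the Picard iterates of Remark \ref{Nota1} and the elementary fact that convolution against a radial $L^{1}$ kernel preserves the pointwise exponent of local constancy: if $g$ is constant on $B_{l}(x_{0})$ then, for $x'\in B_{l}(x_{0})$, the ultrametric inequality forces $\|x'-y\|_{p}=\|x_{0}-y\|_{p}$ for every $y\notin B_{l}(x_{0})$, while on $B_{l}(x_{0})$ we have $g$ constant and $\int_{B_{l}(x_{0})}E(\|x'-y\|_{p})d^{N}y=\int_{B_{l}(x_{0})}E(\|x_{0}-y\|_{p})d^{N}y$ because $B_{l}(x_{0})=B_{l}(x')$; hence $\int_{\mathbb{Q}_{p}^{N}}E(\|x'-y\|_{p})g(y)d^{N}y=\int_{\mathbb{Q}_{p}^{N}}E(\|x_{0}-y\|_{p})g(y)d^{N}y$. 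Since post-composition with $f$ and the time-integration in (\ref{eq:solution}) also preserve pointwise local constancy, an induction over $L$ shows each $X_{L}(\cdot,t)$ is constant on $B_{l(x)}(x)$ (using that $B\ast U$, $Z$ are locally constant with exponent at least $l(x)$ at $x$), and the uniform limit $X(\cdot,t)=\lim_{L\to\infty}X_{L}(\cdot,t)$ inherits this.

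The step I expect to be the main obstacle is controlling the affine terms $B\ast U$ and $Z$: the Gronwall estimate collapses to the clean form stated exactly when these carry no increment at the scale $\|x-y\|_{p}$, so in substance the estimate is a statement about locally constant (test-function) data. The radial-kernel computation above is short, but it is precisely where the ultrametric --- rather than the Archimedean --- structure is used; over $\mathbb{R}$ neither the closing of the Gronwall loop nor the persistence of local constancy would be available.
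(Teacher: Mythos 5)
Your overall route --- a Gronwall estimate run on the mild formulation (\ref{eq:solution}) --- is the same as the paper's, but you carry it out more carefully, and the obstacle you flag is genuine rather than an artifact of your bookkeeping. The paper's proof is the two-line estimate
\[
|X(x,t)-X(y,t)|\leq e^{-t}|X_{0}(x)-X_{0}(y)|+\int_{0}^{t}e^{-(t-s)}\left\vert \boldsymbol{H}(X(x,s))-\boldsymbol{H}(X(y,s))\right\vert ds,
\]
followed by the bound $\left\vert \boldsymbol{H}(X(x,s))-\boldsymbol{H}(X(y,s))\right\vert \leq L(f)\Vert A\Vert_{1}|X(x,s)-X(y,s)|$ justified by citing Lemma \ref{Lemma3}. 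But Lemma \ref{Lemma3} compares $\boldsymbol{H}$ at two different functions evaluated at the same point, not the same function at two different points; applied here it silently discards exactly the terms you isolate, namely the increments $(B\ast U)(x)-(B\ast U)(y)$ and $Z(x)-Z(y)$, and it also replaces what should be a supremum over translated pairs (your $w(t)=\sup_{h}|X(h,t)-X(h-v,t)|$) by the pointwise difference. Your conclusion that the clean inequality closes only when these increments vanish is correct, and in fact the printed bound fails in general: with $A\equiv0$, $B\equiv0$, $X_{0}\equiv0$ and a nonconstant $Z\in\mathcal{X}_{\infty}$, the solution of Proposition \ref{Prop1} is $X(x,t)=(1-e^{-t})Z(x)$, while the stated right-hand side is $0$. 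So the lemma, as used, needs either the additional hypothesis that $Z$ and $B\ast U$ carry no increment at the scale $\Vert x-y\Vert_{p}$ (automatic for the locally constant, test-function data relevant to Theorem \ref{Theorem0}), or the increments of $Z$ and $B\ast U$ added to the right-hand side --- which is precisely the conditional form you arrive at.

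Concerning the ``moreover'' assertion, the paper offers no separate argument (it is read off from the first bound), and your quick derivation is the same, with the same caveat: local constancy of $X_{0}$ alone does not control $Z$ and $B\ast U$. Your second route, via the Picard iterates of Remark \ref{Nota1} together with the observation that convolution against a radial $L^{1}$ kernel preserves the pointwise exponent of local constancy, is sound and is essentially the mechanism the paper itself uses in the proof of Theorem \ref{Theorem0}; as you note explicitly, it requires $U$ and $Z$ (hence $B\ast U$) to be locally constant with exponent at least $l(x)$, a hypothesis the lemma's statement omits. In short, your proposal is the paper's argument done carefully, and the ``main obstacle'' you identify is a real gap in the paper's own statement and proof, not a defect of your approach.
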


\begin{proof}
Fix $x,y\in\mathbb{Q}_{p}^{N}$, the by Proposition \ref{Prop1} and Lemma
\ref{Lemma3}, for all $t\in(0,\tau]$
\begin{align*}
|X(x,t)-X(y,t)|  &  \leq e^{-t}|X_{0}(x)-X_{0}(y)|+\int_{0}^{t}e^{-(t-s)}%
|\boldsymbol{H}(X(x,s))-\boldsymbol{H}(X(y,s))|ds\\
&  \leq|X_{0}(x)-X_{0}(y)|+L(f)\Vert A\Vert_{1}\int_{0}^{t}|X(x,s)-X(y,s)|ds.
\end{align*}
Thus, by Gronwall theorem, see \cite[Theorem 5.1.1]{Milan},
\[
|X(x,t)-X(y,t)|\leq|X_{0}(x)-X_{0}(y)|e^{L(f)\Vert A\Vert_{1}t}
\]
for all $t\in\left(  0,\tau\right)  $.
\end{proof}

\begin{definition}
\label{def:stationary state} A function $X_{stat}(x):=X_{stat}(x;A,B,U,Z)\in
\mathcal{X}_{\infty}$ is called a stationary state of a $p$-adic continuous
$\text{CNN}(A,B,U,Z)$, if%
\begin{equation}
X_{stat}(x)=%
{\displaystyle\int\limits_{\mathbb{Q}_{p}^{N}}}
A(\Vert x-y\Vert_{p})Y(y)d^{N}y+%
{\displaystyle\int\limits_{\mathbb{Q}_{p}^{N}}}
B(\Vert x-y\Vert_{p})U(y)d^{N}y+Z(x),\nonumber
\end{equation}
where $Y(x)=f(X_{stat}(x))$ and $x\in\mathbb{Q}_{p}^{N}$.
\end{definition}

\begin{remark}
\label{Lemma4} If a $p$-adic continuous $\text{CNN}(A,B,U,Z)$ satisfies that
$\Vert A\Vert_{1}L(f)<1$, then the $\text{CNN}(A,B,U,Z)$ has a unique
stationary state. This follows by the fact that, under this condition,
$\boldsymbol{H}(X)$ becomes a contraction map in $\mathcal{X}_{\infty}$, cf.
Lemmas \ref{Lemma 1}, \ref{Lemma3}.
\end{remark}

\begin{theorem}
\label{Theorem1}All the states $X(x,t)$ of a $p$-adic continuous
$\text{CNN}(A,B,U,Z)$ are bounded for all time $t\geq0$. More precisely, if
\[
X_{\max}:=\Vert X_{0}\Vert_{\infty}+\Vert f\Vert_{\infty}\Vert A\Vert
_{1}+\Vert U\Vert_{\infty}\Vert B\Vert_{1}+\Vert Z\Vert_{\infty},
\]
then
\begin{equation}
|X(x,t)|\leq X_{\max}\text{ for all }t\geq0\text{ and for all }x\in
\mathbb{Q}_{p}^{N}. \label{No_Blow_up}%
\end{equation}

In addition%
\[
X_{-}\left(  x\right)  :=\lim\inf_{t\rightarrow\infty}X(x,t)\leq
X(x,t)\leq\lim\sup_{t\rightarrow\infty}X(x,t)=:X_{+}\left(  x\right)  ,
\]
for $x\in\mathbb{Q}_{p}^{N}$. If $X_{-}\left(  x\right)  =X_{+}\left(
x\right)  :=X^{\ast}(x)$,then $X^{\ast}(x)$ is a stationary solution of the
CNN$(A,B,U,Z)$ and
\begin{equation}
X^{\ast}(x)\geq-\left\Vert f\right\Vert _{\infty}\Vert A\Vert_{1}-\Vert
U\Vert_{\infty}\Vert B\Vert_{1}+Z(x). \label{Stat_Solution_2}%
\end{equation}

\end{theorem}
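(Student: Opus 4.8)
The plan is to establish the uniform bound (\ref{No_Blow_up}) directly from the integral formulation (\ref{eq:solution}), and then to extract the stationary-solution statement from the $\limsup$/$\liminf$ asymptotics. For the bound, I would start from
\[
X(x,t)=e^{-t}X_{0}(x)+\int_{0}^{t}e^{-(t-s)}\boldsymbol{H}(X(x,s))\,ds
\]
and estimate $|\boldsymbol{H}(X(x,s))|$ pointwise. Splitting $\boldsymbol{H}$ into its three pieces and using $|f|\leq\Vert f\Vert_{\infty}$, $|U|\leq\Vert U\Vert_{\infty}$, one gets
\[
|\boldsymbol{H}(X(x,s))|\leq\Vert f\Vert_{\infty}\Vert A\Vert_{1}+\Vert U\Vert_{\infty}\Vert B\Vert_{1}+\Vert Z\Vert_{\infty}=:C.
\]
Then $|X(x,t)|\leq e^{-t}\Vert X_{0}\Vert_{\infty}+C\int_{0}^{t}e^{-(t-s)}\,ds=e^{-t}\Vert X_{0}\Vert_{\infty}+C(1-e^{-t})\leq\max\{\Vert X_{0}\Vert_{\infty},C\}\leq X_{\max}$, which gives (\ref{No_Blow_up}) and in particular that the solution extends to all $t\geq0$ (apply Proposition \ref{Prop1} on $[0,\tau]$ for arbitrary $\tau$, the bound being $\tau$-independent).

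For the second part, the inequality $X_{-}(x)\leq X(x,t)\leq X_{+}(x)$ is immediate from the definitions of $\liminf$ and $\limsup$ together with the boundedness just proved (so $X_{\pm}$ are finite). When $X_{-}(x)=X_{+}(x)=X^{\ast}(x)$ we have $\lim_{t\to\infty}X(x,t)=X^{\ast}(x)$ pointwise; I would then pass to the limit in (\ref{eq:solution}). Writing $\int_{0}^{t}e^{-(t-s)}\boldsymbol{H}(X(x,s))\,ds$ and using that $e^{-t}X_{0}(x)\to0$, it remains to show this integral converges to $\boldsymbol{H}(X^{\ast}(x))$. Since the integrand is dominated by $Ce^{-(t-s)}$ and $\boldsymbol{H}(X(\cdot,s))(x)\to\boldsymbol{H}(X^{\ast}(\cdot))(x)$ as $s\to\infty$ (by dominated convergence applied to the $A$-integral defining $\boldsymbol{H}$, using continuity of $f$ and $|A|\cdot\Vert f\Vert_{\infty}\in L^{1}$), a standard Abelian-type argument — change variables $u=t-s$, split the interval $[0,t]$ into $[0,T]$ and $[T,t]$ — yields $\lim_{t\to\infty}\int_{0}^{t}e^{-(t-s)}\boldsymbol{H}(X(x,s))\,ds=\boldsymbol{H}(X^{\ast}(x))$. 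Hence $X^{\ast}(x)=\boldsymbol{H}(X^{\ast}(x))$, i.e. $X^{\ast}$ is a stationary state in the sense of Definition \ref{def:stationary state}. Finally (\ref{Stat_Solution_2}) follows by estimating the stationary equation from below: $X^{\ast}(x)=\int A(\Vert x-y\Vert_{p})f(X^{\ast}(y))\,d^{N}y+\int B(\Vert x-y\Vert_{p})U(y)\,d^{N}y+Z(x)\geq-\Vert f\Vert_{\infty}\Vert A\Vert_{1}-\Vert U\Vert_{\infty}\Vert B\Vert_{1}+Z(x)$.

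The routine estimates (the $L^{1}$-$L^{\infty}$ bounds on the convolution terms) are straightforward. The one step requiring care is the passage to the limit in the Duhamel integral: one must argue that pointwise convergence $X(x,s)\to X^{\ast}(x)$ actually forces $\boldsymbol{H}(X(\cdot,s))(x)\to\boldsymbol{H}(X^{\ast}(\cdot))(x)$ — this needs the dominated convergence theorem on $\mathbb{Q}_{p}^{N}$ with dominating function $L(f)\,|A(\Vert x-y\Vert_{p})|\cdot(\text{bound})$, exactly as in the proof of Lemma \ref{Lemma 1} — and then feed this into the exponential-kernel averaging. I expect that to be the main obstacle, everything else being bookkeeping.
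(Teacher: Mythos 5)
Your proposal is correct and follows essentially the same route as the paper: bounding $\boldsymbol{H}$ by $\Vert f\Vert_{\infty}\Vert A\Vert_{1}+\Vert U\Vert_{\infty}\Vert B\Vert_{1}+\Vert Z\Vert_{\infty}$ in the Duhamel formula, extending to all $t\geq0$ since $\tau$ is arbitrary, and passing to the limit in $\int_{0}^{t}e^{-(t-s)}\boldsymbol{H}(X(x,s))\,ds$ via the substitution $u=t-s$ and dominated convergence (your $[0,T]$/$[T,t]$ splitting is just the Abelian form of the paper's $1_{[0,t]}(u)e^{-u}$ domination). Your explicit remark that the inner convergence $\boldsymbol{H}(X(\cdot,s))(x)\to\boldsymbol{H}(X^{\ast})(x)$ requires a second application of dominated convergence in the spatial variable is, if anything, slightly more careful than the paper's appeal to Lemma \ref{Lemma3}.
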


\begin{remark}
Condition (\ref{No_Blow_up}) implies that $X(x,t)$ does not blow-up at finite
time. The existence of a stationary state $X^{\ast}(x)$ means that the state
of each cell of a $p$-adic continuous CNN most settle at stable equilibrium
point after the transient has decayed to zero.
\end{remark}

\begin{proof}
By Proposition \ref{Prop1}, see (\ref{eq:solution})-(\ref{eq:Op_H}), by using
that $\left\vert Y(y,t)\right\vert =\left\vert f\left(  X\left(  x,t\right)
\right)  \right\vert \leq\left\Vert f\right\Vert _{\infty}$, we have%
\begin{align*}
\left\vert \boldsymbol{H}(X\left(  x,t\right)  )\right\vert  &  \leq%
{\displaystyle\int\limits_{\mathbb{Q}_{p}^{N}}}
|A(\Vert x-y\Vert_{p})||Y(y,t)|d^{N}y+%
{\displaystyle\int\limits_{\mathbb{Q}_{p}^{N}}}
|B(\Vert x-y\Vert_{p})||U(y)|d^{N}y+|Z(x)|\\
&  \leq\Vert f\Vert_{\infty}\Vert A\Vert_{1}+\Vert B\Vert_{1}\Vert
U\Vert_{\infty}+\Vert Z\Vert_{\infty}.
\end{align*}
Therefore%
\begin{align*}
\left\Vert X\left(  x,t\right)  \right\Vert _{\infty}  &  \leq e^{-t}\Vert
X_{0}\Vert_{\infty}+\int_{0}^{t}e^{-(t-s)}\left\Vert \boldsymbol{H}\left(
X\left(  x,s\right)  \right)  \right\Vert _{\infty}ds\\
&  \leq\Vert X_{0}\Vert_{\infty}+\Vert f\Vert_{\infty}\Vert A\Vert_{1}+\Vert
B\Vert_{1}\Vert U\Vert_{\infty}+\Vert Z\Vert_{\infty}.
\end{align*}

This bound is valid for any $t\in\left[  0,\tau\right]  $, but $\tau$\ is an
arbitrary, the bound is valid for any $t\geq0$.

The bound (\ref{No_Blow_up}) implies existence of the functions:%
\begin{align*}
X_{+}\left(  x\right)   &  =\lim\sup_{t\rightarrow\infty}X(x,t)=\lim
_{M\rightarrow\infty}\sup\left\{  X(x,t);t>M\right\}  ,\\
X_{-}\left(  x\right)   &  =\lim\inf_{t\rightarrow\infty}X(x,t)=\lim
_{M\rightarrow\infty}\inf\left\{  X(x,t);t>M\right\}  .
\end{align*}
Now assume that $\lim_{t\rightarrow\infty}X(x,t)=X^{\ast}(x)$ exists. By using
that
\begin{multline*}
\int_{0}^{t}e^{-(t-s)}\boldsymbol{H}(X\left(  x,s\right)  )ds=\int_{0}%
^{t}e^{-u}\boldsymbol{H}(X\left(  x,t-u\right)  )du\\
=\int_{0}^{\infty}1_{\left[  0,t\right]  }\left(  u\right)  e^{-u}%
\boldsymbol{H}(X(x,t-u))du,
\end{multline*}
and%
\[
\left\vert 1_{\left[  0,t\right]  }\left(  u\right)  e^{-u}\boldsymbol{H}%
(X(x,t-u))\right\vert \leq\left(  \Vert f\Vert_{\infty}\Vert A\Vert_{1}+\Vert
B\Vert_{1}\Vert U\Vert_{\infty}+\Vert Z\Vert_{\infty}\right)  e^{-u}\in
L^{1}(\mathbb{R}),
\]
and the dominated convergence and Lemma \ref{Lemma3}, it follows from
(\ref{eq:solution}) that%
\begin{align*}
\lim_{t\rightarrow\infty}X(x,t)  &  =\int_{0}^{\infty}e^{-u}\lim
_{t\rightarrow\infty}\left\{  1_{\left[  0,t\right]  }\left(  u\right)
\boldsymbol{H}(X(x,t-u))\right\}  du=\int_{0}^{\infty}e^{-u}\boldsymbol{H}%
(X^{\ast}(x,))du\\
&  =%
{\displaystyle\int\limits_{\mathbb{Q}_{p}^{N}}}
A(\Vert x-y\Vert_{p})f(X^{\ast}(x))d^{N}y+%
{\displaystyle\int\limits_{\mathbb{Q}_{p}^{N}}}
B(\Vert x-y\Vert_{p})U(y)d^{N}y+Z(x).
\end{align*}

\end{proof}

\section{Stability of $p$-adic discrete CNN and Approximation of Continuous
CNNs}

\subsection{The operators $\boldsymbol{P}_{M}$, $\boldsymbol{E}_{M}$}

We now define for $M\geq1$, $\boldsymbol{P}_{M}:\mathcal{X}_{\infty
}\rightarrow\mathcal{X}_{M}$ as
\[
\boldsymbol{P}_{M}\varphi\left(  x\right)  =\sum_{\boldsymbol{i}\in G_{M}^{N}%
}\varphi\left(  \boldsymbol{i}\right)  \Omega\left(  p^{M}\left\Vert
x-\boldsymbol{i}\right\Vert _{p}\right)  .
\]
Therefore $\boldsymbol{P}_{M}$ is a linear bounded operator, indeed,
$\left\Vert \boldsymbol{P}_{M}\right\Vert \leq1$.

We denote by $\boldsymbol{E}_{M}$ , $M\geq1$, the embedding $\mathcal{X}%
_{M}\rightarrow\mathcal{X}_{\infty}$. The following result is a consequence of
the above observations. If $\mathcal{Z}$, $\mathcal{Y}$ are real Banach
spaces, we denote by $\mathfrak{B}(\mathcal{Z},\mathcal{Y})$, the space of all
linear bounded operators from $\mathcal{Z}$ into $\mathcal{Y}$.

\begin{lemma}
\cite[Lemma 2]{Zuniga-Nonlinearity} \label{Lemma_Condition_A}With the above
notation, the following assertions hold true:

\noindent(i) $\mathcal{X}_{\infty}$, $\mathcal{X}_{M}$ for $M\geq1$, are
\ real Banach spaces, all with the norm $\left\Vert \cdot\right\Vert _{\infty}
$;

\noindent(ii) $\boldsymbol{P}_{M}\in\mathfrak{B}\left(  \mathcal{X}_{\infty
},\mathcal{X}_{M}\right)  $ and $\left\Vert \boldsymbol{P}_{M}\varphi
\right\Vert _{\infty}\leq\left\Vert \varphi\right\Vert _{\infty}$ for any
$M\geq1$, $\varphi\in\mathcal{X}_{\infty}$;

\noindent(iii) $\boldsymbol{E}_{M}\in\mathfrak{B}\left(  \mathcal{X}%
_{M},\mathcal{X}_{\infty}\right)  $ and $\left\Vert \boldsymbol{E}_{M}%
\varphi\right\Vert _{\infty}=\left\Vert \varphi\right\Vert _{\infty}$ for any
$M\geq1$, $\varphi\in\mathcal{X}_{M}$;

\noindent(iv) $\boldsymbol{P}_{M}\boldsymbol{E}_{M}\varphi=\varphi$ for
$M\geq1$, $\varphi\in\mathcal{X}_{M}$;

\noindent(v) $\lim_{M\rightarrow\infty}\left\Vert \varphi-\boldsymbol{P}%
_{M}\varphi\right\Vert _{\infty}=0$ for any $\varphi\in\mathcal{X}_{\infty}$;

\noindent(vi) $\lim_{M\rightarrow\infty}\Vert\boldsymbol{E}_{M}\boldsymbol{P}%
_{M}\phi-\phi\Vert_{\infty}=0$ for all $\phi\in\mathcal{X}_{\infty}$.
\end{lemma}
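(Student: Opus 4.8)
The plan is to verify the six assertions in the stated order; only (v) requires genuine work, the others being immediate consequences of the definitions of $\mathcal{X}_{\infty}$, $\mathcal{X}_{M}$, $\boldsymbol{P}_{M}$, $\boldsymbol{E}_{M}$ together with the fact that the balls $\boldsymbol{i}+B_{-M}^{N}$, $\boldsymbol{i}\in G_{M}^{N}$, are pairwise disjoint, are all contained in $B_{M}^{N}$, and partition $B_{M}^{N}$. For (i): $\mathcal{X}_{\infty}$ is complete by construction (it is defined as a completion), and $\mathcal{X}_{M}=\mathcal{D}^{M}$ is a finite-dimensional normed space, of dimension $\#G_{M}^{N}=p^{2NM}$, hence automatically Banach. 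For (ii): since at most one term in the sum defining $\boldsymbol{P}_{M}\varphi(x)$ is nonzero for each $x$, the function $\boldsymbol{P}_{M}\varphi$ equals $\varphi(\boldsymbol{i})$ on $\boldsymbol{i}+B_{-M}^{N}$ and vanishes off $B_{M}^{N}$; thus $\boldsymbol{P}_{M}\varphi\in\mathcal{D}^{M}=\mathcal{X}_{M}$, the map is visibly linear, and $\Vert\boldsymbol{P}_{M}\varphi\Vert_{\infty}=\max_{\boldsymbol{i}\in G_{M}^{N}}|\varphi(\boldsymbol{i})|\leq\Vert\varphi\Vert_{\infty}$, so $\Vert\boldsymbol{P}_{M}\Vert\leq 1$. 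For (iii): $\mathcal{D}^{M}\subset\mathcal{D}(\mathbb{Q}_{p}^{N})\subset\mathcal{X}_{\infty}$, and $\boldsymbol{E}_{M}$ is the inclusion, which is linear and norm-preserving since both spaces carry the same sup-norm. For (iv): writing $\varphi=\sum_{\boldsymbol{j}\in G_{M}^{N}}c_{\boldsymbol{j}}\Omega(p^{M}\Vert\,\cdot\,-\boldsymbol{j}\Vert_{p})\in\mathcal{D}^{M}$, distinct representatives of $G_{M}^{N}$ differ by more than $p^{-M}$ in norm, so $\Omega(p^{M}\Vert\boldsymbol{i}-\boldsymbol{j}\Vert_{p})=\delta_{\boldsymbol{i},\boldsymbol{j}}$ and therefore $\varphi(\boldsymbol{i})=c_{\boldsymbol{i}}$, whence $\boldsymbol{P}_{M}\boldsymbol{E}_{M}\varphi=\varphi$.

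The heart of the argument is (v), which I would prove in two steps. First, on the dense subspace $\mathcal{D}(\mathbb{Q}_{p}^{N})$ I claim the stronger statement that for each fixed test function $\psi$ one has $\boldsymbol{P}_{M}\psi=\psi$ for all sufficiently large $M$: if $\operatorname{supp}(\psi)\subseteq B_{M_{0}}^{N}$ and $\psi$ has parameter of local constancy $l(\psi)$, then for $M\geq\max\{M_{0},-l(\psi)\}$ each ball $\boldsymbol{i}+B_{-M}^{N}$ lies inside a ball of radius $p^{l(\psi)}$ on which $\psi$ is constant, so $\psi(x)=\psi(\boldsymbol{i})=\boldsymbol{P}_{M}\psi(x)$ for $x\in\boldsymbol{i}+B_{-M}^{N}$, while $\psi$ and $\boldsymbol{P}_{M}\psi$ both vanish outside $B_{M}^{N}$. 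Second, for an arbitrary $\varphi\in\mathcal{X}_{\infty}$ and $\varepsilon>0$, choose $\psi\in\mathcal{D}(\mathbb{Q}_{p}^{N})$ with $\Vert\varphi-\psi\Vert_{\infty}<\varepsilon/2$ — possible because $\mathcal{D}(\mathbb{Q}_{p}^{N})$ is dense in $\mathcal{X}_{\infty}=\mathcal{C}_{0}$, as recorded in the excerpt — and take $M$ large enough that $\boldsymbol{P}_{M}\psi=\psi$; then, using $\Vert\boldsymbol{P}_{M}\Vert\leq 1$ from (ii), $\Vert\varphi-\boldsymbol{P}_{M}\varphi\Vert_{\infty}\leq\Vert\varphi-\psi\Vert_{\infty}+\Vert\psi-\boldsymbol{P}_{M}\psi\Vert_{\infty}+\Vert\boldsymbol{P}_{M}(\psi-\varphi)\Vert_{\infty}<\varepsilon$. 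Finally, (vi) is immediate: $\boldsymbol{E}_{M}$ is the inclusion, so $\boldsymbol{E}_{M}\boldsymbol{P}_{M}\phi$ and $\boldsymbol{P}_{M}\phi$ are literally the same element of $\mathcal{X}_{\infty}$, and hence $\Vert\boldsymbol{E}_{M}\boldsymbol{P}_{M}\phi-\phi\Vert_{\infty}=\Vert\boldsymbol{P}_{M}\phi-\phi\Vert_{\infty}\to 0$ by (v).

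The only real obstacle is (v), and within it the only non-formal ingredient is the density of $\mathcal{D}(\mathbb{Q}_{p}^{N})$ in $\mathcal{C}_{0}$ combined with the elementary but crucial observation that a test function is reproduced \emph{exactly} by $\boldsymbol{P}_{M}$ once $M$ exceeds both its support radius and the reciprocal of its constancy parameter; the boundedness $\Vert\boldsymbol{P}_{M}\Vert\leq 1$ is then what makes the $\varepsilon/2$-approximation argument go through uniformly in $M$. Everything else is a direct unwinding of definitions.
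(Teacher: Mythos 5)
Your proof is correct. Note that the paper itself gives no argument for this lemma --- it simply cites Lemma 2 of \cite{Zuniga-Nonlinearity} after remarking that the result follows from the preceding observations --- and your treatment (definition-unwinding for (i)--(iv) and (vi), and for (v) the exact reproduction $\boldsymbol{P}_{M}\psi=\psi$ of a test function once $M$ exceeds its support radius and the negative of its constancy parameter, combined with density of $\mathcal{D}(\mathbb{Q}_{p}^{N})$ in $\mathcal{X}_{\infty}=\mathcal{C}_{0}$ and the uniform bound $\left\Vert \boldsymbol{P}_{M}\right\Vert \leq1$) is precisely the standard argument that citation relies on.
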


\begin{proposition}
\label{Prop2}Assume that $A(\Vert x\Vert_{p})$, $B(\Vert x-y\Vert_{p})$,
$U(x)$, $Z(x)\in\mathcal{X}_{M}$, $M\geq1$. Let $\tau$ be a fixed positive
real number. Consider the initial value problem:%
\begin{equation}
\left\{
\begin{array}
[c]{l}%
X\in C([0,\tau],\mathcal{X}_{M})\cap C^{1}([0,\tau],\mathcal{X}_{M})\\
\\%
\begin{array}
[c]{l}%
\frac{\partial X(x,t)}{\partial t}=-X(x,t)+%
{\displaystyle\int\limits_{\mathbb{Q}_{p}^{N}}}
A(\Vert x-y\Vert_{p})f(X(x,t))d^{N}y\\
+%
{\displaystyle\int\limits_{\mathbb{Q}_{p}^{N}}}
B(\Vert x-y\Vert_{p})U(y)d^{N}y+Z(x),\text{ }x\in B_{M}^{N}\text{, }t\geq0
\end{array}
\\
\\
X(x,0)=X_{0}\in\mathcal{X}_{M}.
\end{array}
\right.  \label{EQ_12}%
\end{equation}

There exists a unique $X\in C([0,\tau],\mathcal{X}_{M})$ which satisfies
\[
X(x,t)=e^{-t}X_{0}\left(  x\right)  +\int_{0}^{t}e^{-(t-s)}\boldsymbol{H}%
(X(x,s))ds
\]
where
\[
\boldsymbol{H}(X)(x,t)=%
{\displaystyle\int\limits_{\mathbb{Q}_{p}^{N}}}
A(\Vert x-y\Vert_{p})f(X(x,t))d^{N}y+%
{\displaystyle\int\limits_{\mathbb{Q}_{p}^{N}}}
B(\Vert x-y\Vert_{p})U(y)d^{N}y+Z(x).
\]
The function $X(x,t)$ is a solution of equation \ref{EQ_12} with initial datum
$X_{0}$.
\end{proposition}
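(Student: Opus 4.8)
The plan is to deduce Proposition \ref{Prop2} from Proposition \ref{Prop1} by showing that the unique solution furnished there never leaves the closed, finite-dimensional subspace $\mathcal{X}_{M}\subset\mathcal{X}_{\infty}$. Since $A(\Vert x\Vert_{p})\in\mathcal{X}_{M}$ is in particular a radial $L^{1}$-function and $U$, $Z\in\mathcal{X}_{M}\subset\mathcal{X}_{\infty}$, Proposition \ref{Prop1} applies and yields a unique $X\in C([0,\tau],\mathcal{X}_{\infty})$ with $X(x,0)=X_{0}$ satisfying the integral equation (\ref{eq:solution}) with $\boldsymbol{H}$ as in (\ref{eq:Op_H}); moreover this $X$ is differentiable in $t$ and solves (\ref{Continuous_CNN}). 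So the whole content of the proposition is the assertion $X(\cdot,t)\in\mathcal{X}_{M}$ for every $t\in[0,\tau]$, together with uniqueness inside $C([0,\tau],\mathcal{X}_{M})$.

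To get $X(\cdot,t)\in\mathcal{X}_{M}$ I would use the Picard iteration of Remark \ref{Nota1}: put $X_{1}(x,t)=X_{0}(x)$ and $X_{L+1}(x,t)=e^{-t}X_{0}(x)+\int_{0}^{t}e^{-(t-s)}\boldsymbol{H}(X_{L}(x,s))\,ds$, so that $X_{L}(\cdot,t)\to X(\cdot,t)$ in $\Vert\cdot\Vert_{\infty}$ for each $t\le\tau$. The key step — and the only nontrivial one — is that $\boldsymbol{H}(\mathcal{X}_{M})\subseteq\mathcal{X}_{M}$. This is the computation already carried out in the proof of Theorem \ref{Theorem0} with $M_{0}=M$: if $g=\sum_{\boldsymbol{i}\in G_{M}^{N}}c_{\boldsymbol{i}}\,\Omega(p^{M}\Vert x-\boldsymbol{i}\Vert_{p})$, then $f\circ g=\sum_{\boldsymbol{i}\in G_{M}^{N}}f(c_{\boldsymbol{i}})\,\Omega(p^{M}\Vert x-\boldsymbol{i}\Vert_{p})\in\mathcal{X}_{M}$ (it has the same support and parameter of constancy, using $f(0)=0$ off $B_{M}^{N}$), while for a radial $E$ supported in $p^{-M}\mathbb{Z}_{p}^{N}$ one has the explicit evaluation of
\[
\int_{\mathbb{Q}_{p}^{N}}E(\Vert x-y\Vert_{p})\,\Omega\!\left(p^{M}\Vert y-\boldsymbol{i}\Vert_{p}\right)d^{N}y ,
\]
which (by the ultrametric inequality) vanishes for $x\notin p^{-M}\mathbb{Z}_{p}^{N}$ and is constant on each ball $\boldsymbol{j}+p^{M}\mathbb{Z}_{p}^{N}$, $\boldsymbol{j}\in G_{M}^{N}$; hence the convolution of $E$ with a basis element of $\mathcal{D}^{M}$ again lies in $\mathcal{D}^{M}$. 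Applying this with $E=A$ and $E=B$, and using $U,Z\in\mathcal{X}_{M}$, gives $\boldsymbol{H}(\mathcal{X}_{M})\subseteq\mathcal{X}_{M}$; by induction each $X_{L}(\cdot,t)\in\mathcal{X}_{M}$, and since $\mathcal{X}_{M}$ is closed in $\mathcal{X}_{\infty}$ (it is finite-dimensional, isomorphic to $(\mathbb{R}^{\#G_{M}^{N}},\Vert\cdot\Vert_{\mathbb{R}})$) the limit satisfies $X(\cdot,t)\in\mathcal{X}_{M}$.

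For uniqueness in $C([0,\tau],\mathcal{X}_{M})$ I would simply note $C([0,\tau],\mathcal{X}_{M})\subseteq C([0,\tau],\mathcal{X}_{\infty})$, so it is inherited from Proposition \ref{Prop1}; alternatively one can rerun the contraction argument directly in $\mathcal{Y}=C([0,\tau],\mathcal{X}_{M})$, observing that $\boldsymbol{T}(Y)=e^{-t}X_{0}+\int_{0}^{t}e^{-(t-s)}\boldsymbol{H}(Y(\cdot,s))\,ds$ sends $\mathcal{Y}$ into $\mathcal{Y}$ (the integral of a continuous $\mathcal{X}_{M}$-valued map stays in the closed subspace $\mathcal{X}_{M}$, and $e^{-t}X_{0}\in\mathcal{X}_{M}$), and that Lemma \ref{Lemma3} gives $\Vert\boldsymbol{T}^{L}(Y)-\boldsymbol{T}^{L}(Y_{1})\Vert_{\infty}\le\frac{\tau^{L}L(f)^{L}\Vert A\Vert_{1}^{L}}{L!}\Vert Y-Y_{1}\Vert_{\infty}$, so some iterate of $\boldsymbol{T}$ is a contraction. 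Finally, since the right-hand side of (\ref{eq:solution}) is differentiable in $t$ with values in $\mathcal{X}_{M}$, one concludes $X\in C([0,\tau],\mathcal{X}_{M})\cap C^{1}([0,\tau],\mathcal{X}_{M})$ and that $X$ solves (\ref{EQ_12}). The main obstacle is thus the stability $\boldsymbol{H}(\mathcal{X}_{M})\subseteq\mathcal{X}_{M}$, i.e. the explicit convolution formula above; everything else is a verbatim repetition of Proposition \ref{Prop1} and of the argument in Theorem \ref{Theorem0}. (One should also flag the harmless typo in (\ref{EQ_12}): the integrand written $f(X(x,t))$ should read $f(X(y,t))$, which is what the operator $\boldsymbol{H}$ in (\ref{eq:Op_H}) already encodes.)
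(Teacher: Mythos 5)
Your proposal is correct and follows essentially the same route as the paper, which simply establishes Proposition \ref{Prop2} by the argument given in the proof of Theorem \ref{Theorem0}: Picard iteration, invariance of $\mathcal{X}_{M}$ under $\boldsymbol{H}$ via the explicit evaluation of the convolution of a radial kernel supported in $p^{-M}\mathbb{Z}_{p}^{N}$ against the basis functions $\Omega(p^{M}\Vert y-\boldsymbol{i}\Vert_{p})$, and closedness of $\mathcal{X}_{M}$ in $\mathcal{X}_{\infty}$. Your additional remarks (the role of $f(0)=0$, uniqueness inherited from Proposition \ref{Prop1}, and the typo $f(X(x,t))$ versus $f(X(y,t))$ in (\ref{EQ_12})) are accurate and only make the argument more explicit.
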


\begin{proof}
The result is established by using the argument given in the proof of Theorem
\ref{Theorem0}.
\end{proof}

By the discussion presented in section \ref{Section_Discretization},
(\ref{EQ_12}) describes a $p$-adic discrete CNN. Furthermore, Theorem
\ref{Theorem1} is also valid for discrete CNN in $\mathcal{X}_{M}$.

\begin{remark}
\label{Nota2}By using the discretization procedure given in Section
\ref{Section_Discretization} and in the proof of Theorem \ref{Theorem0},
Proposition \ref{Prop2} implies that the initial value problem
\[
\left\{
\begin{array}
[c]{l}%
X_{M}\in C([0,\tau],\mathcal{X}_{M})\cap C^{1}([0,\tau],\mathcal{X}_{M})\\
\\
\frac{\partial X_{M}}{\partial t}=-X_{M}+\boldsymbol{P}_{M}\boldsymbol{H}%
(\boldsymbol{E}_{M}X_{M})\\
\\
X_{M}(0)=\boldsymbol{P}_{M}(X_{0})
\end{array}
\right.
\]
has a unique solution for an arbitrary $\tau>0$.
\end{remark}

\begin{theorem}
\label{Theorem1A}All the states $X(\boldsymbol{i},t)$, $\boldsymbol{i}\in
G_{M}^{N}$, in a $p$-adic discrete CNN are bounded for all time $t\geq0$. More
precisely, if
\begin{multline*}
X_{\max}:=\max_{\boldsymbol{i}\in G_{M}^{N}}\left\vert X_{0}(\boldsymbol{i}%
)\right\vert +p^{-MN}\left(  \max_{\boldsymbol{i}\in G_{M}^{N}}\left\vert
f(\boldsymbol{i})\right\vert \right)
{\displaystyle\sum\limits_{\boldsymbol{i}\in G_{M}^{N}}}
\left\vert A\left(  \boldsymbol{i}\right)  \right\vert \\
+p^{-MN}\left(  \max_{\boldsymbol{i}\in G_{M}^{N}}\left\vert U(\boldsymbol{i}%
)\right\vert \right)
{\displaystyle\sum\limits_{\boldsymbol{i}\in G_{M}^{N}}}
\left\vert A\left(  \boldsymbol{i}\right)  \right\vert +\max_{\boldsymbol{i}%
\in G_{M}^{N}}\left\vert Z(\boldsymbol{i})\right\vert ,
\end{multline*}
then
\[
|X(\boldsymbol{i},t)|\leq X_{\max}\text{ for all }t\geq0\text{ and for all
}\boldsymbol{i}\in G_{M}^{N}.
\]

In addition%
\[
X_{-}\left(  \boldsymbol{i}\right)  :=\lim\inf_{t\rightarrow\infty
}X(\boldsymbol{i},t)\leq X(\boldsymbol{i},t)\leq\lim\sup_{t\rightarrow\infty
}X(\boldsymbol{i},t)=:X_{+}\left(  \boldsymbol{i}\right)  ,
\]
for $\boldsymbol{i}\in G_{M}^{N}$. If $X_{-}\left(  \boldsymbol{i}\right)
=X_{+}\left(  \boldsymbol{i}\right)  :=X^{\ast}(\boldsymbol{i})$, then
\begin{align*}
X^{\ast}(\boldsymbol{i})  &  =%
{\displaystyle\sum\limits_{\boldsymbol{j}\in G_{M}^{N}}}
p^{-MN}A(\Vert\boldsymbol{i}-\boldsymbol{j}\Vert_{p})f(X^{\ast}(\boldsymbol{i}%
))\\
&  +%
{\displaystyle\sum\limits_{\boldsymbol{j}\in G_{M}^{N}}}
p^{-MN}B(\Vert\boldsymbol{i}-\boldsymbol{j}\Vert_{p})U(\boldsymbol{j}%
)+Z(\boldsymbol{i})\text{, for }\boldsymbol{i}\in G_{M}^{N}\text{,}%
\end{align*}
and
\begin{multline*}
X^{\ast}(\boldsymbol{i})\geq-p^{-MN}\left(  \max_{\boldsymbol{i}\in G_{M}^{N}%
}\left\vert f(\boldsymbol{i})\right\vert \right)
{\displaystyle\sum\limits_{\boldsymbol{i}\in G_{M}^{N}}}
\left\vert A\left(  \boldsymbol{i}\right)  \right\vert \\
-p^{-MN}\left(  \max_{\boldsymbol{i}\in G_{M}^{N}}\left\vert U(\boldsymbol{i}%
)\right\vert \right)
{\displaystyle\sum\limits_{\boldsymbol{i}\in G_{M}^{N}}}
\left\vert A\left(  \boldsymbol{i}\right)  \right\vert +Z(\boldsymbol{i}%
)\text{ , for all }\boldsymbol{i}\in G_{M}^{N}.
\end{multline*}

\end{theorem}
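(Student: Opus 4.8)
The plan is to obtain Theorem~\ref{Theorem1A} as the specialization of Theorem~\ref{Theorem1} to the finite-dimensional setting. By the discretization described in Section~\ref{Section_Discretization}, a $p$-adic discrete CNN$_{d}(\mathbb{A},\mathbb{B},U,Z)$ satisfying (\ref{Hyp_1}) is nothing but the initial value problem (\ref{EQ_12}) attached to a $p$-adic continuous CNN whose kernels $A(\Vert x\Vert_{p})$, $B(\Vert x\Vert_{p})$ and whose data $U$, $Z$, $X_{0}$ all lie in $\mathcal{X}_{M}$, under the identification $\mathbb{A}(\boldsymbol{i},\boldsymbol{j})=p^{-MN}A(\Vert\boldsymbol{i}-\boldsymbol{j}\Vert_{p})$, $\mathbb{B}(\boldsymbol{i},\boldsymbol{j})=p^{-MN}B(\Vert\boldsymbol{i}-\boldsymbol{j}\Vert_{p})$. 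By Proposition~\ref{Prop2} this problem has a unique solution $X(\cdot,t)\in\mathcal{X}_{M}$ for every $\tau>0$, hence a global one, and since $\mathcal{X}_{M}$ is a closed subspace of $\mathcal{X}_{\infty}$ (Lemma~\ref{Lemma_Condition_A}) this is in particular a solution in $\mathcal{X}_{\infty}$. Therefore Theorem~\ref{Theorem1} applies to it word for word, and the only remaining task is to translate the norms appearing there into the discrete data.

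For that translation I would use two elementary identities valid for $\varphi\in\mathcal{X}_{M}$: first $\Vert\varphi\Vert_{\infty}=\max_{\boldsymbol{i}\in G_{M}^{N}}\vert\varphi(\boldsymbol{i})\vert$, and second, for a radial $E\in\mathcal{X}_{M}$, $\Vert E\Vert_{1}=\int_{\mathbb{Q}_{p}^{N}}\vert E(\Vert x\Vert_{p})\vert\,d^{N}x=p^{-MN}\sum_{\boldsymbol{i}\in G_{M}^{N}}\vert E(\boldsymbol{i})\vert$, the last step because each ball $\boldsymbol{i}+B_{-M}^{N}$ has volume $p^{-MN}$. Applying these to $X_{0}$, $U$, $Z$, $A$, $B$ turns the $X_{\max}$ of Theorem~\ref{Theorem1} into the one stated here (with $\Vert f\Vert_{\infty}$ in the role of $\max_{\boldsymbol{i}}\vert f(\boldsymbol{i})\vert$ and $\sum_{\boldsymbol{i}}\vert B(\boldsymbol{i})\vert$ in the $U$-term), and turns the stationary-state identity and the lower bound (\ref{Stat_Solution_2}) into the displayed discrete formulas after replacing $\Vert A\Vert_{1}$ by $p^{-MN}\sum_{\boldsymbol{i}}\vert A(\boldsymbol{i})\vert$, and similarly for $B$.

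A self-contained argument is essentially as short. The discrete state equations form a finite ODE system $\frac{d}{dt}X(\boldsymbol{i},t)=-X(\boldsymbol{i},t)+H(X)(\boldsymbol{i},t)$ whose nonlinearity is bounded because $f$ is, so the variation-of-parameters formula $X(\boldsymbol{i},t)=e^{-t}X_{0}(\boldsymbol{i})+\int_{0}^{t}e^{-(t-s)}H(X)(\boldsymbol{i},s)\,ds$ holds; bounding $\vert f\vert\le\Vert f\Vert_{\infty}$ and using that $\sum_{\boldsymbol{j}\in G_{M}^{N}}\vert\mathbb{A}(\boldsymbol{i},\boldsymbol{j})\vert$ is independent of $\boldsymbol{i}$ (translation invariance of $\Vert\cdot\Vert_{p}$ on the group $G_{M}^{N}$) gives $\vert H(X)(\boldsymbol{i},s)\vert\le C$, with $C$ equal to the last three summands of $X_{\max}$, whence $\vert X(\boldsymbol{i},t)\vert\le e^{-t}\max_{\boldsymbol{i}}\vert X_{0}(\boldsymbol{i})\vert+C(1-e^{-t})\le X_{\max}$ for all $t\ge0$. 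From this uniform bound the quantities $X_{\pm}(\boldsymbol{i})$ exist and sandwich $X(\boldsymbol{i},t)$; if $X_{-}(\boldsymbol{i})=X_{+}(\boldsymbol{i})=:X^{\ast}(\boldsymbol{i})$ for all $\boldsymbol{i}$ one lets $t\to\infty$ in the integral equation exactly as in the proof of Theorem~\ref{Theorem1} (write the integral as $\int_{0}^{\infty}1_{[0,t]}(u)e^{-u}H(X)(\boldsymbol{i},t-u)\,du$, dominate by $Ce^{-u}$, and use the continuity of $f$) to obtain the fixed-point equation for $X^{\ast}$, and then $f(X^{\ast}(\boldsymbol{j}))\ge-\Vert f\Vert_{\infty}$ and $U(\boldsymbol{j})\ge-\max_{\boldsymbol{j}}\vert U(\boldsymbol{j})\vert$ give the stated lower bound.

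I do not anticipate a genuine obstacle; the only points that call for a little care are the translation-invariance step that decouples $\sum_{\boldsymbol{j}}\vert\mathbb{A}(\boldsymbol{i},\boldsymbol{j})\vert$ from $\boldsymbol{i}$ and identifies it with the continuous $\Vert A\Vert_{1}$, and keeping the bookkeeping of the constants consistent with Theorem~\ref{Theorem1}.
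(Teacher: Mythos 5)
Your proposal is correct and follows essentially the same route as the paper, which treats Theorem \ref{Theorem1A} as the specialization of Theorem \ref{Theorem1} to discrete CNNs in $\mathcal{X}_{M}$ via the discretization of Section \ref{Section_Discretization} and Proposition \ref{Prop2}, with the norm translations $\Vert\varphi\Vert_{\infty}=\max_{\boldsymbol{i}\in G_{M}^{N}}|\varphi(\boldsymbol{i})|$ and $\Vert E\Vert_{1}=p^{-MN}\sum_{\boldsymbol{i}\in G_{M}^{N}}|E(\boldsymbol{i})|$ for radial $E\in\mathcal{X}_{M}$. Your reading of the constants (the $U$-term carrying $\sum_{\boldsymbol{i}}|B(\boldsymbol{i})|$ and $\Vert f\Vert_{\infty}$ in place of $\max_{\boldsymbol{i}}|f(\boldsymbol{i})|$) is exactly what the specialization of Theorem \ref{Theorem1} yields, so no further argument is needed.
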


\begin{theorem}
\label{Theorem2} Let $X$ be the solution of a continuous $p$-adic CNN given by
Theorem \ref{Prop1} with initial condition $X_{0}$. Let $X_{M}$ be the
solution of the Cauchy problem
\begin{equation}
\left\{
\begin{array}
[c]{l}%
\frac{dX_{M}}{dt}=-X_{M}+\boldsymbol{P}_{M}\boldsymbol{H}(\boldsymbol{E}%
_{M}X_{M})\\
X_{M}(0)=\boldsymbol{P}_{M}(X_{0}),
\end{array}
\right.  \label{eq:discrete cauchy}%
\end{equation}
cf. Proposition \ref{Prop2} and Remark \ref{Nota2}. Then
\[
\lim_{M\rightarrow\infty}\sup_{0\leq t\leq\tau}\Vert X_{M}(t)-X(t)\Vert
_{\infty}=0.
\]

\end{theorem}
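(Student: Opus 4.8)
The plan is to follow the classical ``consistency plus stability implies convergence'' scheme for semilinear evolution equations, adapted to the approximating subspaces $\mathcal{X}_{M}$. First I would rewrite both solutions in their mild (integral) form. By Proposition \ref{Prop1}, the continuous solution satisfies
\[
X(t)=e^{-t}X_{0}+\int_{0}^{t}e^{-(t-s)}\boldsymbol{H}(X(s))\,ds ,
\]
while, applying the isometric embedding $\boldsymbol{E}_{M}$ to the Cauchy problem (\ref{eq:discrete cauchy}) and using Proposition \ref{Prop2} together with Remark \ref{Nota2}, the function $\widetilde{X}_{M}:=\boldsymbol{E}_{M}X_{M}\in C([0,\tau],\mathcal{X}_{\infty})$ satisfies
\[
\widetilde{X}_{M}(t)=e^{-t}\boldsymbol{E}_{M}\boldsymbol{P}_{M}X_{0}+\int_{0}^{t}e^{-(t-s)}\boldsymbol{E}_{M}\boldsymbol{P}_{M}\boldsymbol{H}(\widetilde{X}_{M}(s))\,ds .
\]
Since $\boldsymbol{E}_{M}$ is an isometry (Lemma \ref{Lemma_Condition_A}(iii)), $\|X_{M}(t)-X(t)\|_{\infty}=\|\widetilde{X}_{M}(t)-X(t)\|_{\infty}$, so it suffices to estimate the latter.

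Subtracting the two identities and inserting $\pm\,\boldsymbol{E}_{M}\boldsymbol{P}_{M}\boldsymbol{H}(X(s))$ inside the integrand, I would split the error into three pieces: the initial-datum error $e^{-t}(\boldsymbol{E}_{M}\boldsymbol{P}_{M}X_{0}-X_{0})$, the ``consistency'' term $\int_{0}^{t}e^{-(t-s)}\big(\boldsymbol{E}_{M}\boldsymbol{P}_{M}\boldsymbol{H}(X(s))-\boldsymbol{H}(X(s))\big)\,ds$, and the ``stability'' term $\int_{0}^{t}e^{-(t-s)}\boldsymbol{E}_{M}\boldsymbol{P}_{M}\big(\boldsymbol{H}(\widetilde{X}_{M}(s))-\boldsymbol{H}(X(s))\big)\,ds$. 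Using $\|\boldsymbol{E}_{M}\boldsymbol{P}_{M}\varphi\|_{\infty}\le\|\varphi\|_{\infty}$ (Lemma \ref{Lemma_Condition_A}(ii)--(iii)) and the Lipschitz estimate $\|\boldsymbol{H}(g)-\boldsymbol{H}(g')\|_{\infty}\le L(f)\|A\|_{1}\|g-g'\|_{\infty}$ of Lemma \ref{Lemma3}, the stability term is bounded by $L(f)\|A\|_{1}\int_{0}^{t}\|\widetilde{X}_{M}(s)-X(s)\|_{\infty}\,ds$. Writing $\delta_{M}:=\|\boldsymbol{E}_{M}\boldsymbol{P}_{M}X_{0}-X_{0}\|_{\infty}$ and $\varepsilon_{M}:=\sup_{0\le s\le\tau}\|\boldsymbol{E}_{M}\boldsymbol{P}_{M}\boldsymbol{H}(X(s))-\boldsymbol{H}(X(s))\|_{\infty}$, one obtains
\[
\|\widetilde{X}_{M}(t)-X(t)\|_{\infty}\le\delta_{M}+\varepsilon_{M}\tau+L(f)\|A\|_{1}\int_{0}^{t}\|\widetilde{X}_{M}(s)-X(s)\|_{\infty}\,ds ,
\]
and Gronwall's inequality (as in \cite[Theorem 5.1.1]{Milan}) gives $\sup_{0\le t\le\tau}\|\widetilde{X}_{M}(t)-X(t)\|_{\infty}\le(\delta_{M}+\varepsilon_{M}\tau)\,e^{L(f)\|A\|_{1}\tau}$.

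It then remains to show $\delta_{M}\to 0$ and $\varepsilon_{M}\to 0$ as $M\to\infty$. The first is immediate from Lemma \ref{Lemma_Condition_A}(vi). The second --- which I expect to be the main obstacle, since Lemma \ref{Lemma_Condition_A}(vi) only provides \emph{pointwise} convergence of $\boldsymbol{E}_{M}\boldsymbol{P}_{M}$ to the identity --- requires making that convergence uniform in $s\in[0,\tau]$. For this I would note that $s\mapsto X(s)$ is continuous from $[0,\tau]$ into $\mathcal{X}_{\infty}$ (Proposition \ref{Prop1}) and that $\boldsymbol{H}$ is Lipschitz (Lemma \ref{Lemma3}), so $s\mapsto\boldsymbol{H}(X(s))$ is continuous and its image $K$ is a compact subset of $\mathcal{X}_{\infty}$. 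Since the operators $\boldsymbol{E}_{M}\boldsymbol{P}_{M}$ are uniformly bounded (by $1$) and converge pointwise to the identity on $\mathcal{X}_{\infty}$, a standard equicontinuity argument --- cover $K$ by finitely many $\eta$-balls, control the error at the centers, and dominate the remainder by $2\eta$ --- yields $\sup_{g\in K}\|\boldsymbol{E}_{M}\boldsymbol{P}_{M}g-g\|_{\infty}\to 0$, i.e. $\varepsilon_{M}\to 0$. Combining this with the Gronwall bound gives $\lim_{M\to\infty}\sup_{0\le t\le\tau}\|X_{M}(t)-X(t)\|_{\infty}=0$, as claimed.
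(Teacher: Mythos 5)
Your proof is correct and follows essentially the same route the paper relies on: the paper's proof simply cites the standard approximation theorem for semilinear evolution equations (\cite[Theorem 5.4.7]{Milan}, together with Lemma \ref{Lemma_Condition_A} and Propositions \ref{Prop1}, \ref{Prop2}), whose underlying argument is exactly your consistency--stability--Gronwall scheme. Your explicit handling of the uniform convergence of $\boldsymbol{E}_{M}\boldsymbol{P}_{M}$ on the compact set $\left\{ \boldsymbol{H}(X(s)):s\in[0,\tau]\right\}$ correctly supplies the one nontrivial step that the citation leaves implicit.
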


\begin{proof}
The result follows from Lemma \ref{Lemma_Condition_A}, Propositions
\ref{Prop1}, \ref{Prop2}, by using standard techniques of approximation for
evolution equations, see e.g. \cite[Theorem 5.4.7]{Milan}. See also
\cite[Section 9.1 and Theorem 7]{Zuniga-Nonlinearity} for an in-depth
discussion of similar matters.
\end{proof}

\section{Numerical Simulations of $p$-Adic Continuous CNNs}

In this section we present some numerical simulations of the solutions of
several $p$-adic continuous $\text{CNN}$s in dimension $1$. We give two
numerical schemes for the numerical approximation of the solutions.

\subsection{Numerical Scheme A}

\begin{lemma}
\label{lemma: approx radial} Let $H(|\cdot|_{p})\in L^{1}(\mathbb{Q}_{p})$ and
let $g\in\mathcal{X}_{\infty}$. We set $G_{k}=p^{-k}\mathbb{Z}_{p}%
/p^{k}\mathbb{Z}_{p}$, $k\in\mathbb{N}$. Then {\small
\[%
{\displaystyle\int\limits_{\mathbb{Q}_{p}}}
H(|x-y|)g(y)dy=\lim_{k\rightarrow\infty}\sum_{\boldsymbol{i}\in G_{k};\text{
}\boldsymbol{i}\neq x}g(\boldsymbol{i})p^{-k}H(|x-\boldsymbol{i}%
|_{p})+g(x)(1-p^{-1})\sum_{l=k}^{\infty}H(p^{-l})p^{-l}.
\]
}
\end{lemma}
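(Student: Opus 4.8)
The plan is to reduce the integral over $\mathbb{Q}_p$ to a sum of integrals over the cosets of $p^k\mathbb{Z}_p$ inside the ball $p^{-k}\mathbb{Z}_p$, plus a tail integral over $\mathbb{Q}_p \setminus p^{-k}\mathbb{Z}_p$, and to show the tail vanishes while each coset integral converges to the claimed value. First I would write, for fixed $k$,
\[
\int_{\mathbb{Q}_p} H(|x-y|_p) g(y)\,dy
= \sum_{\boldsymbol{i}\in G_k} \int_{\boldsymbol{i}+p^k\mathbb{Z}_p} H(|x-y|_p) g(y)\,dy
+ \int_{\mathbb{Q}_p\setminus p^{-k}\mathbb{Z}_p} H(|x-y|_p)g(y)\,dy,
\]
where I use that $p^{-k}\mathbb{Z}_p$ is the disjoint union of the balls $\boldsymbol{i}+p^k\mathbb{Z}_p$ as $\boldsymbol{i}$ ranges over representatives of $G_k = p^{-k}\mathbb{Z}_p/p^k\mathbb{Z}_p$. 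For the last (tail) term, since $H(|\cdot|_p)\in L^1$ and $g\in\mathcal{X}_\infty$ is bounded, $|H(|x-y|_p)g(y)| \le \|g\|_\infty |H(|x-y|_p)|$, and $\mathbf{1}_{\mathbb{Q}_p\setminus p^{-k}\mathbb{Z}_p}(y)\to 0$ pointwise as $k\to\infty$ with $p^{-k}\mathbb{Z}_p$ eventually containing any fixed point, so dominated convergence kills this term.

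Next I would analyze the coset integrals, splitting off the coset containing $x$. For $\boldsymbol{i}$ with $\boldsymbol{i}\not\equiv x \pmod{p^k\mathbb{Z}_p}$ (i.e. the ball $\boldsymbol{i}+p^k\mathbb{Z}_p$ does not contain $x$), the ultrametric inequality gives $|x-y|_p = |x-\boldsymbol{i}|_p$ for all $y\in \boldsymbol{i}+p^k\mathbb{Z}_p$ — because $|y-\boldsymbol{i}|_p \le p^{-k} < |x-\boldsymbol{i}|_p$ — so $H$ is constant on that ball and
\[
\int_{\boldsymbol{i}+p^k\mathbb{Z}_p} H(|x-y|_p) g(y)\,dy = H(|x-\boldsymbol{i}|_p)\int_{\boldsymbol{i}+p^k\mathbb{Z}_p} g(y)\,dy.
\]
Here I would replace $\int_{\boldsymbol{i}+p^k\mathbb{Z}_p} g(y)\,dy$ by $p^{-k} g(\boldsymbol{i})$ up to an error controlled by the modulus of continuity of $g$: $|\int_{\boldsymbol{i}+p^k\mathbb{Z}_p}(g(y)-g(\boldsymbol{i}))\,dy| \le p^{-k}\sup_{|y-\boldsymbol{i}|_p\le p^{-k}}|g(y)-g(\boldsymbol{i})|$. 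Summing over $\boldsymbol{i}$, the total error is bounded by (roughly) $\|H\|_1$ times a uniform modulus-of-continuity bound for $g$ on balls of radius $p^{-k}$; since $g$ is uniformly continuous (as an element of $\mathcal{X}_\infty = \mathcal{C}_0$, using that $g$ vanishes at infinity plus continuity on a compact-at-infinity space), this error tends to $0$ as $k\to\infty$, giving the main sum $\sum_{\boldsymbol{i}\in G_k,\ \boldsymbol{i}\neq x} g(\boldsymbol{i}) p^{-k} H(|x-\boldsymbol{i}|_p)$ in the limit.

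Finally, for the single cost containing $x$, I would write $B := x + p^k\mathbb{Z}_p$ and compute $\int_B H(|x-y|_p) g(y)\,dy$. On $B$, $|x-y|_p$ takes values $p^{-l}$ for $l\ge k$, with $\{y\in B: |x-y|_p = p^{-l}\}$ having measure $p^{-l}(1-p^{-1})$; replacing $g(y)$ by $g(x)$ on $B$ (again with error controlled by the modulus of continuity of $g$, hence $\to 0$) yields
\[
\int_B H(|x-y|_p) g(y)\,dy \approx g(x)\sum_{l=k}^{\infty} H(p^{-l})\,(1-p^{-1})p^{-l},
\]
which is precisely the second term in the statement; note this series converges absolutely because $\sum_l |H(p^{-l})|(1-p^{-1})p^{-l} \le \|H\|_1 < \infty$. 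The main obstacle is the uniform control of the approximation errors $\int_B(g(y)-g(\boldsymbol{i}))\,dy$ summed over all $\boldsymbol{i}\in G_k$: one must ensure the aggregate error vanishes, which requires combining the $L^1$-integrability of $H$ with a uniform (not merely pointwise) modulus of continuity of $g$. This is where the hypothesis $g\in\mathcal{X}_\infty$ (equivalently $g\in\mathcal{C}_0$, uniformly continuous) is essential, and handling it cleanly — e.g. by an $\varepsilon/2$ split into a large ball where $g$ is uniformly continuous and a tail where $g$ is small — is the technical heart of the argument.
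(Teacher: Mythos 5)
Your argument is correct, but it is organized differently from the paper's. The paper does not decompose the domain at all: it replaces $g$ by its discretization $\boldsymbol{P}_{k}g=\sum_{\boldsymbol{i}\in G_{k}}g(\boldsymbol{i})\Omega(p^{k}|\cdot-\boldsymbol{i}|_{p})$, which converges to $g$ uniformly (Lemma \ref{Lemma_Condition_A}-(v)), and then applies dominated convergence once to the integrand $H(|x-y|_{p})\boldsymbol{P}_{k}g(y)$, dominated by $\Vert g\Vert_{\infty}|H(|x-y|_{p})|$. This reduces everything to the exact evaluation of $\int_{x-\boldsymbol{i}+p^{k}\mathbb{Z}_{p}}H(|z|_{p})\,dz$, which equals $p^{-k}H(|x-\boldsymbol{i}|_{p})$ when $|x-\boldsymbol{i}|_{p}>p^{-k}$ and $\sum_{l\geq k}H(p^{-l})(1-p^{-1})p^{-l}$ otherwise --- the same two computations you perform. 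Your route instead splits the integral into the cosets of $p^{k}\mathbb{Z}_{p}$ inside $p^{-k}\mathbb{Z}_{p}$ plus a tail, kills the tail by dominated convergence, and controls the replacement of $\int_{\boldsymbol{i}+p^{k}\mathbb{Z}_{p}}g$ by $p^{-k}g(\boldsymbol{i})$ through an explicit aggregate bound $\Vert H\Vert_{1}\,\omega_{g}(p^{-k})$, which requires the uniform continuity of $g\in\mathcal{C}_{0}$; the step you call the technical heart is indeed the right estimate, and uniform continuity is immediate here since $g$ is a uniform limit of test functions, each locally constant with compact support. What the paper's approach buys is economy: feeding the uniform convergence of $\boldsymbol{P}_{k}g$ into the dominated convergence theorem absorbs both your tail term (the discretization is automatically supported in $p^{-k}\mathbb{Z}_{p}$) and all modulus-of-continuity bookkeeping; what your approach buys is a self-contained, more elementary argument that does not invoke the projection operators $\boldsymbol{P}_{k}$ or the cited approximation lemma.
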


\begin{proof}
By Lemma \ref{Lemma_Condition_A}-(v), $\lim_{k\rightarrow\infty}%
\sum_{\boldsymbol{i}\in G_{k}}g(\boldsymbol{i})\Omega(p^{k}|x-\boldsymbol{i}%
|_{p})=g(x)$, now by the dominated convergence theorem,
\begin{align*}%
{\displaystyle\int\limits_{\mathbb{Q}_{p}}}
H(|x-y|_{p})g(y)dy  &  =\lim_{k\rightarrow\infty}\sum_{\boldsymbol{i}\in
G_{k}}g(\boldsymbol{i})%
{\displaystyle\int\limits_{\mathbb{Q}_{p}}}
H(|x-y|_{p})\Omega(p^{k}|y-\boldsymbol{i}|_{p})dy\\
&  =\lim_{k\rightarrow\infty}\sum_{\boldsymbol{i}\in G_{k}}g(\boldsymbol{i})%
{\displaystyle\int\limits_{x-\boldsymbol{i}+p^{k}\mathbb{Z}_{p}}}
H(|z|_{p})dz.
\end{align*}
Now, if $|x-\boldsymbol{i}|_{p}>p^{-k}$, i.e. $x\neq\boldsymbol{i}$ \ in
$G_{k}$, then
\[%
{\displaystyle\int\limits_{x-\boldsymbol{i}+p^{k}\mathbb{Z}_{p}}}
H(|z|_{p})dz=p^{-k}H(|x-\boldsymbol{i}|_{p}).
\]
And if $|x-\boldsymbol{i}|_{p}\leq p^{-k}$, i.e. $x=\boldsymbol{i}$ \ in
$G_{k}$, then
\[%
{\displaystyle\int\limits_{x-\boldsymbol{i}+p^{k}\mathbb{Z}_{p}}}
H(|z|_{p})dz=\sum_{l=k}^{\infty}H(p^{-l})(1-p^{-1})p^{-l}.
\]

\end{proof}

We now assume that $A$, $B$ are radial integrable functions, \ and that $U$,
$Z$, $X_{0}\in\mathcal{X}_{\infty}$. \ Based on the continuity of operators
$\boldsymbol{A},\boldsymbol{B}:\mathcal{X}_{\infty}\rightarrow\mathcal{X}%
_{\infty}$ and the formula given in Lemma \ref{lemma: approx radial}, we can
approximate the solution $X(x,t)$ of a $p$-adic continuous CNN$\left(
A,B,U,Z\right)  $ by $p^{2k}$ ODEs, $k\geq1$, of the form%
\begin{multline*}
\frac{d}{dx}X(\boldsymbol{i},t)=-X(\boldsymbol{i},t)+\sum_{\boldsymbol{j}\in
G_{k};\text{ }\boldsymbol{j}\neq\boldsymbol{i}}f(X(\boldsymbol{j}%
,t))p^{-k}A(|\boldsymbol{i}-\boldsymbol{j}|_{p})+\\
f(X(\boldsymbol{i},t))(1-p^{-1})\sum_{l=k}^{k_{\max}}A(p^{-l})p^{-l}%
+\sum_{\boldsymbol{j}\in G_{k};\text{ }\boldsymbol{j}\neq\boldsymbol{i}%
}U(\boldsymbol{i})p^{-k}B(|\boldsymbol{i}-\boldsymbol{j}|_{p})\\
+U(\boldsymbol{j})(1-p^{-1})\sum_{l=k}^{k_{\max}}B(p^{-l})p^{-l}%
+Z(\boldsymbol{i})\text{, for }\boldsymbol{i}\in G_{k}.
\end{multline*}
In the simulations the parameters $k$, $k_{max}$ were chosen by trial and
error on a case by case approach. The sum $\sum_{l=k}^{k_{\max}}%
A(p^{-l})p^{-l}$ can be approximated by $A(p^{-k})p^{-k}$ in the cases were
$A(p^{-k})p^{-k}$ is the dominant term in $\sum_{l=k}^{k_{\max}}$
$A(p^{-l})p^{-l}$.

\subsection{Numerical Scheme B}

\begin{lemma}
\label{lemma: approx test} Let $H(x)=\sum_{l=0}^{m}H_{l}\Omega(p^{k_{l}%
}|x-b_{l}|_{p})$ be a test function and let $g\in\mathcal{X}_{\infty}$. Take
$G_{k}=p^{-k}\mathbb{Z}_{p}/p^{k}\mathbb{Z}_{p}$, $k\in\mathbb{N}$, as before.
Then
\begin{gather*}%
{\displaystyle\int\limits_{\mathbb{Q}_{p}}}
H(x-y)g(y)dy=\lim_{k\rightarrow\infty}\sum_{\boldsymbol{i}\in G_{k}%
}g(\boldsymbol{i})\sum_{l=0}^{m}H_{l}%
{\displaystyle\int\limits_{\mathbb{Q}_{p}}}
\Omega(p^{k_{l}}|x-\boldsymbol{i}-b_{l}-y|_{p})\Omega(p^{k}|y|_{p})dy\\
=\lim_{k\rightarrow\infty}\sum_{\boldsymbol{i}\in G_{k}}g(\boldsymbol{i}%
)\sum_{l=0}^{m}H_{l}p^{\min(-k,-k_{l})}\Omega\left(  p^{-\max(-k,-k_{l}%
)}|x-\boldsymbol{i}-b_{l}|_{p}\right) \\
=\lim_{k\rightarrow\infty}\sum_{\boldsymbol{i}\in G_{k}}g(\boldsymbol{i}%
)\sum_{l=0}^{m}H_{l}p^{-\max(k,k_{l})}\Omega\left(  p^{\min(k,k_{l}%
)}|x-\boldsymbol{i}-b_{l}|_{p}\right)  .
\end{gather*}

\end{lemma}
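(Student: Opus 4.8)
The plan is to follow the scheme used for Lemma~\ref{lemma: approx radial}, replacing the radial kernel there by the test function $H$, and then to carry out explicitly the resulting convolution of two indicator functions of $p$-adic balls. First, I would apply Lemma~\ref{Lemma_Condition_A} with $N=1$ and $M=k$: part (v) gives $\|g-\boldsymbol{P}_{k}g\|_{\infty}\to 0$ as $k\to\infty$, where $\boldsymbol{P}_{k}g(y)=\sum_{\boldsymbol{i}\in G_{k}}g(\boldsymbol{i})\Omega(p^{k}|y-\boldsymbol{i}|_{p})$, and part (ii) gives the uniform bound $\|\boldsymbol{P}_{k}g\|_{\infty}\le\|g\|_{\infty}$. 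Since $H$ is a test function it lies in $L^{1}(\mathbb{Q}_{p})$, so for each fixed $x$
\[
\left|\int_{\mathbb{Q}_{p}}H(x-y)\bigl(\boldsymbol{P}_{k}g-g\bigr)(y)\,dy\right|\le\|H\|_{1}\,\|\boldsymbol{P}_{k}g-g\|_{\infty}\xrightarrow[k\to\infty]{}0 ,
\]
and, since the sum over $G_{k}$ is finite, this yields
\[
\int_{\mathbb{Q}_{p}}H(x-y)g(y)\,dy=\lim_{k\to\infty}\sum_{\boldsymbol{i}\in G_{k}}g(\boldsymbol{i})\int_{\mathbb{Q}_{p}}H(x-y)\Omega(p^{k}|y-\boldsymbol{i}|_{p})\,dy ;
\]
in particular the limit on the right exists and equals the left-hand side.

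Next I would, in each inner integral, substitute $y\mapsto\boldsymbol{i}+y$ and then expand $H(x-\boldsymbol{i}-y)=\sum_{l=0}^{m}H_{l}\,\Omega(p^{k_{l}}|x-\boldsymbol{i}-b_{l}-y|_{p})$, obtaining
\[
\int_{\mathbb{Q}_{p}}H(x-y)\Omega(p^{k}|y-\boldsymbol{i}|_{p})\,dy=\sum_{l=0}^{m}H_{l}\int_{\mathbb{Q}_{p}}\Omega(p^{k_{l}}|x-\boldsymbol{i}-b_{l}-y|_{p})\,\Omega(p^{k}|y|_{p})\,dy ,
\]
which is exactly the first displayed identity of the lemma.

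It then remains to evaluate, for $c:=x-\boldsymbol{i}-b_{l}$, the integral $\int_{\mathbb{Q}_{p}}\Omega(p^{k_{l}}|c-y|_{p})\Omega(p^{k}|y|_{p})\,dy$, which is the Haar measure of $(c+p^{k_{l}}\mathbb{Z}_{p})\cap p^{k}\mathbb{Z}_{p}$. Here I would invoke the ultrametric dichotomy that two balls are either nested or disjoint: if $k_{l}\ge k$ then $c+p^{k_{l}}\mathbb{Z}_{p}\subseteq p^{k}\mathbb{Z}_{p}$ precisely when $|c|_{p}\le p^{-k}$, and the two are disjoint otherwise; if $k_{l}<k$ then $p^{k}\mathbb{Z}_{p}\subseteq c+p^{k_{l}}\mathbb{Z}_{p}$ precisely when $|c|_{p}\le p^{-k_{l}}$, and they are disjoint otherwise. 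In every case the intersection, when nonempty, is the smaller of the two balls, so the integral equals $p^{-\max(k,k_{l})}\,\Omega(p^{\min(k,k_{l})}|c|_{p})$. Substituting back $c=x-\boldsymbol{i}-b_{l}$ and using $p^{\min(-k,-k_{l})}=p^{-\max(k,k_{l})}$ and $p^{-\max(-k,-k_{l})}=p^{\min(k,k_{l})}$ gives the remaining two displayed identities.

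The only genuinely delicate point is the interchange of limit and integral in the first step; it rests solely on the uniform bound $\|\boldsymbol{P}_{k}g\|_{\infty}\le\|g\|_{\infty}$ from Lemma~\ref{Lemma_Condition_A} together with $H\in L^{1}(\mathbb{Q}_{p})$. Everything afterwards is a change of variables and a routine computation of measures of intersections of $p$-adic balls, so I expect no real obstacle there.
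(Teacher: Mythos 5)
Your proposal is correct and follows essentially the same route as the paper: discretize $g$ through $\boldsymbol{P}_{k}g$, pass the limit through the integral, and then evaluate the convolution of the two ball indicators via the nested-or-disjoint dichotomy, which yields exactly $p^{-\max(k,k_{l})}\Omega\left(p^{\min(k,k_{l})}|x-\boldsymbol{i}-b_{l}|_{p}\right)$. Your justification of the interchange via $\Vert H\Vert_{1}\Vert\boldsymbol{P}_{k}g-g\Vert_{\infty}\rightarrow0$ and your treatment of both cases $k_{l}\geq k$ and $k_{l}<k$ are minor refinements of the paper's argument (which reduces to a single ball and assumes $k_{H}\leq k$ without loss of generality), not a different method.
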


\begin{proof}
It is sufficient to consider the case where $H(x)=\Omega(p^{k_{H}}%
|x-b_{H}|_{p})$ for some $k_{H}\in\mathbb{Z}$ and $b_{H}\in\mathbb{Q}_{p}$.
Since $g(x)=\lim_{k\rightarrow\infty}\sum_{\boldsymbol{i}\in G_{k}%
}g(\boldsymbol{i})\Omega(p^{k}|x-\boldsymbol{a}|_{p})$, we have
\begin{align*}%
{\displaystyle\int\limits_{\mathbb{Q}_{p}}}
H(x-y)g(y)dy  &  =\lim_{k\rightarrow\infty}\sum_{\boldsymbol{i}\in G_{k}%
}g(\boldsymbol{i})%
{\displaystyle\int\limits_{\mathbb{Q}_{p}}}
\Omega(p^{k_{H}}|x-b_{H}-y|_{p})\Omega(p^{k}|y-\boldsymbol{i}|_{p})dy\\
&  =\lim_{k\rightarrow\infty}\sum_{\boldsymbol{i}\in G_{k}}g(\boldsymbol{i})%
{\displaystyle\int\limits_{\mathbb{Q}_{p}}}
\Omega(p^{k_{H}}|(x-b_{H}-\boldsymbol{i})-y|_{p})\Omega(p^{k}|y|_{p})dy.
\end{align*}
Without loss of generality, we may assume that $k_{H}\leq k$, and since any
two balls are disjoint or one contains the other, then $B_{-k}\cap B_{-k_{H}%
}(x-b_{H}-a)=\emptyset$ or $B_{-k}\cap B_{-k_{H}}(x-b_{H}-\boldsymbol{i}%
)=B_{-k}$. The latter case occurs if and only if $0\in B_{-k_{H}}%
(x-b_{H}-\boldsymbol{i})$, i.e. when $|x-b_{H}-a|_{p}\leq p^{-k_{H}}$.
Therefore
\[%
{\displaystyle\int\limits_{\mathbb{Q}_{p}}}
\Omega(p^{k_{H}}|x-b_{H}-\boldsymbol{i}-y|_{p})\Omega(p^{k}|y|_{p}%
)dy=p^{-k}\Omega\left(  p^{k_{H}}|x-b_{H}-\boldsymbol{i}|_{p}\right)  .
\]

\end{proof}

We now assume that $U$, $Z$, $X_{0}\in\mathcal{X}_{\infty}$ and that $A$, $B$
are test functions of the form%
\[
A(x)=\sum_{l=0}^{m_{A}}A_{l}\Omega(p^{k_{l}}|x-a_{l}|_{p}),\text{ \ }%
B(x)=\sum_{l=0}^{m_{B}}B_{l}\Omega(p^{k_{l}}|x-b_{l}|_{p}.
\]
Based on the continuity of operators $\boldsymbol{A},\boldsymbol{B}%
:\mathcal{X}_{\infty}\rightarrow\mathcal{X}_{\infty}$ and the formula given in
Lemma \ref{lemma: approx test}, we can approximate the solution $X(x,t)$ of a
$p$-adic continuous CNN by $p^{2k}$ ODEs, $k\geq1$, of the form%
\begin{gather*}
\frac{d}{dx}X(\boldsymbol{i},t)=-X(\boldsymbol{i},t)+\sum_{\boldsymbol{j}\in
G_{k}}f(X(\boldsymbol{j},t))\sum_{l=0}^{m_{A}}A_{l}p^{-\max(k,k_{l})}%
\Omega\left(  p^{\min(k,k_{l})}|\boldsymbol{i}-\boldsymbol{j}-a_{l}%
|_{p}\right) \\
+\sum_{\boldsymbol{j}\in G_{k}}U(\boldsymbol{j)}\sum_{l=0}^{m_{B}}%
B_{l}p^{-\max(k,k_{l})}\Omega\left(  p^{\min(k,k_{l})}|\boldsymbol{i}%
-\boldsymbol{j}-b_{l}|_{p}\right)  +Z(\boldsymbol{i})\text{, for
}\boldsymbol{i}\in G_{k}.
\end{gather*}
It is possible to combine the approximations given in numeric schemes A, B.

\subsection{A remark on the visualization of finite rooted trees}

The discretizations of \ the kernels $A$, $B$ are functions on $G_{k}\times
G_{k}$, while the input $U$ and $X_{0}$ are functions on $G_{k}$. We use
systematically heat maps to present these functions. We always include a plot
of the tree $G_{k}$. By convention we identify the leaves of the tree $G_{k}$
with the set of rational numbers $\{0,1/p^{k},2/p^{k},\ldots,(p^{2k}%
-1)/p^{k}\}$. Furthermore, \ we label the levels of $G_{k}$ with integers from
the set $\left\{  -k,-k+1,\ldots,0,1,\ldots k-1\right\}  $. The level $l$
consists of the cells $\boldsymbol{i}$, $\boldsymbol{j}$ such that
\[
-\log_{p}(|\boldsymbol{i}-\boldsymbol{j}|_{p})=(\text{the level of the first
common ancestor of }\boldsymbol{i},\boldsymbol{j})=l.
\]

\subsection{First Simulation}

In this example, we take $k=2$, $p=2$, which means that we use a tree with
$2^{4}=16$ leaves and $4$ levels. A basic application of the classical CNNs is
image processing, see e.g. \cite{Chua-Tamas}. In this example we present a
one-dimensional edge detector, which is a $p$-adic, one-dimensional analog of
the examples 3.1 and 3.2 in \cite{Chua-Tamas}. The input $U$ is a image having
three levels:%
\[
U(x)=\sum_{\boldsymbol{i}\in G_{2}}U_{\boldsymbol{i}}\Omega(2^{2}%
|x-\boldsymbol{i}|_{2}),\;\;U_{\boldsymbol{i}}=\left\{
\begin{array}
[c]{lll}%
-1 & \text{if} & \boldsymbol{i}=1,2,1/4,13/4\\
0 & \text{if} & \boldsymbol{i}=1/2,9/4,5/4\\
1 & \text{otherwise,} &
\end{array}
\right.
\]
$x\in G_{2}=2^{-2}\mathbb{Z}_{2}/2^{2}\mathbb{Z}_{2}$. As in \cite{Chua-Tamas}
we take $X_{0}(x)=0$,$\ A(x)=0$. To construct template $B$, we identify a
matrix with a test function. We use%
\[
B(x)=64\Omega(2^{2}|x|_{2})-4%
{\displaystyle\sum\limits_{\boldsymbol{i}\in G_{2}\text{; }\boldsymbol{i}%
\neq0}}
\Omega(2^{2}|x-\boldsymbol{i}|_{2}),\text{ }x\in G_{2}\text{.}%
\]
Finally, we take $Z(x)=-\Omega(2^{-2}|x|_{2}),$ \ \ $f(x)=\frac{1}{2}\left(
|x+1|-|x-1|\right)  $. The output $Y(x,t)$ consists of the edges on the input
$U$, see Figure \ref{SIM_1_FIG_3}.%
\begin{figure}
[h]
\begin{center}
\includegraphics[
height=2.968in,
width=5.0246in
]%
{./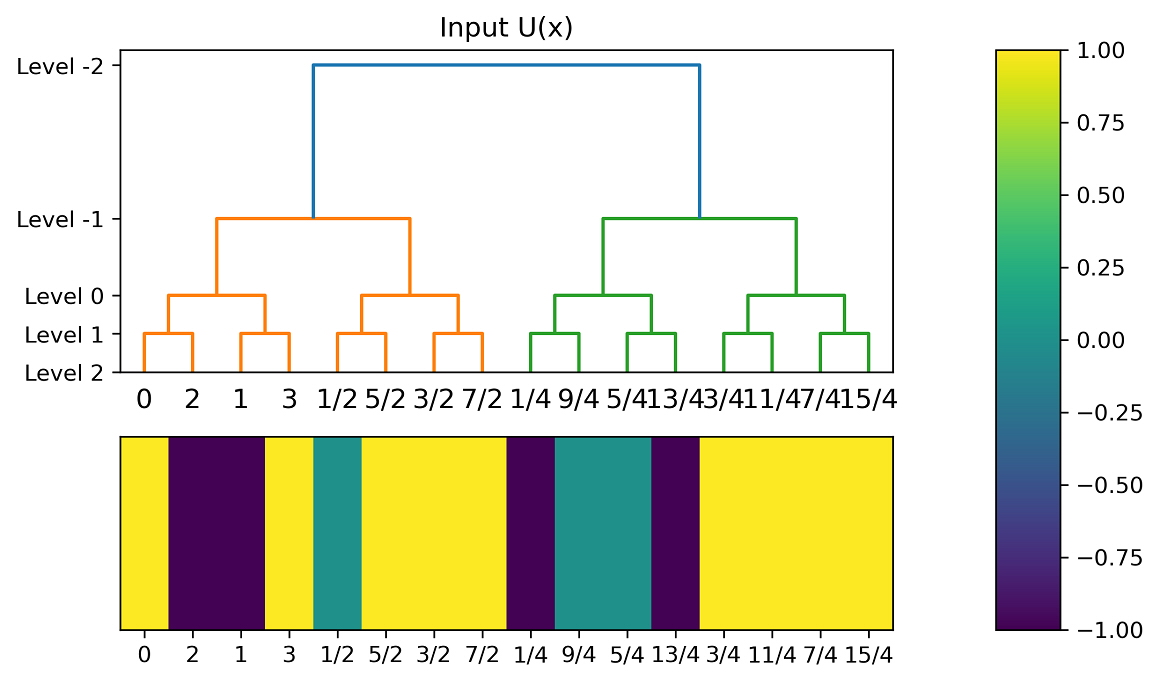}%
\caption{Simulation 1. Heat map $U(x)$.}%
\end{center}
\end{figure}
%

\begin{figure}
[h]
\begin{center}
\includegraphics[
height=2.9663in,
width=3.2422in
]%
{./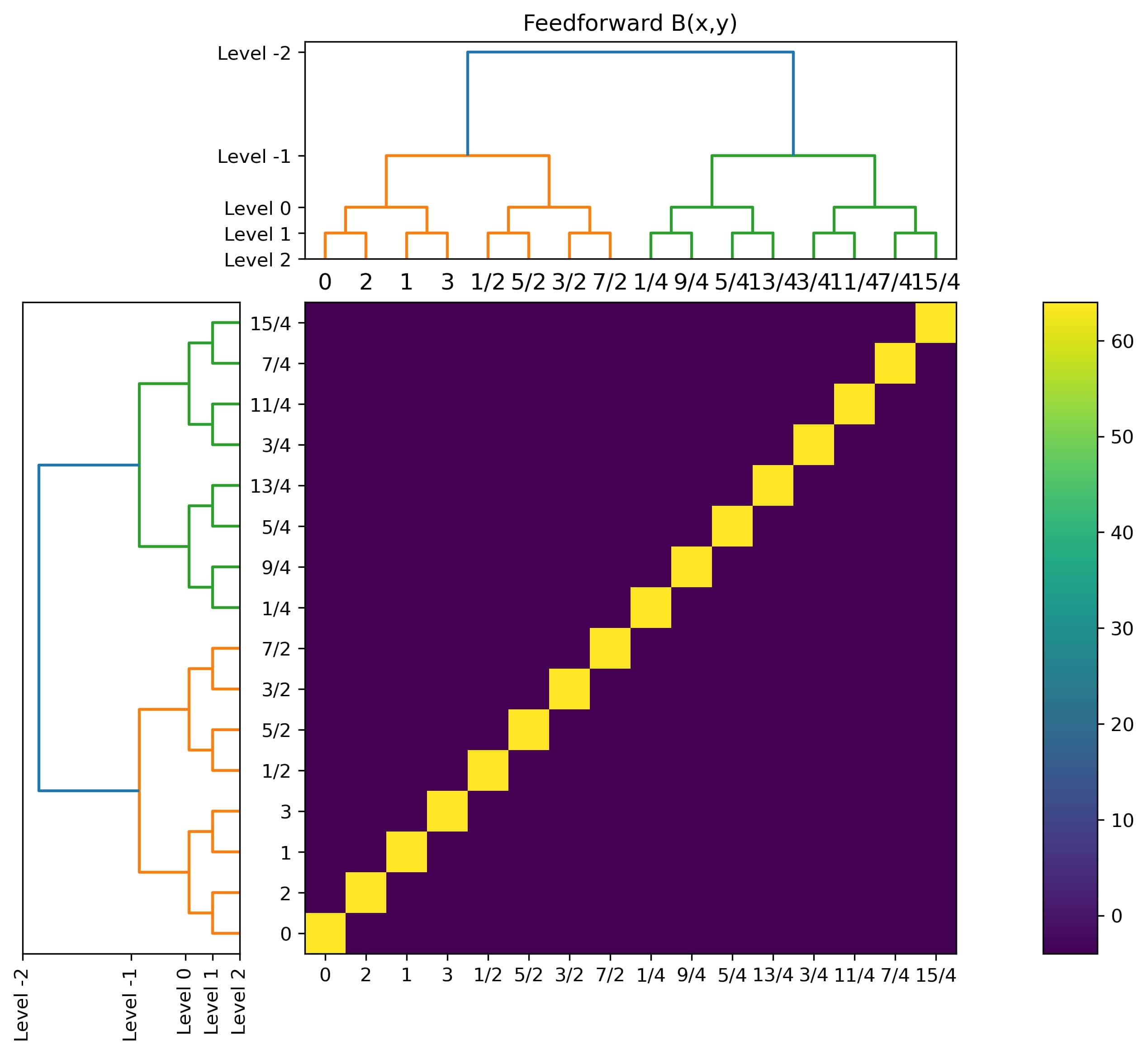}%
\caption{Simulation $1$. Heat map of $B\left(  \left\vert x-y\right\vert
_{2}\right)  $, $x$, $y\in G_{2}$.}%
\end{center}
\end{figure}
\begin{figure}
[hh]
\begin{center}
\includegraphics[
height=2.9672in,
width=4.2791in
]%
{./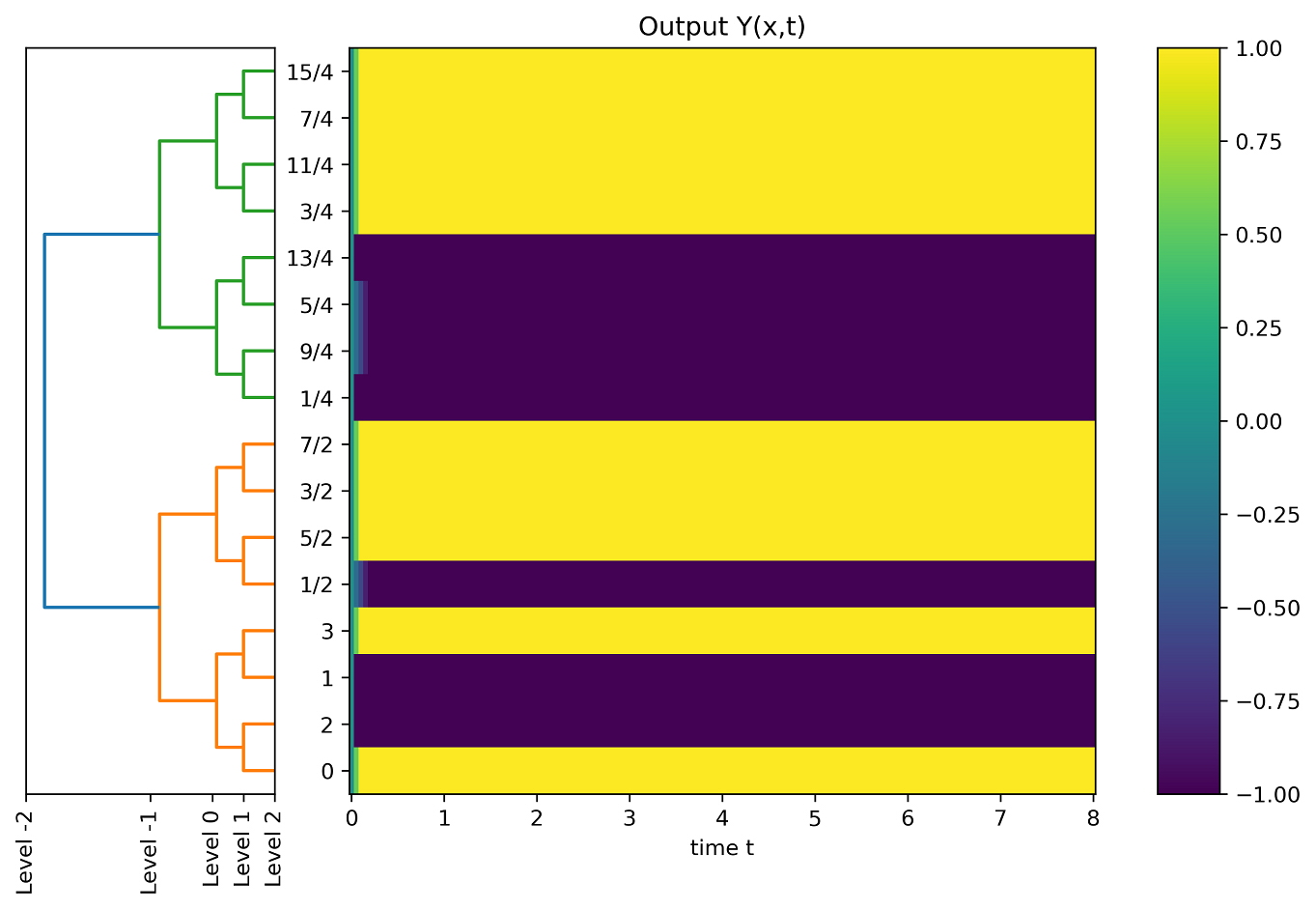}%
\caption{Simulation $1$. Step $0.05$.}%
\label{SIM_1_FIG_3}%
\end{center}
\end{figure}
\newpage

\subsection{Second Simulation}

In this example, we take $k=2$, $p=2$, which means that we use a tree with
$2^{4}=16$ leaves and $4$ levels. We consider a CNN with the followin
parameters:%
\[
A(x)=\Omega(2^{2}|x-2^{-2}|_{2}),\text{ \ }B(x)=U(x)=\Omega(2^{2}%
|x|_{2}),\text{ }Z(x)=0,\text{ }x\in G_{2.}\text{\ }%
\]
We set $X_{0}(x)=0$ and \ \ $f(x)=\frac{1}{2}\left(  |x+1|-|x-1|\right)  $.%

\begin{figure}
[h]
\begin{center}
\includegraphics[
height=2.9663in,
width=3.2422in
]%
{./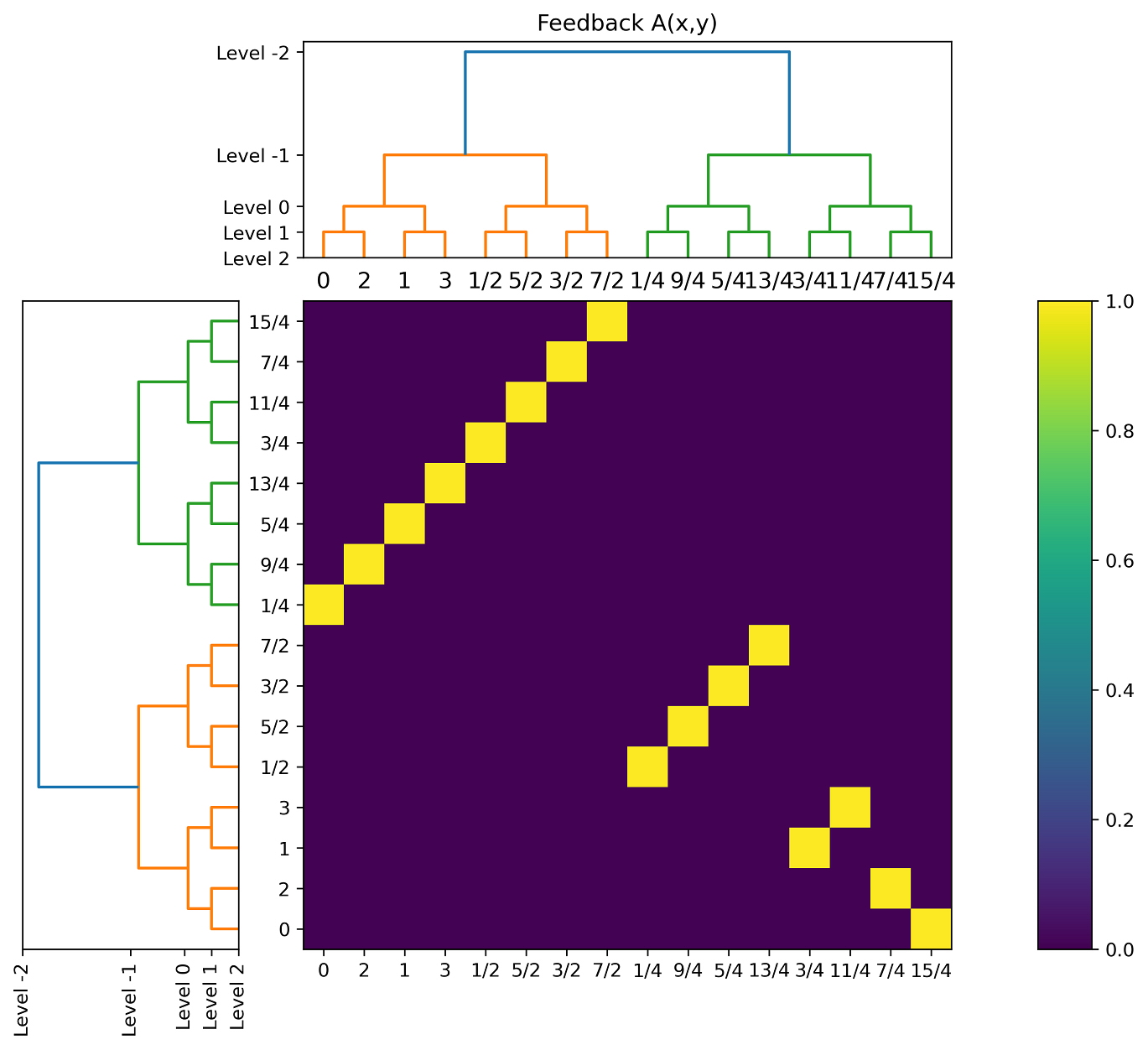}%
\caption{Simulation $2$. Heat map $A(x-y)$ for $x,y\in G_{2}$.}%
\end{center}
\end{figure}

In this network, we have $A(\boldsymbol{i},\boldsymbol{j})=A(\boldsymbol{i}%
-\boldsymbol{j})=\Omega\left(  2^{2}\left\vert \boldsymbol{i}-\boldsymbol{j}%
-2^{-1}\right\vert _{2}\right)  $, $B(\boldsymbol{i},\boldsymbol{j}%
)=B(\left\vert \boldsymbol{i}-\boldsymbol{j}\right\vert _{2})=\Omega\left(
2^{2}\left\vert \boldsymbol{i}-\boldsymbol{j}\right\vert _{2}\right)
=\delta_{\boldsymbol{i},\boldsymbol{j}}$, where $\delta_{\boldsymbol{i}%
,\boldsymbol{j}}$\ denotes the Konecker delta function. This network does not
have the space-invariant property because $A(\boldsymbol{i},\boldsymbol{j}%
)=\Omega\left(  2^{2}\left\vert \boldsymbol{i}-\boldsymbol{j}-2^{-1}%
\right\vert _{2}\right)  $ is not a radial
function. Due to this fact, $A(\boldsymbol{i},\boldsymbol{j})$ is not a
symmetric matrix. For instance:%
\[
A(\frac{15}{4},0)=0,\text{ }A(0,\frac{15}{4})=1,\text{ \ }A(\frac{1}%
{4},0)=0,\text{ }A(0,\frac{1}{4})=1.
\]
%

\begin{figure}
[h]
\begin{center}
\includegraphics[
height=2.9663in,
width=3.2422in
]%
{./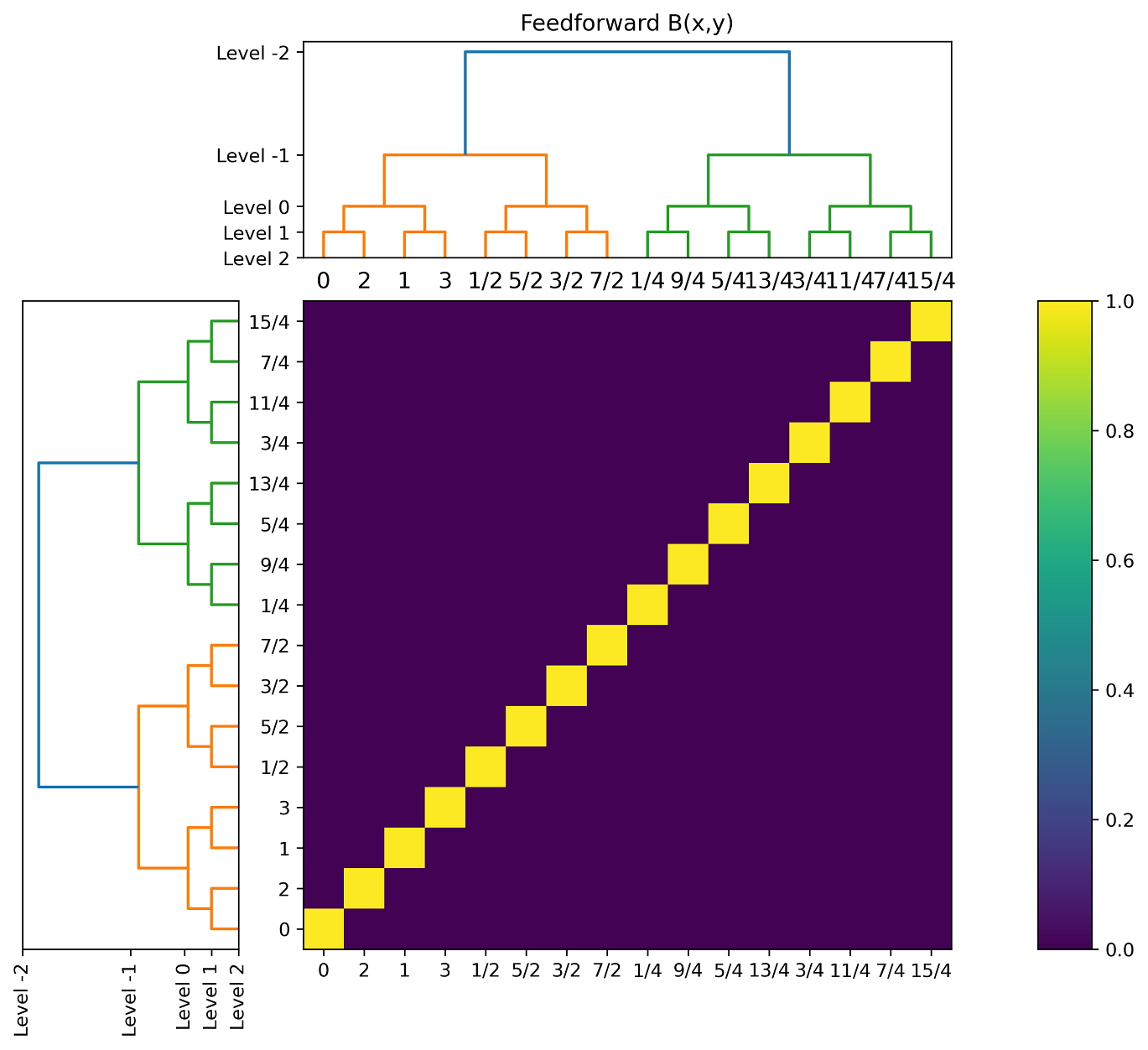}%
\caption{Simulation $2$. \ Heat map of $B\left(  \left\vert x-y\right\vert
_{2}\right)  $ for $x,y\in G_{2}$.}%
\end{center}
\end{figure}
%

\begin{figure}
[h]
\begin{center}
\includegraphics[
height=3.1393in,
width=5.1733in
]%
{./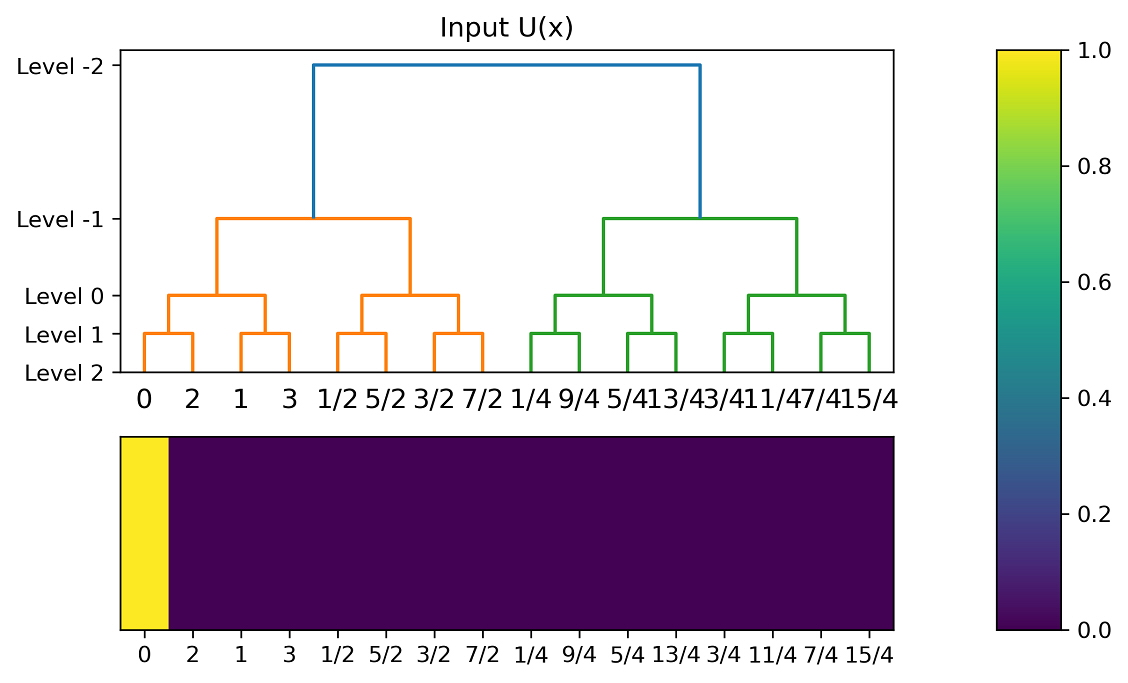}%
\caption{Simulation $2$. Heat map of $U(x)$.}%
\end{center}
\end{figure}

Our interpretation is that there is a connection from cell $\frac{15}{4}$ to
cell $0$, \ and a connection from cell $0$ to cell $\frac{1}{4}$. This
assertion is confirm by the ouput $Y(x,t)$, see Figure \ref{SIM_2_FIg_4}.
Notice that $Y(\frac{1}{2},t)\neq0$ and $A(\frac{1}{2},0)=A(0,\frac{1}{2})=0$.
But $A(\frac{1}{4},\frac{1}{2})=0$, $A(\frac{1}{2},\frac{1}{4})=1$, then there
is a connection from cell $\frac{1}{4}$ to cell $\frac{1}{2\text{ }}$, which
explains the fact that $Y(\frac{1}{2},t)\neq0$.

\newpage

The numerical solutions is given in Figure \ref{SIM_2_FIg_4}. We now take
$A(x)=B(x)=\Omega(2^{2}|x|_{2})$. In this case the output $Y(x,t)$ changes
completely, see Figure \ref{SIM_2_FIG_5}.%

\begin{figure}
[h]
\begin{center}
\includegraphics[
height=2.9672in,
width=4.2203in
]%
{./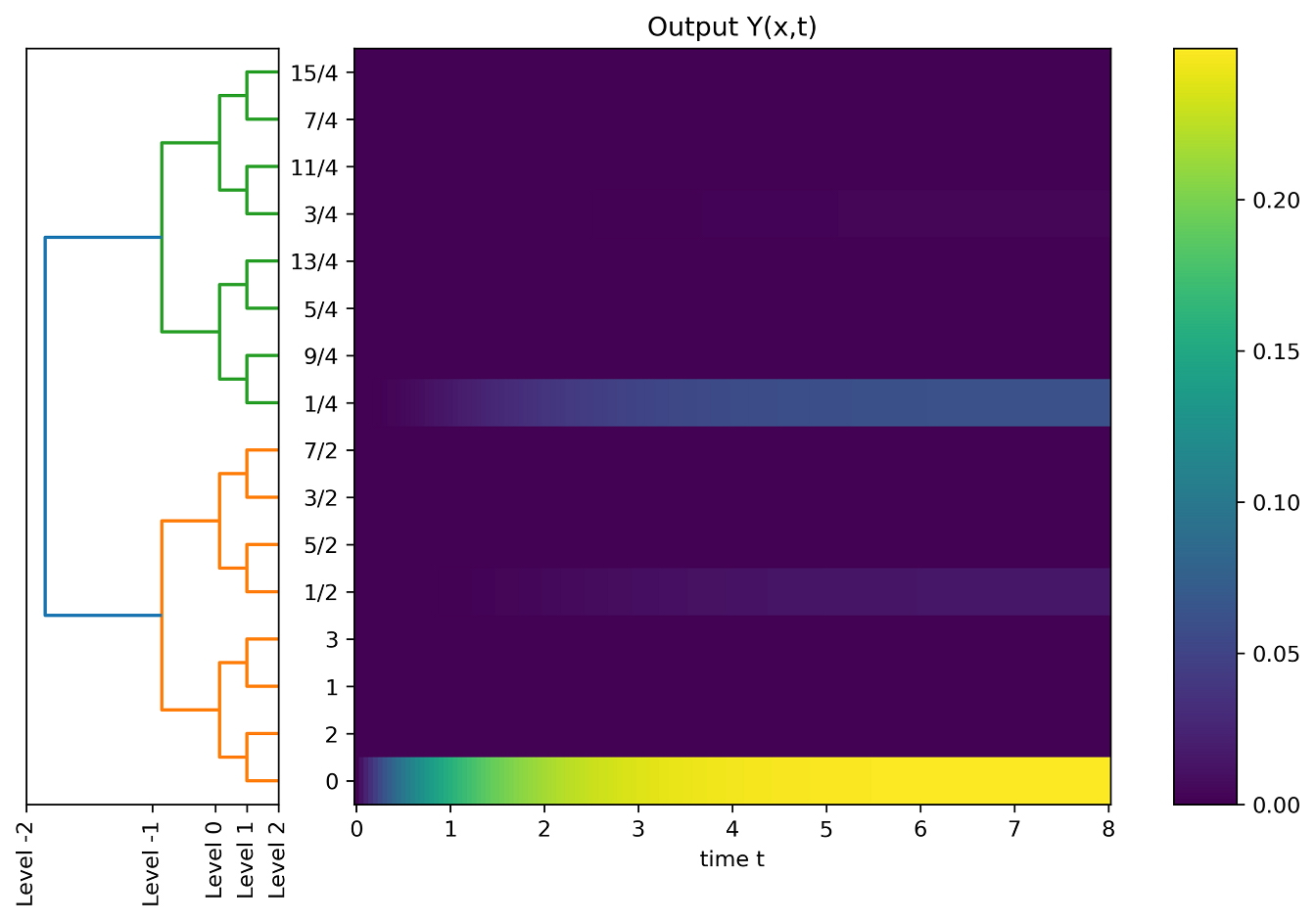}%
\caption{Simulation $2$. Step $0.05$.}%
\label{SIM_2_FIg_4}%
\end{center}
\end{figure}
%

\begin{figure}
[h]
\begin{center}
\includegraphics[
height=2.9672in,
width=4.2203in
]%
{./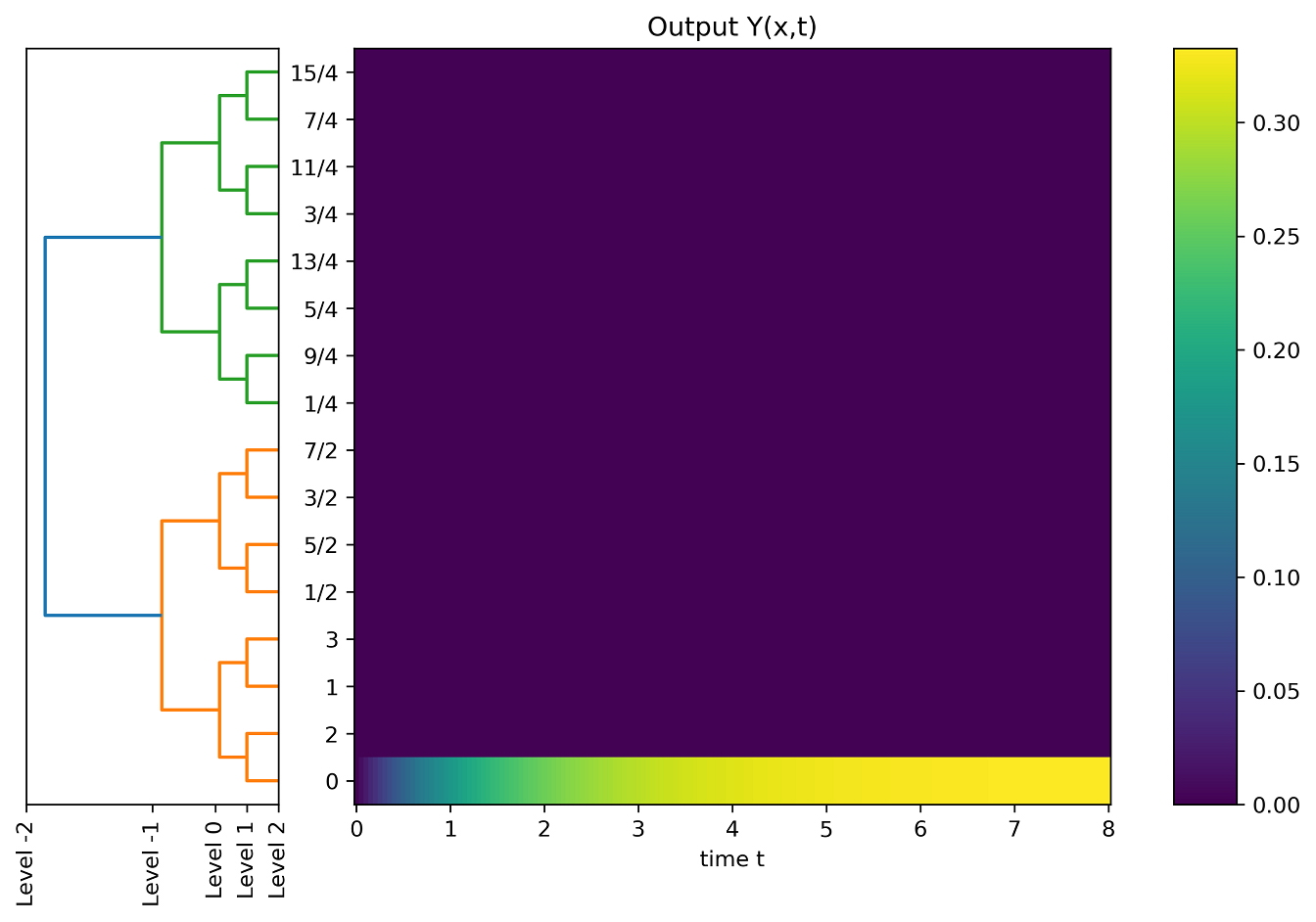}%
\caption{Simulation 2. Output with \ $A(x)=B(x)=\Omega(2^{2}|x|_{2})$ and step
$0.05$.}%
\label{SIM_2_FIG_5}%
\end{center}
\end{figure}
\newpage

\subsection{Third Simulation}

In this example, we take $k=3$, $p=2$, which means that we use a tree with
$2^{6}=64$ leaves and $12$ levels. We consider a CNN with the following
parameters: $A(x)=\Omega(2^{3}|x-2^{-2}|_{2})$, $B(|x|_{2})=\Omega
(2^{3}|x|_{2})$, $U(x)=\sin(p^{4}|x|_{2})$,$\ Z(x)=0.15\Omega(2^{-2}|x|_{2})$
for $x\in G_{3}=2^{-3}\mathbb{Z}_{3}/2^{3}\mathbb{Z}_{3}$. We set $X_{0}(x)=0$
and \ \ $f(x)=\frac{1}{2}\left(  |x+1|-|x-1|\right)  $.

As a consequence of the fractal nature of the $p$-adic numbers, the $p$-adic
CNNs exhibit self-similarity in several ways. For instance, the graph of the
kernel $A(x,y)$ is a self-similar set, this follows by comparing the graphs
given in simulations $2$ and $3$ for this kernel. In addition, the output
$Y(x,t)=0$ when the norm $|x|_{2}$\ is sufficiently large. In this simulation
the CNN produces a pattern similar to the input.%

\begin{figure}
[h]
\begin{center}
\includegraphics[
height=2.2321in,
width=2.5668in
]%
{./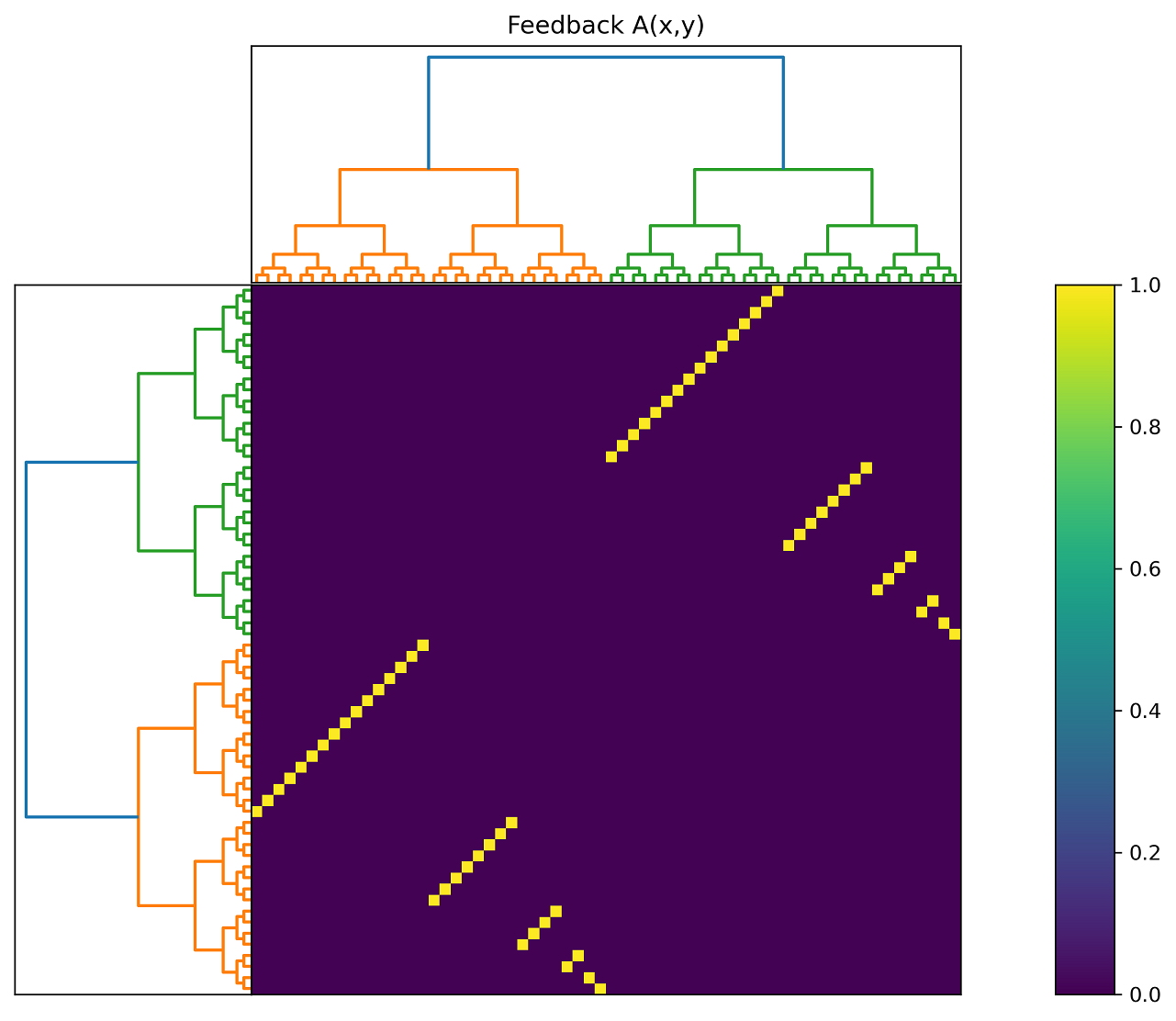}%
\caption{Simulation 3. Heat map of $A(x-y)$ for $x$, $y\in G_{3}$.}%
\end{center}
\end{figure}
%

\begin{figure}
[ptb]
\begin{center}
\includegraphics[
height=2.2329in,
width=3.8156in
]%
{./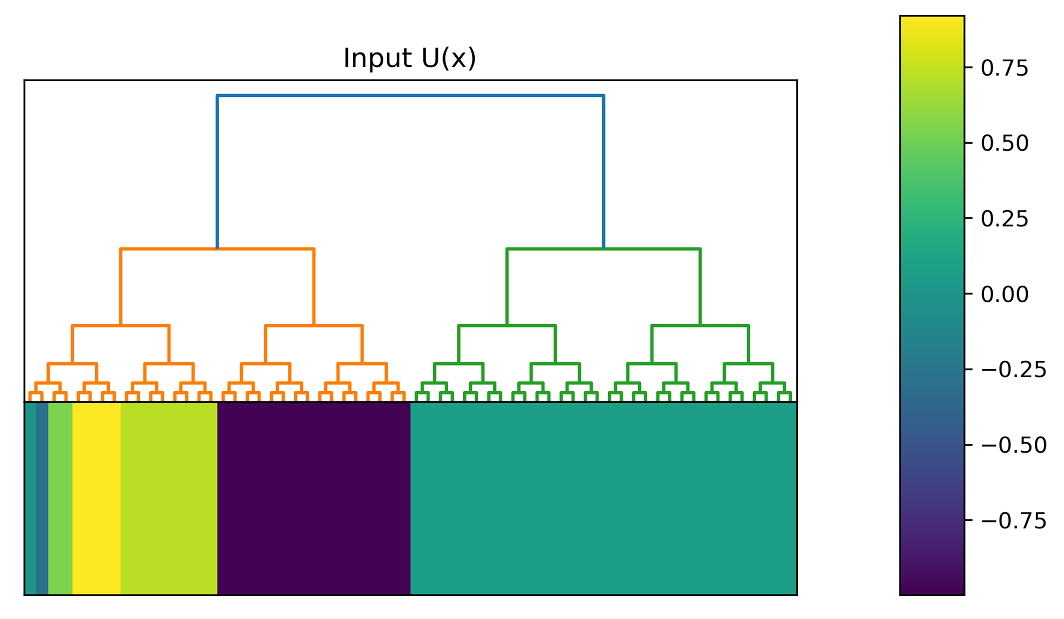}%
\caption{Simulation $3$. Heat map of $U(x)$.}%
\end{center}
\end{figure}
%

\begin{figure}
[h]
\begin{center}
\includegraphics[
height=2.2321in,
width=3.2154in
]%
{./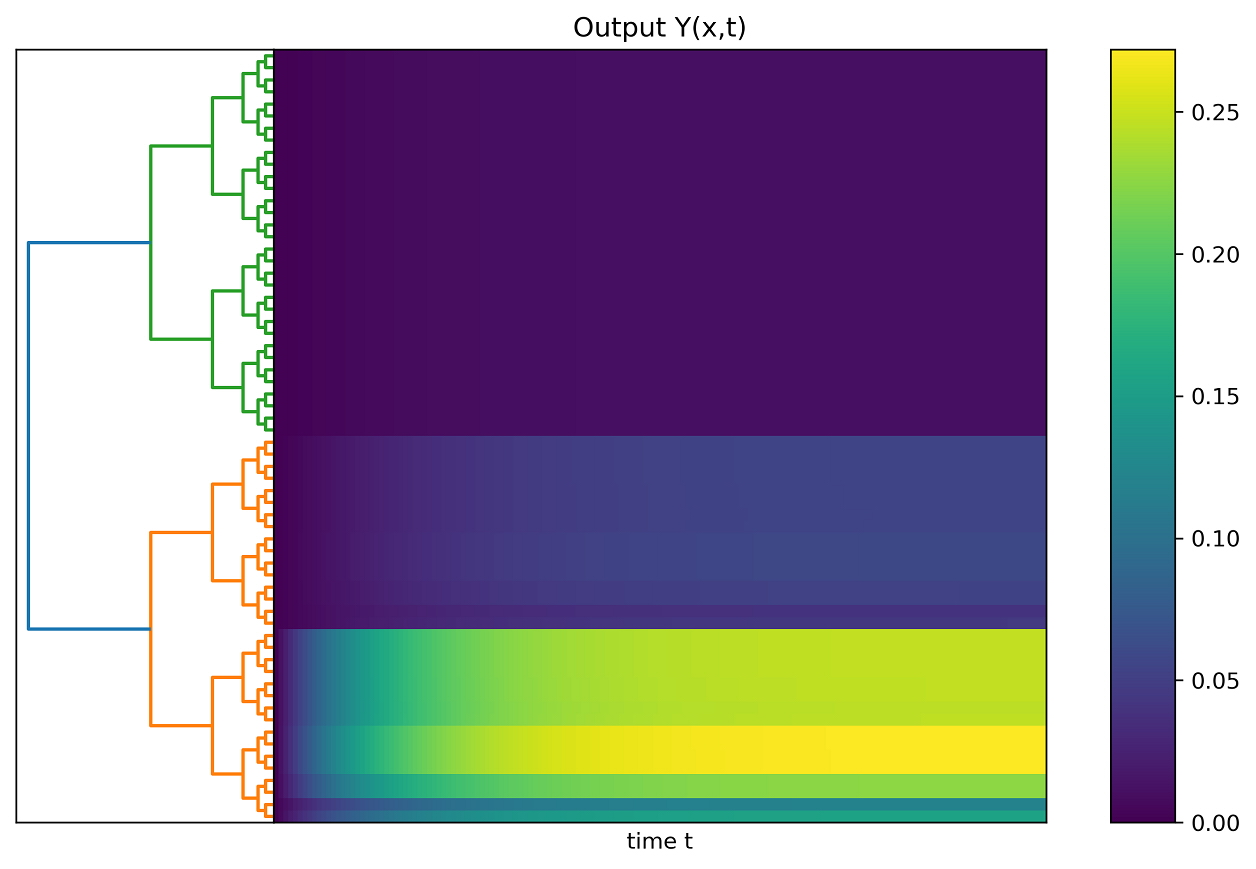}%
\caption{Simulation $3$. Step $0.05$.}%
\end{center}
\end{figure}

\newpage

\subsection{Fourth Simulation}

In this example, we take $k=2$, $p=2$, which means that we use a tree with
$2^{4}=16$ leaves and $4$ levels. The parameters of the CNN are $A(x)=\Omega
\left(  2^{2}\left\vert x-2^{-2}\right\vert _{2}\right)  $, $B(x)=U(x)=Z(x)=0$, we
set $X_{0}\left(  x\right)  =\Omega\left(  2^{2}\left\vert x\right\vert
_{2}\right)  $, $f(x)=\frac{1}{2}\left(  |x+1|-|x-1|\right)  $ for $x\in
G_{2}$.

In this example, at time zero the cells near the origin are excited. Which
causes all the cells of the network to activate. The activation can be seen in
the Fourier transform of the output. After some time the network returns to a
state of rest.%

\begin{figure}
[h]
\begin{center}
\includegraphics[
height=2.2321in,
width=3.6685in
]%
{./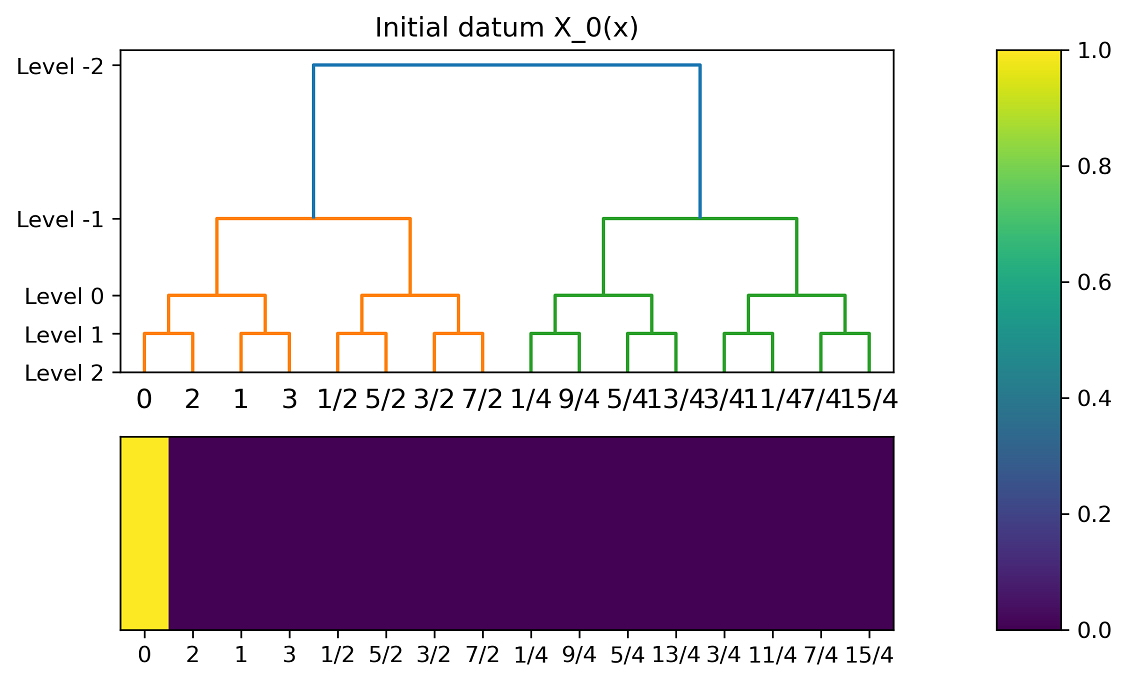}%
\caption{Simulation $4$. Heat map of $X_{0}(x)$.}%
\end{center}
\end{figure}
%

\begin{figure}
[h]
\begin{center}
\includegraphics[
height=2.2321in,
width=3.141in
]%
{./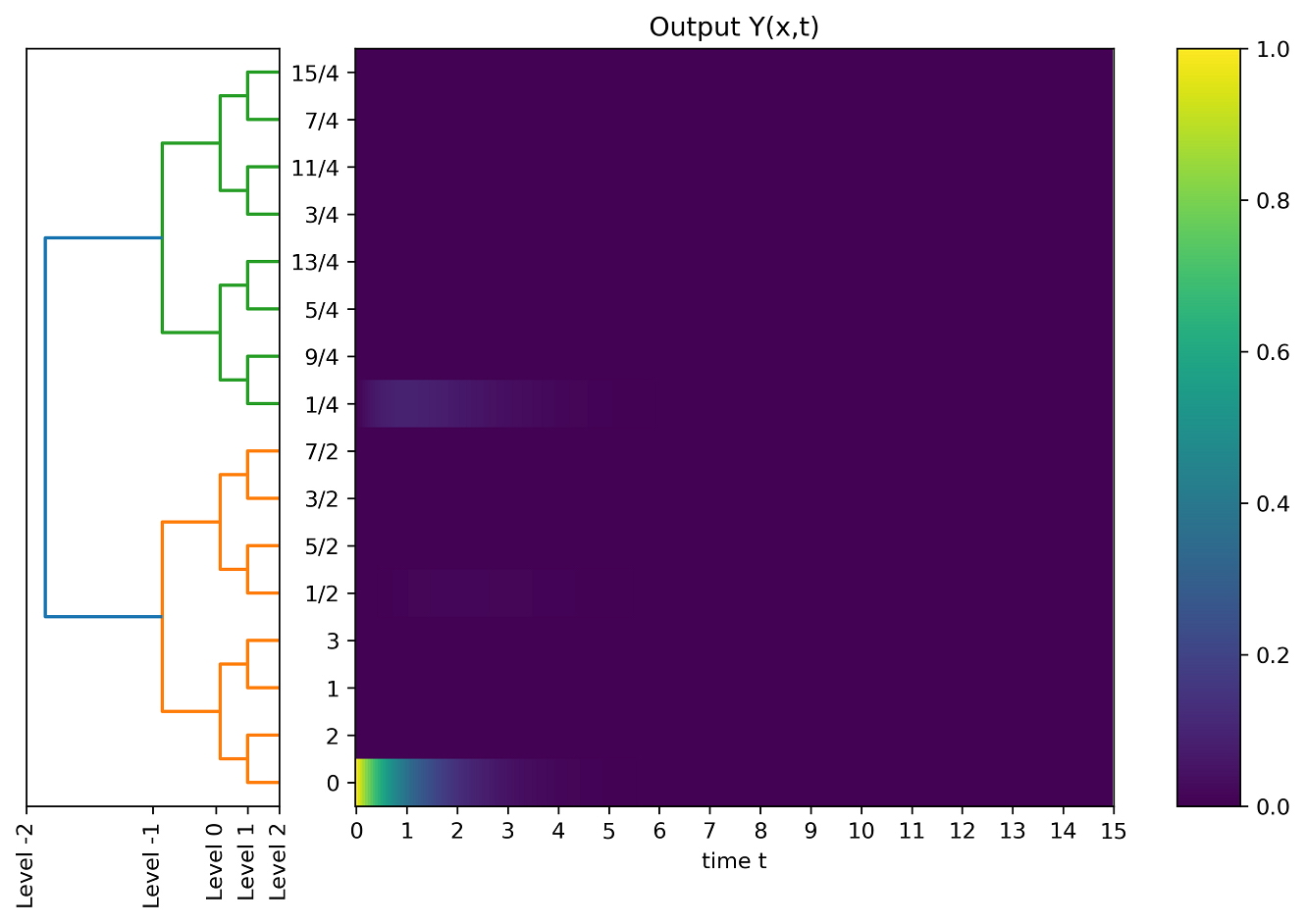}%
\caption{Simulation $4$. Step $0.05$.}%
\end{center}
\end{figure}
%

\begin{figure}
[h]
\begin{center}
\includegraphics[
height=2.2321in,
width=3.173in
]%
{./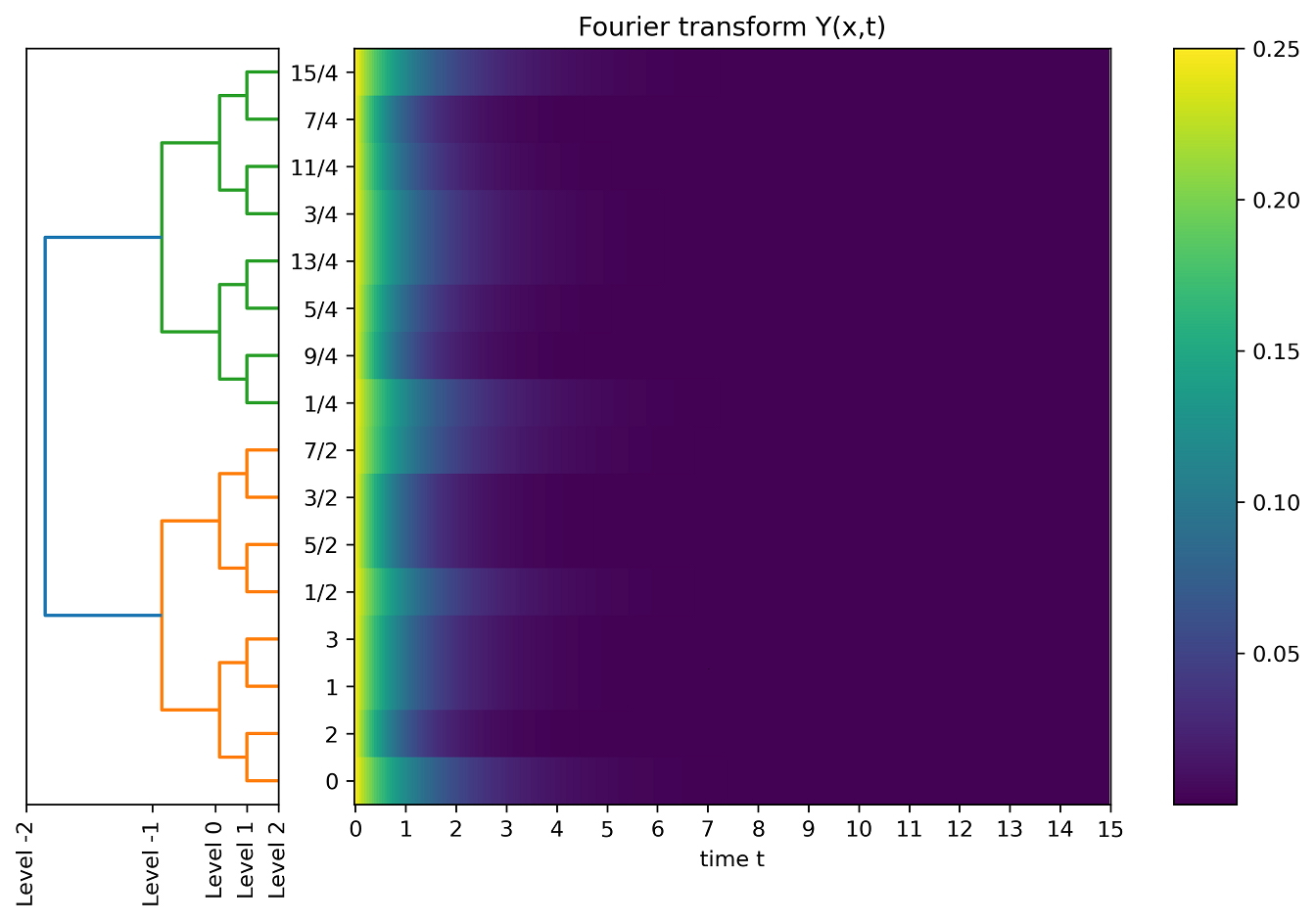}%
\caption{Simulation $4$.}%
\end{center}
\end{figure}

\newpage
\section{CONCLUSIONS}

In this article, we present a $p$-adic generalization of Chua-Yang CNNs. In
the $p$-adic framework, a continuous CNN is modeled by just one
integro-differential equation depending on several $p$-adic variables and the
time. In contrast, the classical CNNs are described by a discrete system of
integro-differential equations. The need of constructing continuous models of
discrete CNNs whose modeling requires millions of integro-differential
equations is quite natural.

A one-dimensional $p$-adic continuous CNN has infinitely many cells which are
hierarchically organized in rooted trees, also a such network has infinitely
many hidden layers. The topology of the network, which lately controls the
interaction of the cells, depends on the supports of the kernels of the
feedback and feedforward operators. Under mild hypotheses, \ there is a
natural discretization process of $p$-adic continuous CNNs that produces
standard discrete CNNs. The solutions of the continuous CNNs can be very well
approximated by the solutions of discrete CNNs. Then, for practical purposes,
a $p$-adic continuous CNN is a hierarchical discrete CNN with many hidden layers.

Our numerical simulations show that the solutions of continuous CNN exhibit a
very complex behavior, including self-similarity and multistability, depending
on the interaction of the parameters defining the network and the initial datum.

In the $p$-adic framework, the class of continuous CNNs is huge, for instance,
consider equations of type%

\[
\frac{\partial X(x,t)}{\partial t}=-\boldsymbol{L}X(x,t)+%
{\displaystyle\int\limits_{\mathbb{Q}_{p}^{N}}}
A(x,y)Y(y,t)d^{N}y+%
{\displaystyle\int\limits_{\mathbb{Q}_{p}^{N}}}
B(x,y)U(y)d^{N}y+Z(x),
\]
where $\frac{\partial X(x,t)}{\partial t}=-\boldsymbol{L}X(x,t)$ \ is a
$p$-adic heat equation, i.e. the fundamental solution of a such equation is
the transition probability density of a Markov process on $\mathbb{Q}_{p}^{N}%
$. The class of $p$-adic heat equations is extremely large, see e.g.
\cite{KKZuniga}, \cite{Zuniga-LNM-2016}. By incorporating a `diffusion term'
is natural to expect that the corresponding network will produce more complex
patterns. \ We plan to study these networks in a forthcoming article. In the
classical framework the reaction-diffusion CNNs have been studied
intennsively, see e.g. \cite{Gorals et al 1,Gorals et al 2,Gorals et al 3}.

\bigskip

\bigskip

\bigskip

\bigskip


\begin{thebibliography}{99}                                                                                               %


\bibitem {Alberio et al}Albeverio S., Khrennikov A. Yu., Shelkovich V. M.,
Theory of $p$-adic distributions: linear and nonlinear models. London
Mathematical Society Lecture Note Series, 370. Cambridge University Press,
Cambridge, 2010.

\bibitem {Albeverio-Khrennikov-Tirozzi}Albeverio, Sergio; Khrennikov, Andrei;
Tirozzi, Brunello p-adic dynamical systems and neural networks, Math. Models
Methods Appl. Sci. 9, no. 9, 1417--1437, 1999.

\bibitem {Av-4}Avetisov V. A., Bikulov A. Kh., Osipov V. A., $p$-adic
description of characteristic relaxation in complex systems, J. Phys. A 36,
no. 15, 4239--4246, 2003.

\bibitem {Av-5}Avetisov V. A., Bikulov A. H., Kozyrev S. V., Osipov V. A.,
$p$-adic models of ultrametric diffusion constrained by hierarchical energy
landscapes, J. Phys. A 3 , no. 2, 177--189, 2002.

\bibitem {Becker  et al}Becker O. M., Karplus M., The topology of
multidimensional protein energy surfaces: theory and application to peptide
structure and kinetics, J. Chem.Phys. 106, 1495--1517, 1997.

\bibitem {Chua-Yang}Chua Leon O., Yang Lin, Cellular neural networks: theory,
IEEE Trans. Circuits and Systems 35, no. 10, 1257--1272, 1988.

\bibitem {Chua-Yang2}Chua Leon, Yang Lin, Cellular Neural Networks:
Applications, IEEE Trans. on Circuits and Systems, 35, no. 10, 1273-1290, 1988.

\bibitem {Chua-Tamas}Chua Leon O, Roska, Tamas, Cellular neural networks and
visual computing: foundations and applications. Cambridge university press, 2002.

\bibitem {Chua}Chua L. O., CNN: A Paradigm for Complexity, World Scientific
Series on Nonlinear Science (Series A), Vol. 31, Singapore: World Scientific
Publishing Company, 1998.

\bibitem {Dra-Kh-K-V}Dragovich B., Khrennikov A. Yu., Kozyrev S. V., Volovich,
I. V., On $p$-adic mathematical physics, p-Adic Numbers Ultrametric Anal.
Appl. 1, no. 1, 1--17, 2009.

\bibitem {Fraunfelder et al}Frauenfelder H, Chan S. S., Chan W. S. (eds), The
Physics of Proteins. Springer-Verlag, 2010.

\bibitem {Gorals et al 1}L. Goras, L. Chua, and D. Leenearts, Turing Patterns
in CNNs -- Part I: Once Over Lightly, IEEE Trans. on Circuits and Systems --
I, 42, no. 10, 602-611, 1995.

\bibitem {Gorals et al 2}L. Goras, L. Chua, and D. Leenearts, Turing Patterns
in CNNs -- Part II: Equations and Behavior, IEEE Trans. on Circuits and
Systems -- I, 42, no. 10, 612-626, 1995.

\bibitem {Gorals et al 3}L. Goras, L. Chua, and D. Leenearts, Turing Patterns
in CNNs -- Part III: Computer Simulation Results, IEEE Trans. on Circuits and
Systems -- I, 42, no. 10, 627-637, 1995.

\bibitem {Hua et al}Hua H., Hovestadt L., $p$-Adic numbers encode complex
networks, Sci Rep 11, no. 17, 2021. https://doi.org/10.1038/s41598-020-79507-4.

\bibitem {Koblitz}Neal Koblitz,\textit{\ }$p$\textit{-adic Numbers, }%
$p$\textit{-adic Analysis, and Zeta-Functions}, Graduate Texts in Mathematics
No. 58, Springer-Verlag, 1984.

\bibitem {Koch}Kochubei Anatoly N., Pseudo-differential equations and
stochastics over non-Archimedean fields. Marcel Dekker, Inc., New York, 2001.

\bibitem {Khrenikov2}Khrennikov A, Information Dynamics in Cognitive,
Psychological, Social and Anomalous Phenomena; Springer: Berlin/Heidelberg,
Germany, 2004.

\bibitem {Krennikov-tirozzi}Khrennikov, Andrei; Tirozzi, Brunello Learning of
$p$-adic neural networks. Stochastic processes, physics and geometry: new
interplays, II (Leipzig, 1999), 395--401, CMS Conf. Proc., 29, Amer. Math.
Soc., Providence, RI, 2000.

\bibitem {KKZuniga}Khrennikov Andrei, Kozyrev Sergei, Z{\'{u}}{\~{n}}%
iga-Galindo W. A., Ultrametric Equations and its Applications. Encyclopedia of
Mathematics and its Applications (168). Cambridge University Press, 2018.

\bibitem {Kozyrev  SV}Kozyrev S. V., Methods and Applications of Ultrametric
and $p$-Adic Analysis: From Wavelet Theory to Biophysics, Sovrem. Probl. Mat.,
12, Steklov Math. Inst., RAS, Moscow, 2008, 3--168.

\bibitem {Milan}Miklav\v{c}i\v{c} Milan, Applied functional analysis and
partial differential equations. World Scientific Publishing Co., Inc., River
Edge, NJ, 1998.

\bibitem {Nakao-Mikhailov}Nakao Hiroya and Mikhailov Alexander S., Turing
patterns in network-organized activator-inhibitor systems, \textit{Nature
Physics} \textbf{6 }(2010), 544-550.

\bibitem {R-T-V}Rammal R., Toulouse G., Virasoro M. A., Ultrametricity for
physicists, Rev. Modern Phys. 58 (1986), no. 3, 765--788.

\bibitem {Slavova}Slavova Angela, Cellular neural networks: dynamics and
modelling. Mathematical Modelling: Theory and Applications, 16. Kluwer
Academic Publishers, Dordrecht, 2003.

\bibitem {Taibleson}Taibleson M. H., Fourier analysis on local fields.
Princeton University Press, 1975.

\bibitem {V-V-Z}Vladimirov V. S., Volovich I. V., Zelenov E. I., $p$-adic
analysis and mathematical physics. World Scientific, 1994.

\bibitem {Zuniga-JMAA-2020}Z\'{u}\~{n}iga-Galindo, W. A., Reaction-diffusion
equations on complex networks and Turing patterns, via p-adic analysis, J.
Math. Anal. Appl. 491 (2020), no. 1, 124239, 39 pp.

\bibitem {Zuniga-JPA-208}Z\'{u}\~{n}iga-Galindo, W. A., Non-archimedean
replicator dynamics and Eigen's paradox. J. Phys. A 51 (2018), no. 50, 505601,
26 pp.

\bibitem {Zuniga-Nonlinearity}Z\'{u}\~{n}iga-Galindo, W. A. Non-Archimedean
reaction-ultradiffusion equations and complex hierarchic systems, Nonlinearity
31 (2018), no. 6, 2590--2616.

\bibitem {Zuniga-LNM-2016}Z\'{u}\~{n}iga-Galindo W. A., Pseudodifferential
equations over non-Archimedean spaces. Lectures Notes in Mathematics 2174,
Springer, 2016.
\end{thebibliography}
\end{document}